\definecolor{gray}{rgb}{0.5,0.5,0.5}
\theoremstyle{definition}
\newtheorem{definition}{Definition}[section]
\newcommand{\rev}[1]{\textcolor[rgb]{1,0,0}{#1}}
\theoremstyle{plain}
\newtheorem{lemma}{Lemma}[section]
\theoremstyle{plain}
\newtheorem{theorem}{Theorem}[section]
\newtheorem{example}{\textbf{Example}}
\newcommand{\eat}[1]{#1}
\newcommand{\eatYou}[1]{}
\def\BibTeX{{\rm B\kern-.05em{\sc i\kern-.025em b}\kern-.08em
    T\kern-.1667em\lower.7ex\hbox{E}\kern-.125emX}}
\begin{document}

\title{Towards Real-Time Counting Shortest Cycles on Dynamic Graphs: A Hub Labeling Approach\\
\thanks{You Peng is the joint first and corresponding author.}}
\makeatletter
\newcommand{\linebreakand}{%
  \end{@IEEEauthorhalign}
  \hfill\mbox{}\par
  \mbox{}\hfill\begin{@IEEEauthorhalign}
}
\makeatother
\author{{Qingshuai Feng$^{\dagger}$, You Peng$^{\dagger}$$^*$, Wenjie Zhang$^{\dagger}$, Ying Zhang$^{\S}$, Xuemin Lin$^{\dagger}$}

\vspace{1.6mm}\\
\fontsize{10}{10}
\selectfont\itshape
$^\dagger$The University of New South Wales; $^\S$University of Technology Sydney\\
\fontsize{9}{9} \selectfont\ttfamily\upshape
q.feng.1@student.unsw.edu.au;
unswpy@gmail.com;\\
\{zhangw,lxue\}@cse.unsw.edu.au;
ying.zhang@uts.edu.au\\}


\maketitle

\begin{abstract}

With the ever-increasing prevalence of graph data in a wide spectrum of applications, it becomes essential to analyze structural trends in dynamic graphs on a continual basis. The shortest cycle is a fundamental pattern in graph analytics. In this paper, we investigate the problem of shortest cycle counting for a given vertex in dynamic graphs in light of its applicability to problems such as fraud detection. To address such queries efficiently, we propose a 2-hop labeling based algorithm called \underline{C}ounting \underline{S}hortest \underline{C}ycle (CSC for short). Additionally, techniques for dynamically updating the CSC index are explored. Comprehensive experiments are conducted to demonstrate the efficiency and effectiveness of our method. In particular, CSC enables query evaluation in a few hundreds of microseconds for graphs with millions of edges, and improves query efficiency by two orders of magnitude when compared to the baseline solutions. Also, the update algorithm could efficiently cope with edge insertions (deletions).
\end{abstract}


\section{Introduction}
\label{sct:introduction}
With the rapid development of information technology, a growing number of applications represent data as graphs~\cite{peng2018efficient, jin2021fast, peng2021answering,yuan2022efficient,peng2022finding,chen2022answering}. The graph structure naturally encodes complex relationships among entities in real-world networks such as social networks, e-commerce networks, and electronic payments networks. Sophisticated analytics over such graphs provides insights into the underlying dataset and interactions~\cite{peng2020answering, lai2021pefp, peng2021efficient}. Such networks often include millions of edges and vertices. 
Additionally, the structural changes to these networks are constant in nature, which makes them extremely dynamic. At scale, it is imperative and challenging to support real-time analytics on rapidly changing structural patterns in dynamic graphs~\cite{peng2021dlq, qiu18}.

Cycles are fundamental to graph analytics. A cycle is a path with at least 3 vertices and returns to its starting vertex. A cycle captures the circular structural pattern from the starting vertex to itself and is an informative indicator for graph analytics. In light of this, cycle detection has been extensively investigated in the literature for both static and dynamic graphs~\cite{qiu18, weinblatt72}. The shortest cycle refers to the cycle in a graph with the shortest length (i.e., the minimum number of edges) and the length is also called {\em girth} of the graph. The shortest cycle is often employed in graph structure analysis, for example, when solving the graph coloring problem ~\cite{thomassen83,gimbel97}. 

\begin{figure}[t]
  \centering
  \includegraphics[scale=0.7]{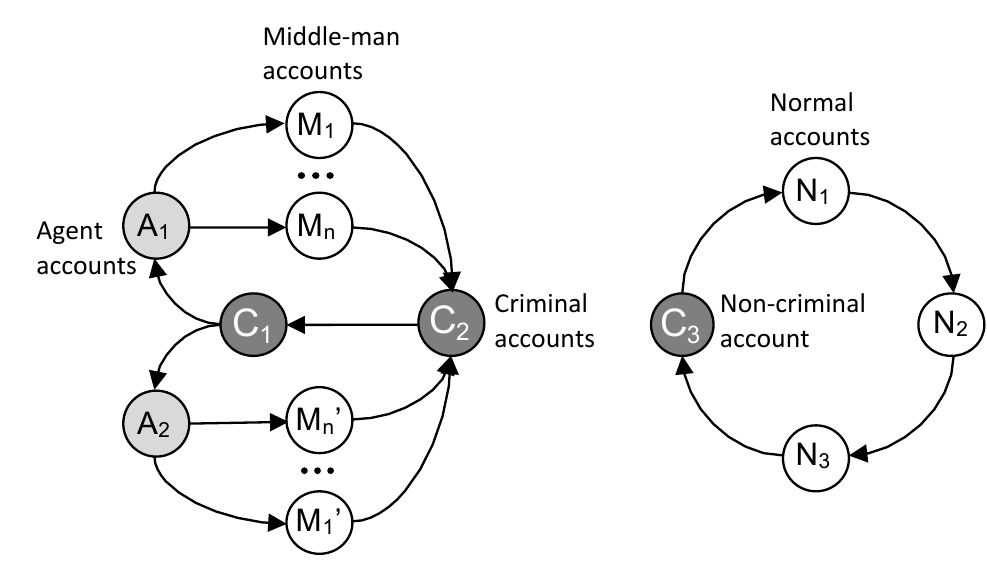}
  \vspace{-3mm}
  \caption{A Graph of a Money-Laundering Example.}
  \vspace{-3mm}
  \label{fig:application}
\end{figure}

The shortest cycles across a vertex $v$ denotes the cycles through $v$ with the shortest length \cite{weinblatt72,Yuster11}. For a particular vertex, the shortest cycles provide the quickest feedback routes originated at the vertex \cite{bonneau2017distribution}. The lengths of the shortest cycles have been studied to determine the frequencies of oscillations in neural circuits \cite{vladimirov2012shortest}. The distribution of shortest cycle lengths supports the structure analysis on chemical networks and biological networks \cite{gleiss2001relevant,klamt2009computing}, as well as feedback processes \cite{zanudo2017structure} and synchronization \cite{barrat2008dynamical} in various networks. 
While previous works mostly focus on computing the length of shortest cycles, in many real-world graphs, the diameter of the graphs tends to be small due to the small-world phenomenon. This leads to the fact that the shortest cycles of many vertices share the same length. For instance, Figure 1 depicts a transaction network involved in money laundering. The shortest cycle length through both vertex $C_1$ and vertex $C_3$ is 4. Nonetheless, there are significantly more shortest cycles via $C_1$. The number of shortest cycles via a vertex becomes a more informative metric for identifying suspicious users in transaction networks. Motivated by this, we investigate the problem of calculating shortest cycles for any given vertex in a graph.

\vspace{1mm}
{\bf Applications.}~Shortest cycle counting is required in a broad variety of applications. Two example applications are given below.


\vspace{1mm}
\textbf{Application 1: Fraud Detection.} \autoref{fig:application}~illustrates an example of money-laundering crime. Dark gray vertices $C_{1}$ and $C_{2}$ are both criminal accounts. Other vertices represent intermediary accounts (white nodes) or agents (light gray nodes) which assist the criminals. Directed edges depict forged transactions or other money laundering activities. The shortest cycles reflect criminals' most efficient ways for money laundering. As a result, a high number of such cycles travel through the criminal accounts. For example, a cycle, starting from and ending with $C_{1}$ via $A_{1}$ $M_{1}$ $C_{2}$, demonstrates such a route. The more such cycles that pass through a vertex, the more probable it is that the individual will engage in money-laundering activities. 
A pre-screening criterion for offenders might include a specified number of shortest cycles~\cite{qiu18,yue07}. Additionally, such data can also be utilized to detect fraudulent transactions on an e-commerce platform~\cite{qiu18,peng19}. 
\cite{vsubelj2011expert} develops an expert system for automobile insurance fraud detection. A collision graph is constructed by considering automobiles as vertices and edges between two vehicles as collisions. 
Fraudulent components share some structural characteristics, one of which is the occurrence of short cycles. \cite{bodaghi2018automobile} studies the similar problem. Their experiments demonstrate that detecting cycles is more effective in finding potential fraudulent groups than using other methods, such as leading eigenvector.  \cite{hajdu2020temporal} applies the cycle detection problem to a fraud detection system of financial institutions and demonstrates its effectiveness.

\vspace{1mm}
\textbf{Application 2: File Sharing Optimization.}
Consider another use case in peer-to-peer file-sharing networks (e.g., Gnutella)~\cite{schlosser03}. The vertices in the network represent hosts, while edges indicate the interactions, such as file request or transfer, between the hosts. A file-sharing cycle via a host signals the end of the file-sharing activity. For instance, host A sends a file request, which is propagated until it reaches host B. B is the location of the requested file and sends the file to host A. Generally, the most efficient routing is chosen with the minimum hop-count. The number of shortest cycles through host A along with the length might be used to determine the host's location. For instance, 
if host A has longer and more shortest cycles, a proxy server is probably required to decrease the transmission costs.
~\cite{ohta2004index} introduces an index-server optimization problem for P2P file sharing. They analyzed a flooding scheme and two index-server schemes 
for history and latest queries. In this problem, they need to set the index-server, while a machine with a high number of shortest cycles is preferred due to 1) these systems need to be failure-tolerant and 2) a needed file 
is easy to locate. The shortest cycles naturally satisfied these two requirements. 

In each of the aforementioned applications, the networks are extremely dynamic in nature.  For instance, in an e-commerce network, fraud accounts often initiate new transactions and transfer funds to others. New file requests and file transfers often occur in a file-sharing network. An update will be reflected in the graph as an edge insertion or deletion and will impact the query results for many vertices in the graph. In many scenarios, a static graph or a snapshot of a dynamic graph is insufficient, since continuous monitoring of shortest cycles numbers is needed. As a consequence, the dynamic issue is essential for shortest cycle counting. 

\vspace{1mm}
{\bf Challenges and Contributions.}
Numerous applications of the shortest cycle counting, particularly in online situations, demand real-time response to the query~\cite{qiu18,peng19}. Thus, it is critical to develop a real-time algorithm capable of counting the shortest cycles. A straightforward solution is to calculate the number of shortest cycles for each vertex in advance and record the values. Then, any query can be answered with $O(1)$ time complexity. Nevertheless, such a simple approach cannot handle dynamic graphs well since it requires to re-compute the shortest cycles for all vertices regarding graph updates. This is because of the lack of awareness for distance information among vertices within the graph. 
Hence, it is challenging to build an index that supports both real-time query answering and fast updates in the face of dynamic changes. 

\eatYou{In large-scale graphs, hub labeling is widely used to solve shortest path \cite{abraham11, abraham12, li17}, shortest distance \cite{jiang14,jin12,akiba13}, and reachability \cite{yano13} problems. It generates labels for each vertex in the graph in advance and evaluates queries using these labels. This method normally requires an offline index construction process and provides speedy query evaluation. Pruned landmark labeling \cite{akiba13} is the state-of-the-art technique for constructing hub labeling with full vertex ordering, in which a pre-defined vertex order determines whether a vertex will be accessed to update its label or can be securely pruned. Due to the fast evaluation, we propose a hub-pushing labeling technique, \underline{c}ounting \underline{s}hortest \underline{c}ycle (CSC for short), to build an index for shortest cycle counting. While hub labeling enables real-time query response, its dynamic maintenance is highly challenging. 
Existing hub labeling updating techniques place a premium on shortest path queries or reachability queries. Note that shortest cycle counting through a vertex v can be converted to shortest path counting between $v$ and its neighbours (See Section~\ref{sec:espc} for details). Nonetheless, maintaining counting queries is more challenging because even the shortest distance between two vertices remains constant, the counting information may still need to be updated. }

We address the aforementioned issues in this paper and offer the following contributions. 

\noindent
{\em (1) Hub Labeling algorithm (CSC) for Counting Shortest Cycles.} We devise a new and effective bipartite conversion-based technique for reshaping the initial graph and then constructing its hub labeling index CSC to enable real-time shortest cycle counting. Efficient pruning strategies are developed to expedite the index building process. An index merge mechanism is applied to aggregate related label entries and minimize the label size. Even if the bipartite conversion doubles the number of vertices in the reshaped graph, the new index remains a similar size compared with the baseline.

\noindent
{\em (2) Dynamic Maintenance of CSC.} We provide the first incremental (decremental) updating algorithm for an added (removed) edge in order to maintain our CSC index. Rather than rebuilding the whole index, only the portions impacted by the new edge are updated, substantially lowering the cost of maintenance.

\noindent
{\em (3) Comprehensive Experimental Evaluation.} We perform experiments on nine graphs to validate the effectiveness and efficiency of our algorithms. The experimental results demonstrate that CSC is comparable to the baseline in terms of index construction time and index size but is up to two orders of magnitude faster in terms of query processing. Also, the update algorithm could efficiently cope with edge insertions (deletions).


\vspace{1mm}
\noindent
{\bf Roadmap.} The rest of the paper is organized as follows. Section~\ref{sect:pre} introduces some preliminaries and Section~\ref{sec:baselines} introduces baselines. Our 2-hop based method is proposed in Section~\ref{sect:ours}. Section~\ref{sect:insert} investigates the index maintenance for edge insertion, followed by empirical studies in Section~\ref{sect:exp}. Section~\ref{sec:related} surveys important related work. Section~\ref{sect:conclusion} concludes the paper.

\section{Preliminaries}
\label{sect:pre}

\begin{table}[htb]
\centering
\caption{Notations}
\vspace{-2mm}
  \begin{tabular}{|l|l|}
    \hline
    \cellcolor{gray!25}\textbf{Notations} & \cellcolor{gray!25}\textbf{Description}\\
    \hline
     $G = (V, E)$ & \makecell[l] {directed graph with vertex set $V$ and edge \\ set $E$}\\
    \hline
    SP($s, t$) & all the shortest paths from $s$ to $t$\\
    \hline
    SPV($s, t$) & all the corresponding vertices in SP($s, t$)\\
    \hline
    SPCnt($s, t$) & the number of shortest paths from $s$ to $t$\\
    \hline
    SCCnt($v$) & the number of shortest cycles through $v$ \\
    \hline
    $G_{0}$ & the original graph\\
    \hline
    $G_{+} / G_{-}$ & the graph after an edge insertion/deletion\\
  \hline
\end{tabular}
\label{tab:notation}
\vspace{-4mm}
\end{table}

\subsection{Preliminaries}
\label{subsect:pre}
$G = (V, E)$ denotes a directed graph  where $V$ is the set of its vertices and $E \subseteq V \times V$ is the set of edges. $e(v,u)$ represents a directed edge connecting the vertex $v$ to vertex $u$. Let $n$ and $m$ represent the number of vertices and edges, respectively. The degree of vertex $v$ is defined as degree($v$), which is the sum of its in- and out-degrees. We use $nbr_{out}(v)$ to refer to its out neighbors or successors and $nbr_{in}(v)$ to refer to its in neighbors or ancestors. A path $p$ from vertex $s$ to vertex $t$ is a sequence of vertices in the form of $<s = v_{0}, v_{2},..., v_{k} = t>$ where for every  $i \in [1,k]$, $e(v_{i-1},v_{i})\in E$. $p(u,v)$ denotes a path from vertex $u$ to $v$. The length of a path $p$, indicated by $len(p)$, is the total number of its edges. A path is considered {\em simple} if it has no repeating vertices or edges. 
A path from vertex $s$ to vertex $t$ is said to be shortest if its length is equal to or less than the length of any other path from $s$ to $t$. The length of the shortest path from $s$ to $t$ is represented by $sd(s, t)$, and one such path is denoted by $sp(s, t)$. SP($s, t$) denotes the collection of all such shortest paths. SPV($s, t$) is the set of all vertices that correspond to SP($s, t$). Let SPCnt$(s, t)$ represent the number of shortest paths connecting $s$ and $t$, i.e., $|SP(s,t)|$. 

Similarly, we present the definitions for cycles. A {\em cycle} is a path $p$ with $v_0$ = $v_k$ and $len(p) \ge$ 3. A cycle is said to be {\em simple} if it has no repeats of vertices or edges other than the starting and ending vertex. The length of a cycle is equal to its number of edges. A cycle through a vertex $v$ is said to be shortest if its length is not larger than any other cycle through $v$. The number of shortest cycles through $v$ is denoted by SCCnt($v$).  

To tackle the dynamic graph, we examine only edge updates in the paper, since the insertion or deletion of a vertex can be represented by a series of edge insertions or deletions. We refer to the original graph as $G_{0}$, and use $G_{+}$ ($G_{-}$) to denote the graph after an edge insertion (deletion). \autoref{tab:notation} summarizes the key mathematical notations used throughout this paper.
\begin{figure*}[htb]
    \centering
    \begin{minipage}{.3\linewidth}
        \centering
        \includegraphics{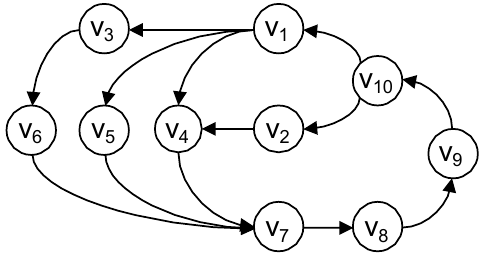}
        \caption{A Directed Graph.}
        \label{fig:ori_g}
    \end{minipage}%
    \begin{minipage}{.7\linewidth}
        \centering
        \captionof{table}{Shortest Path Counting Labels of Fig.~\ref{fig:ori_g}}
        \vspace{-2mm}
        \scalebox{0.8}{
          \begin{tabular}{|l|l|l|}
            \hline
            \cellcolor{gray!25}\textbf{Vertex} & \cellcolor{gray!25}\textbf{$L_{in}(\cdot)$} & \cellcolor{gray!25}\textbf{$L_{out}(\cdot)$}\\
            \hline
            $v_{1}$ & $(v_{1},0,1)$ & $(v_{1},0,1)$\\
            \hline
            $v_{2}$ & $(v_{1},6,2)$ $(v_{7},4,1)$ $(v_{10},1,1)$ $(v_{2},0,1)$ & $(v_{1},6,1)$ $(v_{7},2,1)$ $(v_{4},1,1)$ $(v_{2},0,1)$\\
            \hline
            $v_{3}$ & $(v_{1},1,1)$ $(v_{3},0,1)$ & $(v_{1},6,1)$ $(v_{7},2,1)$ $(v_{3},0,1)$\\
            \hline
            $v_{4}$ & $(v_{1},1,1)$ $(v_{7},5,1)$ $(v_{4},0,1)$ & $(v_{1},5,1)$ $(v_{7},1,1)$ $(v_{4},0,1)$\\
            \hline
            $v_{5}$ & $(v_{1},1,1)$ $(v_{5},0,1)$ & $(v_{1},5,1)$ $(v_{7},1,1)$ $(v_{5},0,1)$\\
            \hline
            $v_{6}$ & $(v_{1},2,1)$ $(v_{3},1,1)$ $(v_{6},0,1)$ & $(v_{1},5,1)$ $(v_{7},1,1)$ $(v_{6},0,1)$\\
            \hline
            $v_{7}$ & $(v_{1},2,2)$ $(v_{7},0,1)$ & $(v_{1},4,1)$ $(v_{7},0,1)$\\
            \hline
            $v_{8}$ & $(v_{1},3,2)$ $(v_{7},1,1)$ $(v_{8},0,1)$ & $(v_{1},3,1)$ $(v_{7},5,1)$ $(v_{4},4,1)$ $(v_{10},2,1)$ $(v_{8},0,1)$\\
            \hline
            $v_{9}$ & $(v_{1},4,2)$ $(v_{7},2,1)$ $(v_{8},1,1)$ $(v_{9},0,1)$ & $(v_{1},2,1)$ $(v_{7},4,1)$ $(v_{4},3,1)$ $(v_{10},1,1)$ $(v_{9},0,1)$\\
            \hline
            $v_{10}$ & $(v_{1},5,2)$ $(v_{7},3,1)$ $(v_{10},0,1)$ & $(v_{1},1,1)$ $(v_{7},3,1)$ $(v_{4},2,1)$ $(v_{10},0,1)$\\
            \hline
        \end{tabular}}
        \label{tab:labels}
    \end{minipage}
\end{figure*}

\vspace{1mm}
\noindent
{\bf Problem Statement.}
Given a graph $G$ and a query vertex $v$, the shortest cycle counting query, denoted by SCCnt($v$), returns the number of shortest cycles through vertex $v$. 

\begin{example}~There are three shortest cycles in Figure~\ref{fig:ori_g} with length $6$ through $v_{7}$.
Thus, SCCnt($v_{7}$)=3.
\end{example}
\subsection{Hub Labeling for Shortest Path Counting Queries}
Next, we present the hub labeling method for the shortest path counting between vertices $s$ and $t$, SPCnt($s,t$). 
In a directed graph, the shortest path counting between vertices $s$ and $t$ seeks to determine the total number of all the shortest paths from $s$ to $t$. \cite{zhang20} proposed a 2-hop labeling scheme and index construction algorithm to build the index efficiently and enabling real-time shortest path counting queries. 
The hub labeling scheme supports the cover constraint, \underline{E}xact \underline{S}hortest \underline{P}ath \underline{C}overing (ESPC), which implies that it not only encodes the shortest distance between two vertices but also ensures that such shortest paths are correctly counted. HP-SPC is their proposed algorithm to construct the SPC label index that satisfies ESPC.

Formally, given a directed graph $G$, HP-SPC assigns each vertex $v \in G$ an in-label $L_{in}(v)$ and an out-label $L_{out}(v)$, which are composed of entries of the form $(w, sd(v,w), \theta_{v,w})$. The shortest distance between $v$ and $w$ is denoted by $sd(v,w)$, and the number of shortest paths between $v$ and $w$ is denoted by $\theta_{v,w}$. If $w \in L_{in}(v)$ or $w \in L_{out}(v)$, we say that $w$ is a hub of $v$. 
Essentially, the in-label $L_{in}(v)$ keeps track of the distance and counting information from its hubs to itself, while the out-label $L_{out}(v)$ records distance and counting information from $v$ to its hubs. HP-SPC adheres to the cover constraint, which states that for any given starting vertex $s$ and ending vertex $t$, there exists a vertex $w \in L_{out}(s) \cap L_{in}(t)$ that lies on the shortest path from $s$ to $t$. 
Additionally, HP-SPC guarantees the correctness of shortest path counting by including the shortest path from $s$ to $t$ via a hub vertex once during the label construction. 
SPCnt($s,t$) is evaluated by scanning the $L_{out}(s)$ and $L_{in}(t)$ for the shortest distance via common hubs and adding the multiplication of the corresponding counting. Equation \eqref{con:spc1} identifies all common hubs (on shortest paths) from $L_{out}(s)$ and $L_{in}(t)$. Equation \eqref{con:spc2} determines the result of SPCnt($s,t$).

\begin{equation}
H = \{h | \mathop{\arg\min}\limits_{h \in L_{out}(s) \cap L_{in}(t)} \{sd(s,h) + sd(h,t)\}\} \label{con:spc1}
\end{equation}
\begin{equation}
SPCnt(s,t) = \sum_{h\in H} \theta(s,h) \cdot \theta(h,t) \label{con:spc2}
\end{equation}

\begin{example}
\autoref{fig:ori_g} depicts a directed graph with 10 vertices, and \autoref{tab:labels} provides its hub labeling index for SPCnt queries.  We use SPCnt($v_{10},v_{8}$) as an example to determine the shortest paths counting from $v_{10}$ to $v_{8}$. Two common hubs \{$v_{1}, v_{7}$\} are discovered by scanning $L_{out}(v_{10})$ and $L_{in}(v_{8})$. The shortest distance through $v_{1}$ is 1 + 3 = 4, while the counting is 1 $\cdot$ 2 = 2; The shortest distance via $v_{7}$ is 3 + 1 = 4, and the counting is 1 $\cdot$ 1 = 1. Thus, the number of shortest paths from $v_{10}$ to $v_{8}$ is 2 + 1 = 3 with length 4. 
\end{example}





\section{Baselines}
\label{sec:baselines}
In this section, we introduce two baseline solutions for the shortest cycle counting problem in this paper.

\subsection{HP-SPC for SCCnt by Neighborhood Information}
\label{sec:espc}
By selecting the identical beginning and ending vertex, an obvious solution for shortest cycle counting is to utilize the existing hub-labeling techniques HP-SPC 
for shortest path counting. To compute  SCCnt($v_q$) through query vertex $v_q$, we resort to HP-SPC 
based hub labeling described in \cite{zhang20} by specifying both starting and ending vertex to be $v_q$, i.e., SPCnt($v_q, v_q$). Nevertheless, this will lead to 
erroneous query results since the shortest distance is always 0, as determined by identifying a self-loop of $v_q$ in \eqref{con:spc1}. For instance, if we aim to count the number of shortest cycles via $v_1$ in \autoref{fig:ori_g}, SPCnt($v_1, v_1$) will search up $L_{in} (v_1)$ and $L_{out} (v_1)$ of the hub labeling in \autoref{tab:labels}, and return 0. 

To remedy this, we convert the shortest cycle counting  query to the shortest path counting query between query vertex $v_q$ and the in-neighbors (or out-neighbors) of $v_q$. Formally, to compute SCCnt($v_q$), we compute the SPCnt from $v_q$  to each of the in-neighbor $w$ of $v_q$, or from each out-neighbor to $v_q$. To reduce query costs, we choose out-neighbors of $v_q$ when $|nbr_{out}(v_q)| < |nbr_{in}(v_q)|$, and in-neighbors of $v_q$ otherwise.  
Equation \eqref{con:scc1} and \eqref{con:scc2} demonstrate the evaluation of SCCnt($v_q$) when $|nbr_{out}(v_q)| < |nbr_{in}(v_q)|$. The first step is to get all vertices with the shortest distance to $v_q$ using Equation \eqref{con:scc1}. Clearly, the length of the shortest cycles via $v_q$ is increasing the shortest distance in \eqref{con:scc1} by 1. The number of shortest cycles passing through $v_q$ is derived by summing up the number of shortest paths between $v_q$ and each of the vertices acquired in the first step. Notably, if there is no path from $v_q$ to its neighbors, Equation \eqref{con:scc1} returns an empty set, indicating that there is no cycle via $v_q$. 
\begin{equation}
W = \{ w | \mathop{\text{argmin}}_{w \in nbr_{out}(v_q)} sd(w,v_q) \} 
\label{con:scc1}
\end{equation}

\begin{equation}
SCCnt(v_q) = \sum_{w\in W} SPCnt(w,v_q) \label{con:scc2}
\end{equation}

\begin{example}
Consider the example of SCCnt($v_{7}$). $v_{7}$ has three in-neighbors $\{v_{4}, v_{5}, v_{6}\}$. According to the index in Table II, SPCnt($v_{7}, v_{4}$) = 2 and $sd(v_{7}, v_{4})$ = 5; SPCnt($v_{7}, v_{5}$) = 1 and $sd(v_{7}, v_{5})$ = 5; SPCnt($v_{7}, v_{6}$) = 1 and $sd(v_{7}, v_{6})$ = 6. Thus, the shortest cycles via $v_{7}$ passing through $v_{4}$ and $v_{5}$. SCCnt($v_{7}$) is 2 + 1 = 3. 
\end{example}
\eatYou{Clearly, the complexity of the SPC-based solution is proportional to the number of $v_q$'s neighbors. SCCnt($v_q$) may still be prohibitively expensive for vertices with an excessive number of neighbors. }


\subsection{Breadth-First Search (BFS) Based Shortest Cycle Counting } 
The second baseline solution applies the Breadth-First Search (BFS) from the query vertex $v_q$. In Algorithm 1, we record the shortest distance from $v_{q}$ ($D[\cdot]$) and the counting of the shortest paths ($C[\cdot]$) for each accessed vertex throughout the BFS. Lines 11-16 compare the recorded distance of accessed vertex $w_n$ and the derived distance from its ancestor $w$. Particularly, if the recorded distance is larger, $D[w_n]$ will be adjusted to reflect $D[w]$ and will inherit the shortest path counting from $w$ (lines 12-14). If the recorded distance equals the distance from $w$, only the shortest path counting will be updated (lines 15-16). Once $v_q$ is popped out from the queue, all of its shortest cycles are encountered. Otherwise, if $v_q$ is never visited until the queue is empty, then it indicates that no cycle through $v_q$ occurs.  
Clearly, the time complexity of BFS based solution is $O(m+n)$ and the space complexity is also $O(m+n)$.

\eat{
\begin{algorithm}
\setstretch{0.95}
    \For{{\bf each} $v \in V$}{
        $D[v] \leftarrow \infty$; $C[v] \leftarrow 0$\;
    }
    $Q \leftarrow  \emptyset$;\\
    \For{{\bf each} $u \in nbr_{out}(v_q)$}{
        $D[u] \leftarrow 1$; $C[u] \leftarrow 1$\;
        $Q.{\rm enqueue}(u)$\;
    }
    \While{$Q$ {\rm is not empty}}{
         $w \leftarrow Q.{\rm dequeue}()$\;
         \If{$w = v_{q}$}{{\bf return} $(D[v_{q}],C[v_{q}])$\;}
         \For{{\bf each} $w_{n} \in nbr_{out}(w)$}{
            \If{$D[w_{n}] > D[w] + 1$}{
                $D[w_{n}] \leftarrow D[w] + 1$; $C[w_{n}] \leftarrow C[w]$\;
                $Q.{\rm enqueue}(w_{n})$\;
            } \ElseIf{$D[w_{n}] = D[w] + 1$}{
                $C[w_{n}] \leftarrow C[w_{n}] + C[w]$\;
            }
        }
    }
    {\bf return} $(\infty,0)$\;
\caption{BFS-CYCLE($G, v_{q}$)}
\label{alg:BFS}
\end{algorithm}
}

\section{Bipartite Hub Labeling for Shortest Cycle Counting}
\label{sect:ours}
This section introduces our bipartite conversion based hub labeling scheme, then describes labeling computation and processing shortest cycle counting based on the labeling. 
\subsection{Bipartite Hub Labeling Scheme}
We seek to develop a hub labeling scheme for shortest cycle counting by effectively implementing hub labeling for shortest path counting. To this end, the labeling must adhere to the cover constraint of hub labeling for shortest path counting. The following constraint is imposed in \cite{zhang20} to ensure correctly maintaining the counting of shortest paths. 

\vspace{1mm}
{\bf{Cover Constraint.}}
Given a total ordering $\prec$ over all the vertices in graph $G$ where for any two vertices ($v$,$w$), $v \prec w$ indicates that $v$ has a rank higher than $w$. A label entry $(v,d,c)$ is added to the in-label of vertex $w$ $L_{in}(w)$ if there exists at least one shortest path $sp(v,w)$ between $v$ and $w$, where $v$ has the highest rank along $sp(v,w)$. Here, $d$ represents the shortest distance $sd(v,w)$, and $c$ is the number of all such shortest paths. Notably, if $c = |$SP$(v,w)|$, namely, $c$ is the number of all shortest paths between $v$ and $w$, $(v,d,c)$ is referred to as a {\em \bf{canonical label}} of $L_{in}(w)$. Otherwise, if $c < |$SP$(v,w)|$, namely only a proper subset of $SP(v, w)$ is counted, $(v,d,c)$ it is a {\em \bf{non-canonical label}}. 

\begin{example}
In \autoref{fig:ori_g}, assume the ordering of vertices follows the degree order: $v_{1}\prec v_{7}\prec v_{4}\prec v_{10}\prec v_{2}\prec v_{3}\prec v_{5}\prec v_{6}\prec v_{8}\prec v_{9}$. There are two shortest paths in the reverse graph (a reverse graph is obtained by reversing the orientations of all edges in the graph) from $v_{4}$ to $v_{10}$ with length $2$. One of the paths $v_4, v_1, v_{10}$ passes through $v_{1}$ which has a higher rank than $v_{4}$. Thus, the count $c$ is $1$ rather than $2$ for $(v_{4},2,c)$ in $L_{out}(v_{10})$. This label entry is an example of a non-canonical one. 
\end{example}

\eatYou{
The out-labels $L_{out}$ for each vertex in $G$ satisfy the same cover constraint and can be derived by examining the reverse graph of $G$. While canonical labeling is adequate for shortest distance computation, the supplement of non-canonical labels is essential for shortest distance counting. This is because such a cover constraint guarantees the exact shortest path counting as each shortest path is counted precisely once.}
 
\subsection{Bipartite Conversion}
It is insufficient to provide shortest cycle counting just by following the cover constraint of shortest path counting. As shown in Section \ref{sec:espc}, current hub labeling techniques are incapable of supporting shortest cycle counting.
We offer a new bipartite conversion approach to circumvent this limitation. 
Given a directed graph $G$, the following procedure is used to build its bipartite graph $G_b$ conversion.

Algorithm~\ref{alg:BI-G} illustrates the conversion. From line~\ref{line:bi_start} to~\ref{line:bi_end}, each vertex $v \in G$ is decomposed into a pair of {\em couple vertices} ($v^{i}$,$v^{0}$) where $v^{i}$  represents the incoming vertex of $v$, and $v^{o}$ denotes the outgoing vertex of $v$, along with an edge $e(v^{i}, v^{o})$. $V_{in}$ ($V_{out}$) denotes the set of all incoming (outgoing) vertices. The edge $e(v,w)$ from the original graph $G$ is replaced by an edge $e(v^{o},w^{i})$ in the new graph $G_{b}$. $G_b$ is a bipartite graph with two vertex sets $V_{in}$ and $V_{out}$. $G_{b}$ has a vertex and edge count of $2\cdot n$ and $n + m$, respectively. The vertices in $V_{in}$ contain all the in-edges of $G$, and the vertices in $V_{out}$ contain all the out-edges. \autoref{fig:bi_g} illustrates the bipartite conversion of the graph in Figure 2. 

\eatYou{
\vspace{1mm}
{\bf Vertex Ordering. }
Vertex ordering is extensively researched in the literature for hub labeling, where degree-based ordering is frequently adopted and results in the state-of-the-art canonical hub labeling for shortest distance queries \cite{akiba13, zhang20}. In this paper, we rank vertices based on the multiplication of in-degree and out-degree of vertices. Each pair of the couple vertices $v^i, v^o$ in the bipartite graph $G_b$ is given consecutive ranks, with $v^{i}$ always having a higher rank than $v^{o}$. In comparison to ranking the couple vertices individually, this approach enables time efficiency for hub labeling index construction and space efficiency of index size. Since there is no gap between the order of couple vertices $v^i$ and $v^o$ and just one edge between them, a vertex that is a hub of $v^i$ ($v^o$) will likewise be a hub for $v^o$ ($v^i$). This property allows the following optimization for hub labeling construction.}

\begin{figure}
  \centering
  \includegraphics[scale=0.9]{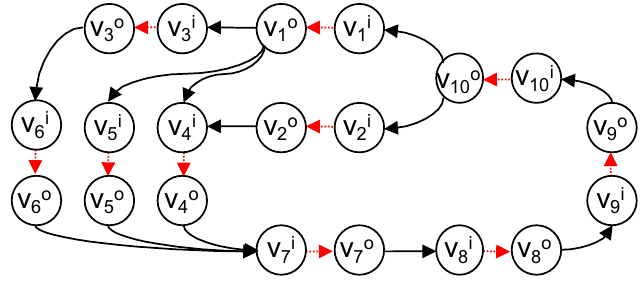}
  \vspace{-3mm}
  \caption{Bipartite Conversion of \autoref{fig:ori_g}}
  \vspace{-3mm}
  \label{fig:bi_g}
\end{figure}

\begin{algorithm}
\setstretch{0.95}
    $V_{in} \leftarrow \emptyset; V_{out} \leftarrow \emptyset; E_{b} \leftarrow \emptyset$\;
    \For{{\bf{each}} $v \in V$}{
    \label{line:bi_start}
        $\{v^{i}, v^{o}\} \leftarrow v$\;
        $V_{in} \leftarrow V_{in} \cup \{v^{i}\}$\; 
        $V_{out} \leftarrow V_{out} \cup \{v^{o}\}$\;
        $E_{b} \leftarrow E_{b} \cup \{e(v^{i}, v^{o})\}$\;
        \label{line:bi_end}
    }
    \For{{\bf{each}} $e(v,w) \in E$}{
        $E_{b} \leftarrow E_{b} \cup e(v^{o}, w^{i})$\;
    }
    {\bf{return}} $G_{b}=(V_{in}, V_{out}, E_{b})$\;
\caption{\textsc{Bi}-G($G$)}
\label{alg:BI-G}
\end{algorithm}

\vspace{1mm}
{\bf Couple-Vertex Skipping}.
The consecutive order of each pair of couple vertices allows the correct hub labeling by avoiding some computation associated with index construction. According to the cover constraint, only higher-ranked vertices are eligible to be the hub of lower-ranked ones. Thus, $(v^{i}, 1, 1)$ must be a label entry of $L_{in}(v^{o})$. Any vertex that is qualified to serve as an in-label hub for $v^{i}$ should also be an in-label hub of $v^{o}$. When an in-label hub is identified for $v^{i}$, we can simply increase the distance by 1 and add the label to $v^{o}$. As a result, we can exclude $v^{o}$ from the in-label construction. Regarding out-labels, the graph is traversed in a reverse direction. Similarly, if an out-label hub is identified for $v^{o}$, it will be immediately added to $v^{i}$, obviating the need for $v^i$'s out-label construction.  
Additionally, if both coupled vertices $v^i$ and $v^o$ serve as hubs for a vertex $w$, it is acceptable to retain just the label entry of the higher-ranked hub $v^i$, since $v^o$ contains redundant information from the pair.

\subsection{Bipartite Hub Labeling Construction}
The index construction algorithm for in-label generation is illustrated in Algorithm \ref{alg:SCC}, which employs the couple-vertex skipping technique. The algorithm begins with the vertex $v$ and pushes it to the vertices that share $v$ as a hub. It explores from higher to lower rank in accordance with the vertex ordering.

\begin{algorithm}[htb]
\setstretch{0.95}
    \For{{\bf each} $v \in V_{in}\cup V_{out}$}{
        $L^{c}_{in}(v) \leftarrow \emptyset$; $L^{nc}_{in}(v) \leftarrow \emptyset$\;
        $L^{c}_{out}(v) \leftarrow \emptyset$; $L^{nc}_{out}(v) \leftarrow \emptyset$\;
        $D[v] \leftarrow \infty$; $C[v] \leftarrow 0$\;
    }
    \For{{\bf each} $v \in V_{in}\cup V_{out}$ {\rm in descending order}}{
        \If{$v \in V_{out}$}{
            append ($v,0,1$) to $L^{c}_{in}(v)$; append ($v,0,1$) to $L^{c}_{out}(v)$\;
            {\bf continue}\;
        }
        Queue $Q \leftarrow $ $\emptyset$; $Q.{\rm enqueue}(v)$\;
        $D[v] \leftarrow 0$; $C[v] \leftarrow 1$; $V_{is} \leftarrow \{v\}$\;
        \While{$Q$ {\rm $\neq \emptyset$}}{
            $w \leftarrow Q.{\rm dequeue}()$\;
            $d \leftarrow min_{u \in L^{c}_{out}(v) \cap L^{c}_{in}(w)}sd(v,u)+sd(u,w)$\;
            \If{$d < D[w]$}{
                {\bf continue}\;
            }
            \textsc{Insert}LABEL($v,d,w,D[w],C[w]$)\;
                $w' \leftarrow $ couple of $w$; $V_{is} \leftarrow V_{is} \cup \{w'\}$\;
                $D[w'] \leftarrow D[w] + 1$; $C[w'] \leftarrow C[w]$\;
            \For{{\bf each} $w_{n} \in nbr_{out}(w')$}{
                \If{$D[w_{n}] = \infty \land v \prec w_{n}$}{
                    $D[w_{n}] \leftarrow D[w'] + 1$; $C[w_{n}] \leftarrow C[w']$\;
                    $Q.{\rm enqueue}(w_{n})$; $V_{is} \leftarrow V_{is} \cup \{w_{n}\}$\;
                } \ElseIf{$D[w_{n}] = D[w'] + 1$}{
                    $C[w_{n}] \leftarrow C[w_{n}] + C[w']$\;
                }
            }
        }
        \For{{\bf each} $v \in V_{is}$}{
            $D[v] \leftarrow \infty$; $C[v] \leftarrow 0$\;
        }
        
        {/* Out-labels Generation */}
    }
    \For{{\bf each} $v \in V_{in}\cup V_{out}$}{
        $L_{in}(v) \leftarrow L^{c}_{in}(v) \cup L^{nc}_{in}(v)$\;
        $L_{out}(v) \leftarrow L^{c}_{out}(v) \cup L^{nc}_{out}(v)$\;
    }
\caption{CSC($G_{b},\bar{G_{b}}$)}
\label{alg:SCC}
\vspace{-1mm}
\end{algorithm}

\begin{figure*}[htb]
	\begin{center}
		\subfigure[]{
			\label{in_1}
			\centering
			\includegraphics[scale=0.22]{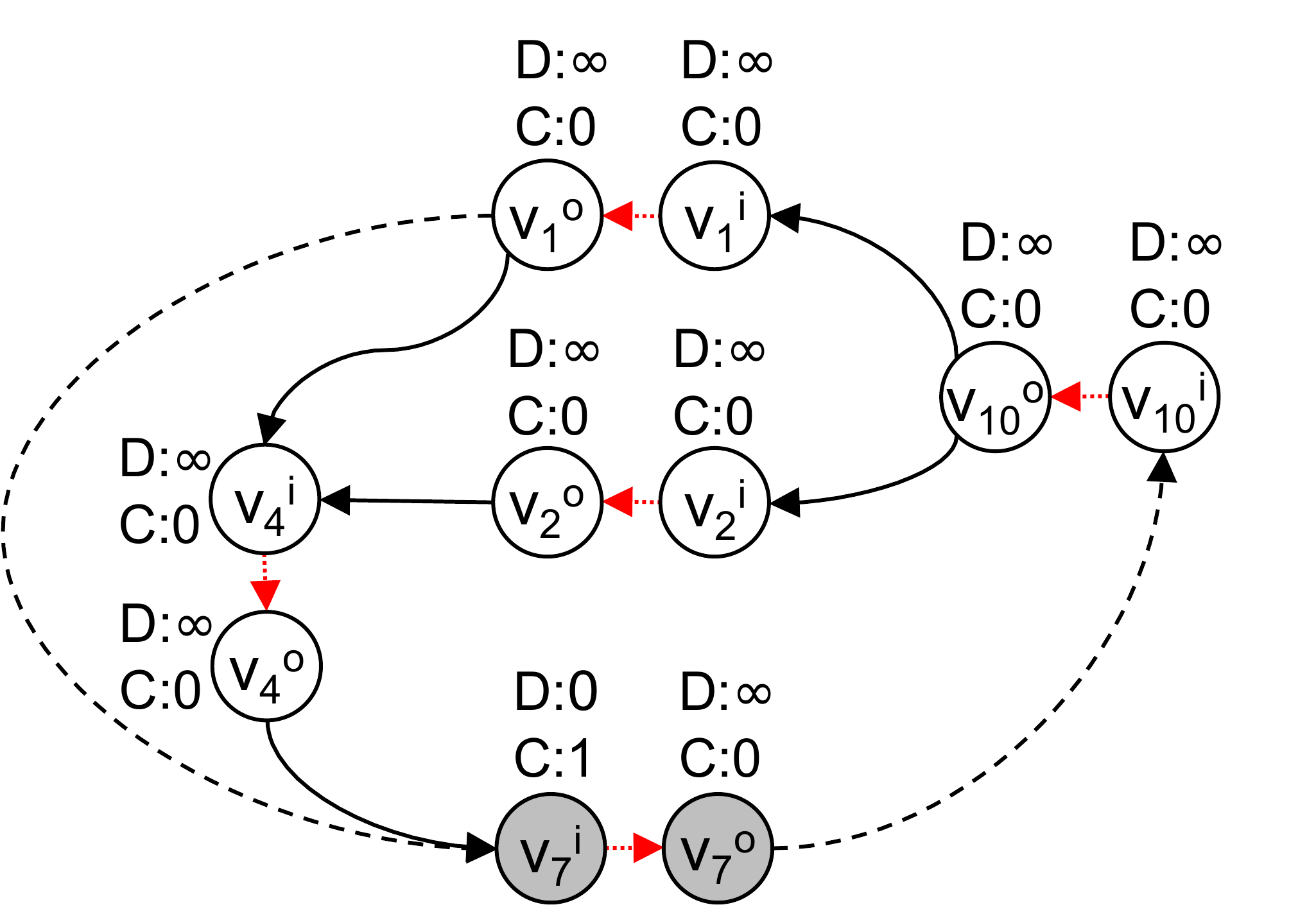} 
		}\hspace{2mm}
		\subfigure[]{
			\label{in_2}
			\centering
			\includegraphics[scale=0.22]{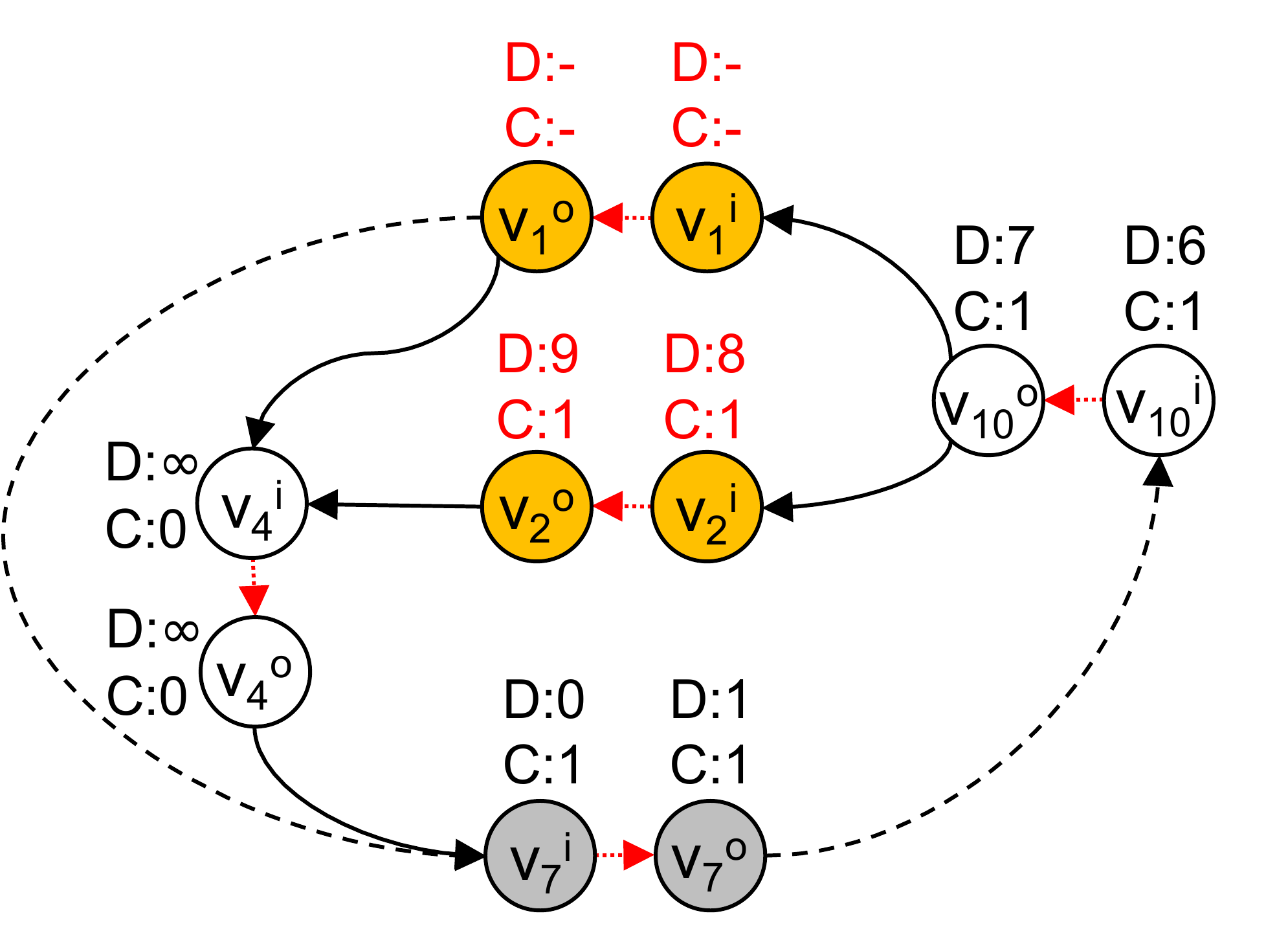}     
		}\hspace{2mm}
		\subfigure[]{
			\label{in_3}
			\centering
			\includegraphics[scale=0.22]{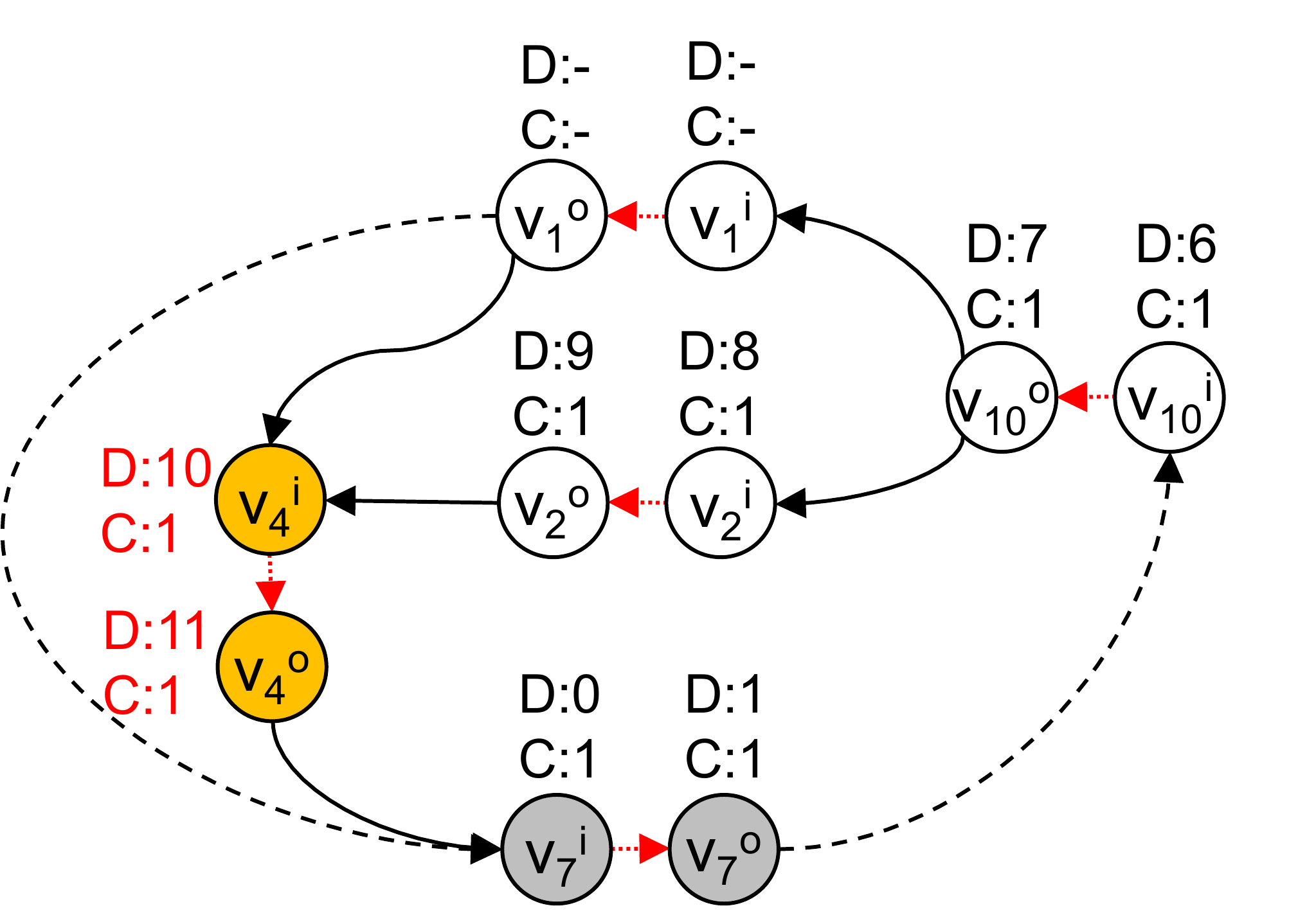}     
		}\hspace{2mm}
	\end{center}
	\vspace{-5mm}
	\caption{In-Labels Construction Example (Hub ${v_{7}}^{i}$).} \label{fig:ex_index_in}
	\vspace{-5mm}
\end{figure*}

\eat{\begin{figure*}[htb]
	\begin{center}
		\subfigure[]{
			\label{out_1}
			\centering
			\includegraphics[scale=0.22]{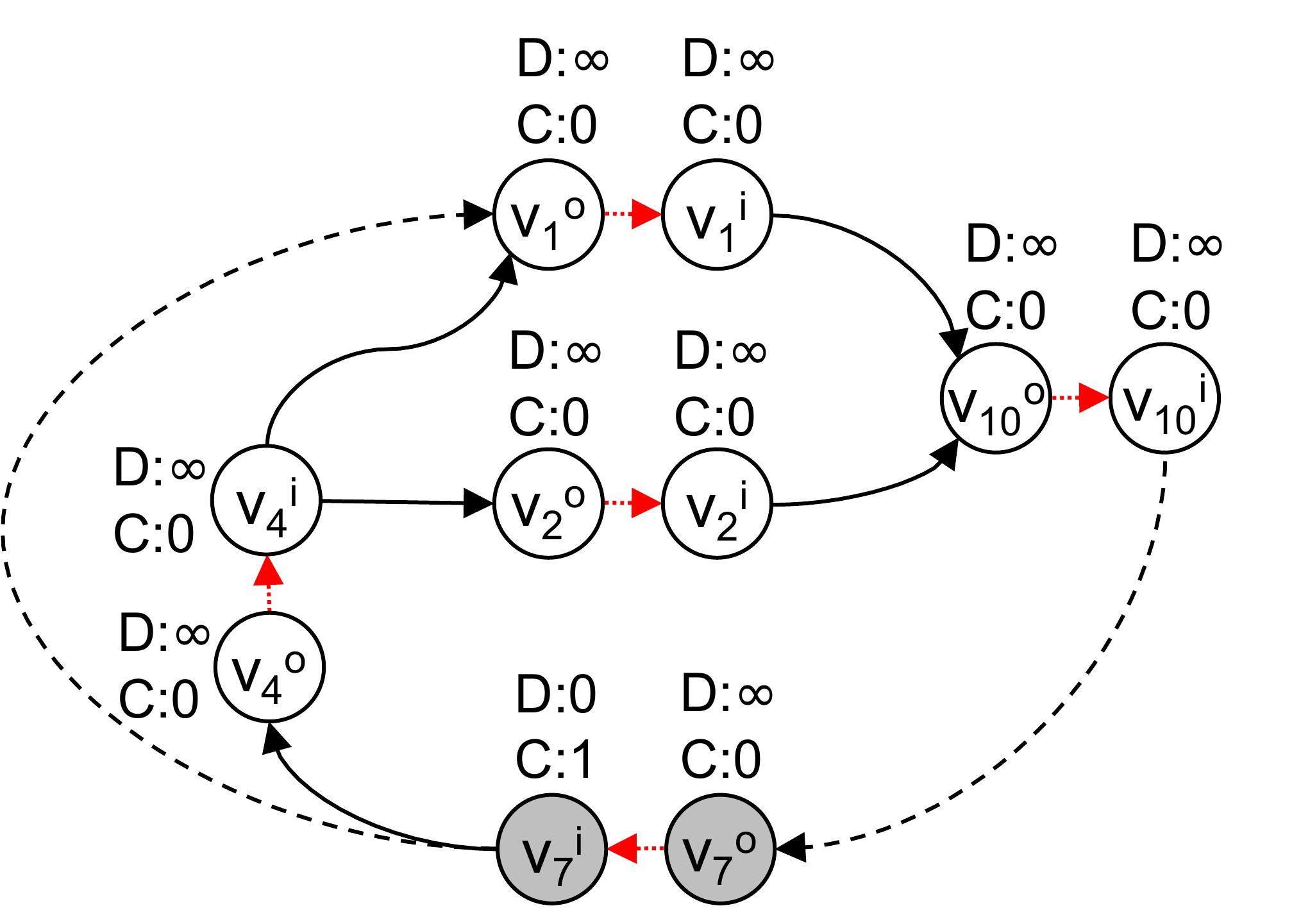}
		}\hspace{2mm}
		\subfigure[]{
			\label{out_2}
			\centering
			\includegraphics[scale=0.22]{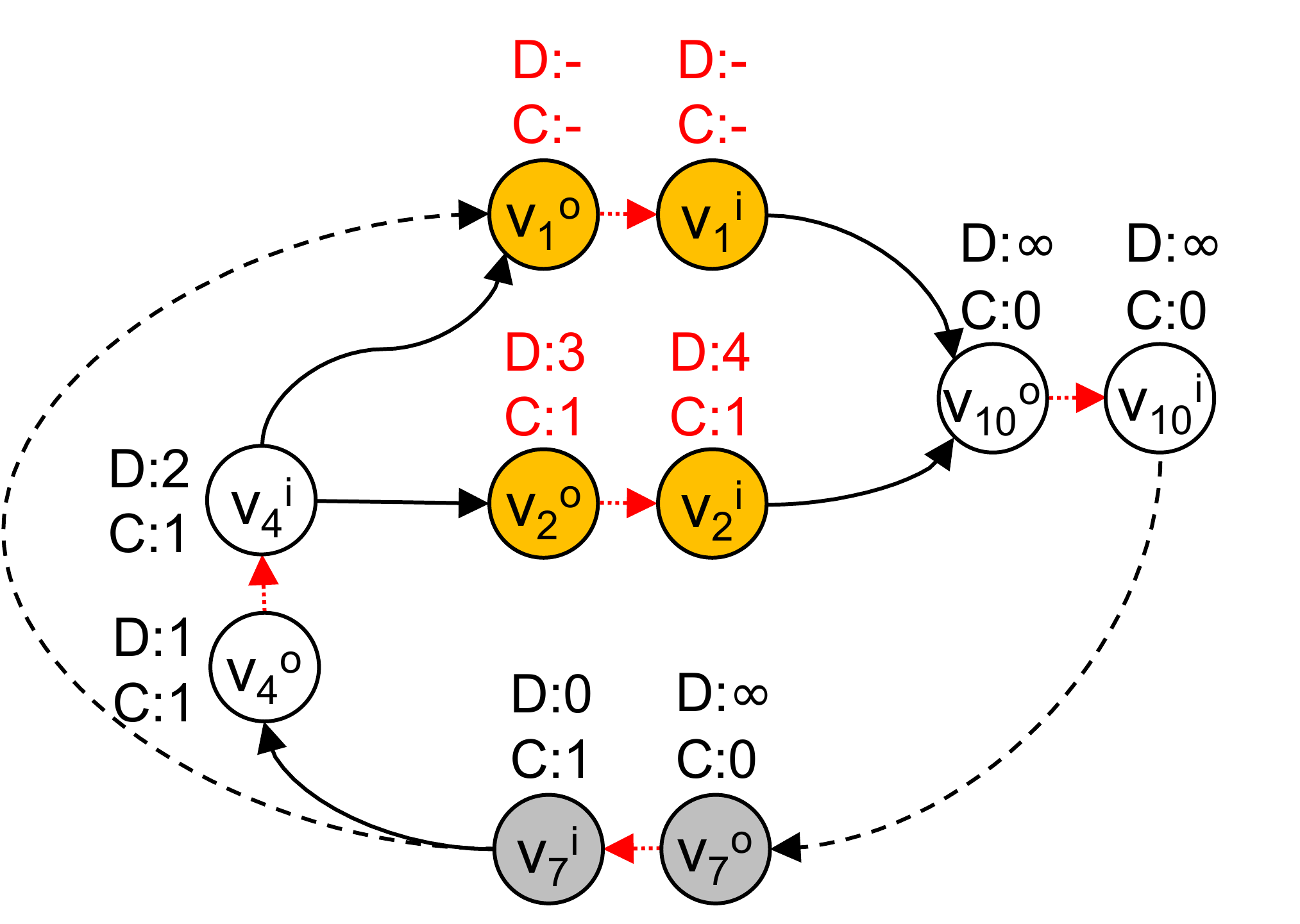}
		}\hspace{2mm}
		\subfigure[]{
			\label{out_3}
			\centering
			\includegraphics[scale=0.22]{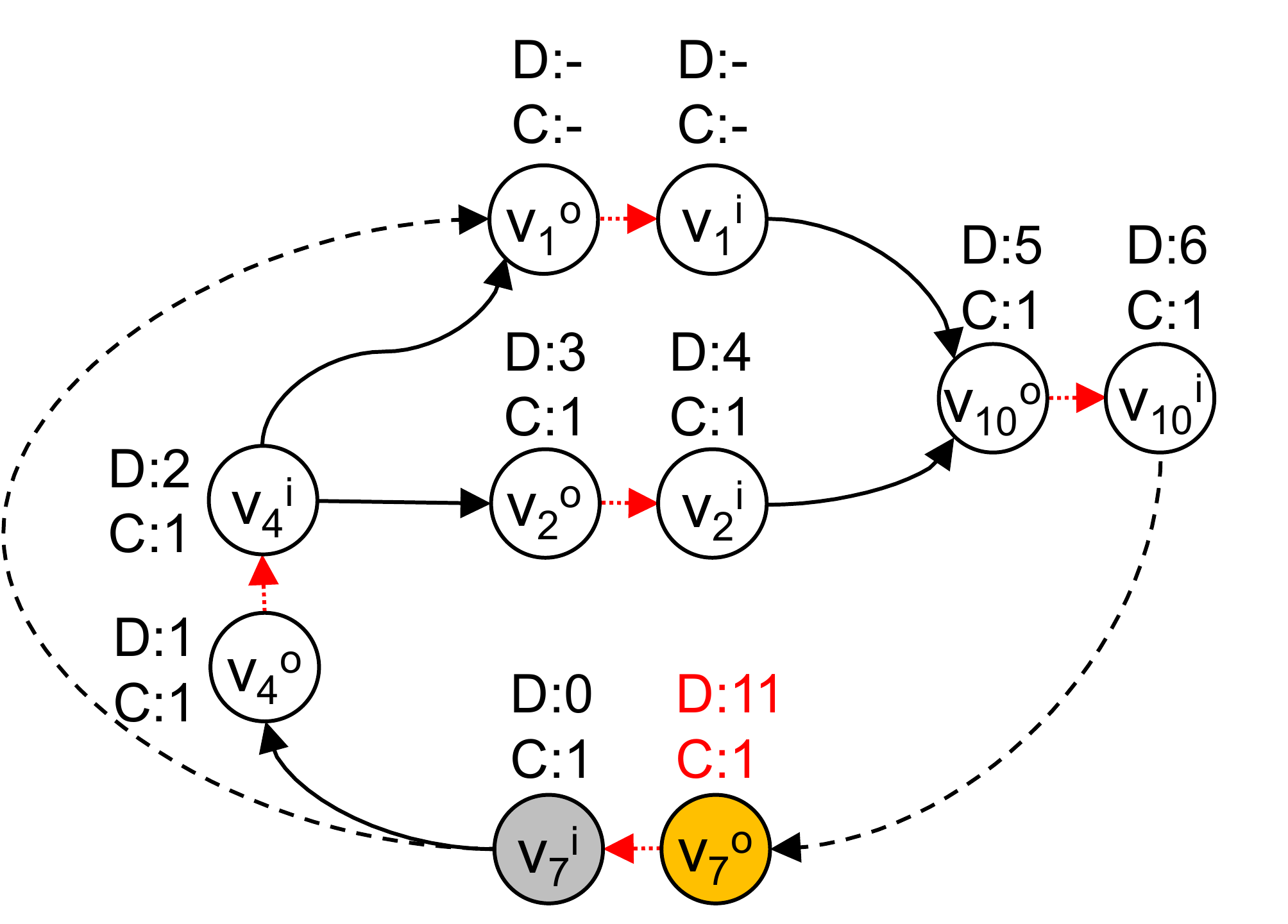} 
		}\hspace{2mm}
	\end{center}
	\vspace{-5mm}
	\caption{Out-Labels Construction Example (Hub ${v_{7}}^{i}$).} \label{fig:ex_index_out}
	\vspace{-5mm}
\end{figure*}}

\vspace{1mm}
{\bf Initialization. } To begin, the canonical and non-canonical in(out)-label sets $(L^{c}_{in}(\cdot), L^{nc}_{in}(\cdot),L^{c}_{out}(\cdot),L^{nc}_{out}(\cdot))$ for each vertex are initialized (lines 2-3). Arrays $D[\cdot]$ and $C[\cdot]$ store tentative distances and counting of such distances (line 4). Then, for each vertex $v$ in descending order of rank, a breadth-first search is performed to identify the vertices $w$ that have $v$ as a hub in $L_{in}(w)$ or $L_{out}(w)$. If $v$ comes from $V_{out}$, we can safely skip the current loop and add the label entries for itself (lines 6-8). If $v$ is from $V_{in}$, the label construction process begins with a queue $Q$ containing the current hub. $Vis$ maintains the set of visited vertices.



{\bf Label Generation.} The following procedure generates all label entries for the current hub $v$ (lines 12-16). From $v$, we visit vertex $w$ in a BFS fashion. We calculate the distance from $v$ to $w$ using the hub labels prior to $v$. This is the shortest distance $d$ in line 13 from $v$ to $w$ through a higher-ranked vertex. If $d$ is less than the tentative distance $D[w]$, the BFS is superfluous, since $v$ is not the highest-ranked vertex along any shortest path from $v$ to $w$ (lines 14-15). Otherwise, a new label of $w$ with hub $v$ is generated by calling Algorithm \ref{alg:IST-SCCnt}. If $d$ equals to the tentative distance $D$, then $v$ has the highest rank along \textbf{some} of the shortest paths from $v$ to $w$. A non-canonical label is generated in this scenario (lines 2-3). According to the Couple-Vertex Skipping, $w'$, the couple of $w$, also has a new non-canonical label with the distance of $D+1$ (line 4). If $d > D$, then $v$ has the highest rank along \textbf{all} the shortest paths from $v$ to $w$. In this case, a canonical label is generated for both $w$ and its couple $w'$ (lines 5-7).

\begin{algorithm}[htb]
        $w' \leftarrow $ couple of $w$\;
    \If{$d = D$}{
        append $(v,D,C)$ to $L^{nc}_{in}(w)$\;
            append $(v,D + 1,C)$ to $L^{nc}_{in}(w')$\;
    } \ElseIf{$d > D$}{
        append $(v,D,C)$ to $L^{c}_{in}(w)$\;
            append $(v,D + 1,C)$ to $L^{c}_{in}(w')$\;
    }
\caption{\textsc{Insert}LABEL($v,d,w,D,C$)}
\label{alg:IST-SCCnt}
\vspace{-1mm}
\end{algorithm}


{\bf Update Tentative Distance and Counting.}~$w \in V_{in}$ is constantly established during the in-label generation process. Due to the fact that the first vertex in $Q$ is from $V_{in}$, then we have its couple $w'$ in $V_{out}$ (line 17). By using the couple-vertex skipping technique, $w$ is skipped. We continue to check each neighbor $w_{n}$ from $nbr_{out}(w')$, where $w_{n}$ must be from $V_{in}$ (lines 19-24). If $w_{n}$ is not visited and has a lower rank than the current hub $v$, its tentative distance and counting are updated. After that, it is pushed into $Q$ (lines 20-22). Thus, only those vertices from $V_{in}$ are eligible to be included in $Q$. Otherwise, if the next-step distance to $w_{n}$ is identical to its tentative distance, the counting is accumulated due to the new shortest paths (lines 23-24). Finally, the label of each vertex is the concentration of the canonical and non-canonical labels (lines 28-30).

The process for out-label generation is similar to the procedure described in lines 9-26 but in a reverse direction. The primary distinctions are as follows: (1) The distance calculated using an existing index should be the distance from $w$ to $v$ (line 13); (2) Replace $nbr_{out}$ with $nbr_{in}$ (line 19); (3) The first time $w$ popped from $Q$ is $v$ (line 12), only ($v,0,1$) should be inserted into $L^{c}_{out}(v)$ in this loop. Then skip lines 17-18 and verify each of $v$'s in-neighbor. From the second loop, $w$ should always be from $V_{out}$; (4) In Algorithm~\ref{alg:IST-SCCnt}, label entries should be inserted into out-labels. If $w$ is the couple of $v$, only insert label entry into $L_{out}(w)$ and prune at this point. The Algorithm CSC takes the graph $G_{b}$ and reverse graph $\bar G_{b}$ and generates the in-label $L_{in}(\cdot)$ and out-label $L_{out}(\cdot)$ set for each vertex $v$.

\begin{example}
\label{case:in}
\autoref{fig:ex_index_in} shows three snapshots during in-label construction for hub ${v_{7}}^{i}$. These snapshots that correspond to \autoref{fig:bi_g}, except that $v_3$, $v_5$, $v_6$, $v_8$ and $v_9$ are omitted for ease. The initial stages are shown in \autoref{fig:ex_index_in}(a). Except for ${v_{7}}^{i}$'s counting, all tentative distance and counting are set to $\infty$ and $0$, respectively. After reaching ${v_{10}}^{o}$ with distance 7 and counting 1, ${v_{1}}^{i}$ and ${v_{2}}^{i}$ will be processed the next round in \autoref{fig:ex_index_in}(b). Due to the fact that ${v_{1}}^{i} \prec {v_{7}}^{i}$, the BFS at $v_1$ is trimmed, and the algorithm proceeds to ${v_{2}}^{i}$. All the in-label entries with hub ${v_{7}}^{i}$ are canonical prior to ${v_{4}}^{i}$. In \autoref{fig:ex_index_in}(c), we can see that when we reach ${v_{4}}^{i}$, $sd({v_{7}}^{i}, {v_{4}}^{i})$ is 10 via hub ${v_{1}}^{i}$ which is the current tentative distance. Thus, $({v_{7}}^{i},10,1)$ is inserted into $L^{nc}_{in}({v_{4}}^{i})$ as a non-canonical label entry.
\end{example}

As with Example~\ref{case:in}, \autoref{fig:ex_index_out} illustrates the generation of out-labels with hub ${v_{7}}^{i}$.


\subsection{Query Evaluation}
Counting the shortest cycles through $v$ in the original graph $G_{0}$ corresponds to counting the shortest paths from $v^{o}$ to $v^{i}$ in the transformed bipartite graph $G_{b}$. SCCnt($v$) can be assessed using the index constructed above as SPCnt($v^{o}, v^{i}$). 
Note that the distance $d$ in the result represents the shortest distance from $v^{o}$ to $v^{i}$ in $G_{b}$. Due to the presence of an edge $e(v^{i},v^{o})$ and the doubling of all distances in $G_{b}$, the shortest cycle length in $G_{0}$ is ($d$ + 1) / 2. 

\begin{example}
Consider SCCnt($v_{7}$) in comparison to Example 4. \autoref{tab:v_7_label} shows the in-labels for $v_{7}^{i}$ and out-labels for $v_{7}^{o}$. The shortest distance from $v_{7}^{o}$ to $v_{7}^{i}$ through hub $v_{1}^{i}$ is 4 + 7 = 11 and the associated path counting is 2 $\cdot$ 1 = 2. The shortest path counting with the same length via hub $v_{7}^{i}$ is 1 $\cdot$ 1 = 1. Thus, SCCnt($v_{7}$) is 2 + 1 = 3. The shortest cycle length is (11 + 1) / 2 = 6. The evaluation of SCCnt($v$) is evaluated only in terms of $L_{in}(v_{7}^{i})$ and $L_{out}(v_{7}^{o})$.
\end{example}

\vspace{-5mm}
\begin{table}[H]
\centering
\caption{In-Labels of $v_{7}^{i}$ and Out-Labels of $v_{7}^{o}$}
\vspace{-2mm}
  \begin{tabular}{|l|l|}
    \hline
    $L_{in}(v_{7}^{i})$ & $(v_{1}^{i},4,2)$ $(v_{7}^{i},0,1)$\\
    \hline
    $L_{out}(v_{7}^{o})$ & $(v_{1}^{i},7,1)$ $(v_{7}^{i},11,1)$ $(v_{7}^{o},0,1)$\\
  \hline
\end{tabular}
\label{tab:v_7_label}
\vspace{-3mm}
\end{table}

\subsection{Index Reduction}
Each couple pair's sequential order ensures that there is no rank gap between them. Nonetheless, $v^{i}$ is capable of serving as a hub for $v^{o}$'s out-labels. It relies on whether there exists $sp(v^{o},v^{i})$ with $v^{i}$ as the highest rank vertex along the path. As a result, we can keep a duplicate of each pair of couple vertices' in-labels (out-labels). When the complete index must be recovered, we just need to modify the distance element and the $v^{i}$-hub out-label entry for $v^{o}$ if necessary.

\subsection{Complexity Analysis}

\begin{theorem}
For a graph with treewidth $\omega$, average out-degree $s_f$ and average in-degree $s_b$, the index construction time of CSC is O($n\omega^{2}\log^{2}n + (s_f + s_b)\cdot n\omega \log n$), and its index size is O($n\omega \log n$).
\label{the:complexity}
\end{theorem}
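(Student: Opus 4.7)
The plan is to reduce the analysis to the well-established treewidth-based bound on 2-hop labeling sizes, then charge the construction cost against the total number of label entries produced. Throughout, I will work on the bipartite graph $G_b$, which has $2n$ vertices and $n+m$ edges, and whose treewidth is within a constant factor of that of $G$ (edge subdivision and the duplicate spine do not blow up treewidth asymptotically).

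First I would establish the index size bound. Following the tree-decomposition argument used for canonical pruned hub labelings (the same apparatus that underlies HP-SPC in \cite{zhang20} and the original pruned landmark labeling of Akiba et al.), if vertices are ranked so that higher-treewidth-bag vertices come first, then for any vertex $w$ the set of hubs landing in $L_{in}(w) \cup L_{out}(w)$ is contained, up to a $\log n$ factor arising from the balanced separator hierarchy, in the union of bags along the root-to-$w$ path of a tree decomposition of width $\omega$. This gives $|L_{in}(w)| + |L_{out}(w)| = O(\omega \log n)$ per vertex. Summing over $2n$ vertices in $G_b$ yields total index size $O(n \omega \log n)$. The non-canonical labels introduced by the cover constraint do not inflate this bound: each non-canonical entry $(v,d,c) \in L_{in}(w)$ still requires $v$ to be the top-ranked vertex on at least one shortest $v \to w$ path, so the same separator argument applies.

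Next I would analyze the BFS cost of Algorithm~\ref{alg:SCC}. For a fixed hub $v$, the BFS only enqueues vertices $w$ for which a new label entry with hub $v$ is actually appended; the pruning test $d < D[w]$ on line 14 guarantees this. Hence the total number of vertices ever dequeued across all hubs is exactly the total number of label entries, namely $O(n \omega \log n)$. At each such dequeue two things happen: (i) the distance query on line 13, which scans $L^{c}_{out}(v) \cap L^{c}_{in}(w)$ and costs $O(\omega \log n)$ by the per-vertex label-size bound; and (ii) expanding the neighbors of $w'$ in the bipartite graph. The first contribution aggregates to $O(n \omega \log n) \cdot O(\omega \log n) = O(n \omega^{2} \log^{2} n)$. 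For the second, each dequeued vertex performs work proportional to its bipartite out-degree (in the in-label pass) or in-degree (in the out-label pass); charging these to the corresponding vertex $v \in V$, whose bipartite incidences sum to $\deg_{out}(v)+\deg_{in}(v)+2$, and summing over all $O(\omega \log n)$ times $v$ is visited, yields $O((s_f + s_b)\cdot n \omega \log n)$. Adding the two contributions gives the claimed construction time.

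The main obstacle is the label-size bound on the bipartite graph $G_b$: the couple-vertex skipping optimization and the forced consecutive ranking of $(v^i,v^o)$ mean the standard analysis must be re-applied to the modified order, and one has to verify that a tree decomposition of $G_b$ of width $O(\omega)$ can be obtained from one of $G$ (e.g., by replacing each occurrence of $v$ in a bag with $\{v^i, v^o\}$, which at most doubles the width). Once that lemma is in place, the rest is bookkeeping: the canonical/non-canonical distinction, the skipping of $V_{out}$ hubs on lines 6--8, and the reverse-direction out-label pass all preserve the counting arguments above, so no extra logarithmic factors creep in.
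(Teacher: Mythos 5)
Your proof follows essentially the same route as the paper's: both bound the per-vertex label size by $O(\omega\log n)$ via a balanced (centroid) tree decomposition of width $\omega$ and depth $O(\log n)$, and both charge the construction cost per label entry as one $O(\omega\log n)$ distance query plus $O(s_f)$ (resp.\ $O(s_b)$) edge traversals, summing to $O(n\omega^{2}\log^{2}n + (s_f+s_b)\cdot n\omega\log n)$. The only real difference is that you make explicit why $G_b$ inherits treewidth $O(\omega)$ from $G$, whereas the paper absorbs this by treating each couple pair as a single vertex under couple-vertex skipping --- a presentational rather than substantive difference.
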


\begin{proof}
Assume that $G_{b}$ has a centroid decomposition $(X,T)$. $T$ is a tree in which each node $t$ corresponds to a subset of $V_{in}\cup V_{out}$ termed a bag $X_{t}$ and $X=\{X_{t}$ $|$ $t\in T\}$. Due to the fact that we use the couple-vertex skipping technique, each pair of couple vertices can be considered as a single vertex. We assume that the number of vertices is $n$ which is the same as that of $G_{0}$. Vertices in the root bag are ranked highest, followed by vertices in the root's successors. The BFS will never visit the vertices beyond the current vertex (hub) depth due to the labeling constraint and the ordering of label creation. A bag contains at most $\omega$ vertices, and the depth of $T$ is $\log n$. Thus, each vertex has $2\omega \log n$ label entries, for a total of O$(n\omega \log n)$. The time complexity of evaluating a query is $O(\omega \log n)$.

Let $s_f$ represent the average number of successors and $s_b$ represent the average number of ancestors. Each time a label entry is produced, we traverse $s_f$ ($s_b$) edges in a forward (backward) direction, respectively. In $O(\omega \log n)$ time, a query can be evaluated. Thus, the total time required to build an index is $O((s_f + \omega\log n) \cdot n\omega \log n + (s_b + \omega \log n)\cdot n \omega \log n)$ which is equal to $O( n \omega^{2} \log^{2}n + (s_f + s_b) \cdot n \omega \log n)$.
\end{proof}

\section{Maintenance over Dynamic Graphs}
\label{sect:insert}
This section proposes an efficient algorithm for updating the bipartite hub labeling in the presence of edge insertions and deletions. 
Note that vertices updates can be accomplished via a series of edges updates. Thus, we concentrate on edge updates. 


\subsection{Edge Insertion} 
The index can be updated by simply reconstructing the whole index. To prevent performing actions on labels that are not impacted, we specify the affected hubs. The affected hubs could be out-of-date or inserted as new label entries. Our update algorithm's central idea is to perform pruned BFSs beginning from these affected hubs. The following are related lemmas and theorems:

\begin{lemma}
The shortest distance between any pair of vertices does not grow with the addition of a new edge.
\label{no_incdis}
\end{lemma}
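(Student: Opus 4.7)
The plan is to prove this by a direct monotonicity argument on the set of paths. The key observation is that edge insertion only enlarges the edge set, so every path that exists in $G_0$ still exists in $G_+$, which means the minimum path length can only decrease or stay the same.

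First, I would fix an arbitrary pair of vertices $u, v \in V$ and consider the sets $P_{G_0}(u,v)$ and $P_{G_+}(u,v)$ of all directed paths from $u$ to $v$ in $G_0$ and $G_+$ respectively. Since $G_+$ is obtained from $G_0 = (V, E)$ by adding a single edge $e$, the edge set satisfies $E \subseteq E \cup \{e\}$, and in particular every edge used by a path $p \in P_{G_0}(u,v)$ is still present in $G_+$. Therefore $P_{G_0}(u,v) \subseteq P_{G_+}(u,v)$.

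Next, I would invoke the definition of shortest distance. Recall $sd_{G_0}(u,v) = \min_{p \in P_{G_0}(u,v)} len(p)$ and analogously for $G_+$. Taking the minimum over a larger set can only give a smaller or equal value, so
\[
sd_{G_+}(u,v) \;=\; \min_{p \in P_{G_+}(u,v)} len(p) \;\le\; \min_{p \in P_{G_0}(u,v)} len(p) \;=\; sd_{G_0}(u,v).
\]
The two corner cases to mention are (i) when $P_{G_0}(u,v) = \emptyset$, in which case $sd_{G_0}(u,v) = \infty$ and the inequality is trivial, and (ii) when $u = v$, where both distances are $0$.

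Since $u$ and $v$ were arbitrary, the inequality holds for every pair of vertices, which is precisely the claim of the lemma. There is no real obstacle here; the statement is essentially a one-line monotonicity fact about minima over nested sets, and the only thing worth stating carefully is that the new edge does not remove any existing path, so the set inclusion $P_{G_0}(u,v) \subseteq P_{G_+}(u,v)$ is immediate from $E \subseteq E \cup \{e\}$.
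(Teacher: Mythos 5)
Your proof is correct. The paper itself does not prove this lemma --- it simply cites prior work (Akiba et al.) for it --- and your argument is the standard one: edge insertion preserves every existing path, so $P_{G_0}(u,v) \subseteq P_{G_+}(u,v)$ and the minimum length over the larger set cannot exceed the minimum over the smaller set. The handling of the $P_{G_0}(u,v)=\emptyset$ and $u=v$ corner cases is a welcome touch of care, and nothing is missing.
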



\begin{lemma}
If the shortest distance from $v$ to $w$ changes as a result of the insertion of edge $e(a,b)$, then all the new shortest paths from $v$ to $w$ travel via $e(a,b)$.
\label{new_short}
\end{lemma}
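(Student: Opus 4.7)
The plan is to prove the lemma by contradiction, leveraging the previous lemma (that inserting an edge cannot increase any shortest distance).

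First, I would set up notation: write $sd_{G_0}(v,w)$ for the shortest distance in the original graph and $sd_{G_+}(v,w)$ for the shortest distance after inserting $e(a,b)$. The hypothesis that the distance changes, combined with Lemma~5.1.1 (distances cannot increase), immediately gives the strict inequality
\[
sd_{G_+}(v,w) \;<\; sd_{G_0}(v,w).
\]

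Next, I would take an arbitrary new shortest path $p'$ from $v$ to $w$ in $G_+$, so $len(p') = sd_{G_+}(v,w)$, and suppose for contradiction that $p'$ does not traverse the inserted edge $e(a,b)$. Then every edge of $p'$ is already present in $G_0$, so $p'$ is a valid path from $v$ to $w$ in the original graph as well. Consequently,
\[
sd_{G_0}(v,w) \;\le\; len(p') \;=\; sd_{G_+}(v,w),
\]
which contradicts the strict inequality derived above. Hence $p'$ must use $e(a,b)$, and since $p'$ was arbitrary, every new shortest path passes through $e(a,b)$.

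The argument is essentially a one-line contradiction, so there is no real obstacle; the only subtlety worth flagging is the precise reading of ``new shortest paths.'' I would interpret this as \emph{any} path that realizes the (possibly reduced) shortest distance in $G_+$, since a path whose length still equals $sd_{G_0}(v,w)$ would not have caused the distance to change. With that reading, the contradiction argument cleanly covers all cases, and no additional structural assumption on the graph (directedness, simplicity of the path, etc.) is required.
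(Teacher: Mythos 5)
Your proof is correct. The paper itself gives no proof of this lemma (it is stated without proof and attributed to a citation), and your contradiction argument --- a shortest path in $G_+$ avoiding $e(a,b)$ would already be a path in $G_0$ of length $sd_{G_+}(v,w) < sd_{G_0}(v,w)$, contradicting Lemma~\ref{no_incdis} --- is exactly the standard argument one would supply; your remark on the reading of ``new shortest paths'' is also the right clarification.
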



According to Lemma \ref{no_incdis} and Lemma \ref{new_short}~\cite{akiba14}, some label entries should be updated or generated if the insertion of ($a,b$) results in new shortest paths. To discover new shortest paths, suppose we perform BFS in forward direction from a certain $v_{k}$, where $sd(v_{k},a)<sd(v_{k},b)$, and the BFS begins from $b$ with distance $sd(v_{k},a)+1$ and appropriate shortest path counting information. For the reverse direction search from a specific $u_{k}$, where $sd(b,u_{k})<sd(a,u_{k})$, the BFS should begin at $a$ with distance $sd(b,u_{k})+1$. Due to the fact that we need to update the index with all the new shortest paths, the BFS from $v_{k}$ in the forward direction should be processed as follows.

When a vertex $w$ is encountered, let $D$ represent the tentative distance from $v_{k}$ to $w$ and $D_{L}$ denote the distance calculated using the current index. Additionally, the provisional shortest path counting is recorded. There are three possible scenarios:
\begin{enumerate}[leftmargin=32pt]
    \item[$Case_1$] If $D > D_{L}$, the BFS is pruned because the shortest path from $v_{k}$ to $w$ does not pass through $e(a,b)$;
    \item[$Case_2$] If $D = D_{L}$, we accumulate the counting as new same-length shortest path is discovered, and continue BFS;
    \item[$Case_3$] If $D < D_{L}$, we update the distance and counting as new shortest path with shorter distance is discovered, and continue BFS.
\end{enumerate}

\eat{\begin{figure}[t]
	\begin{center}
		\subfigure[A graph]{
			\label{inc_1}
			\centering
			\includegraphics[scale=0.25]{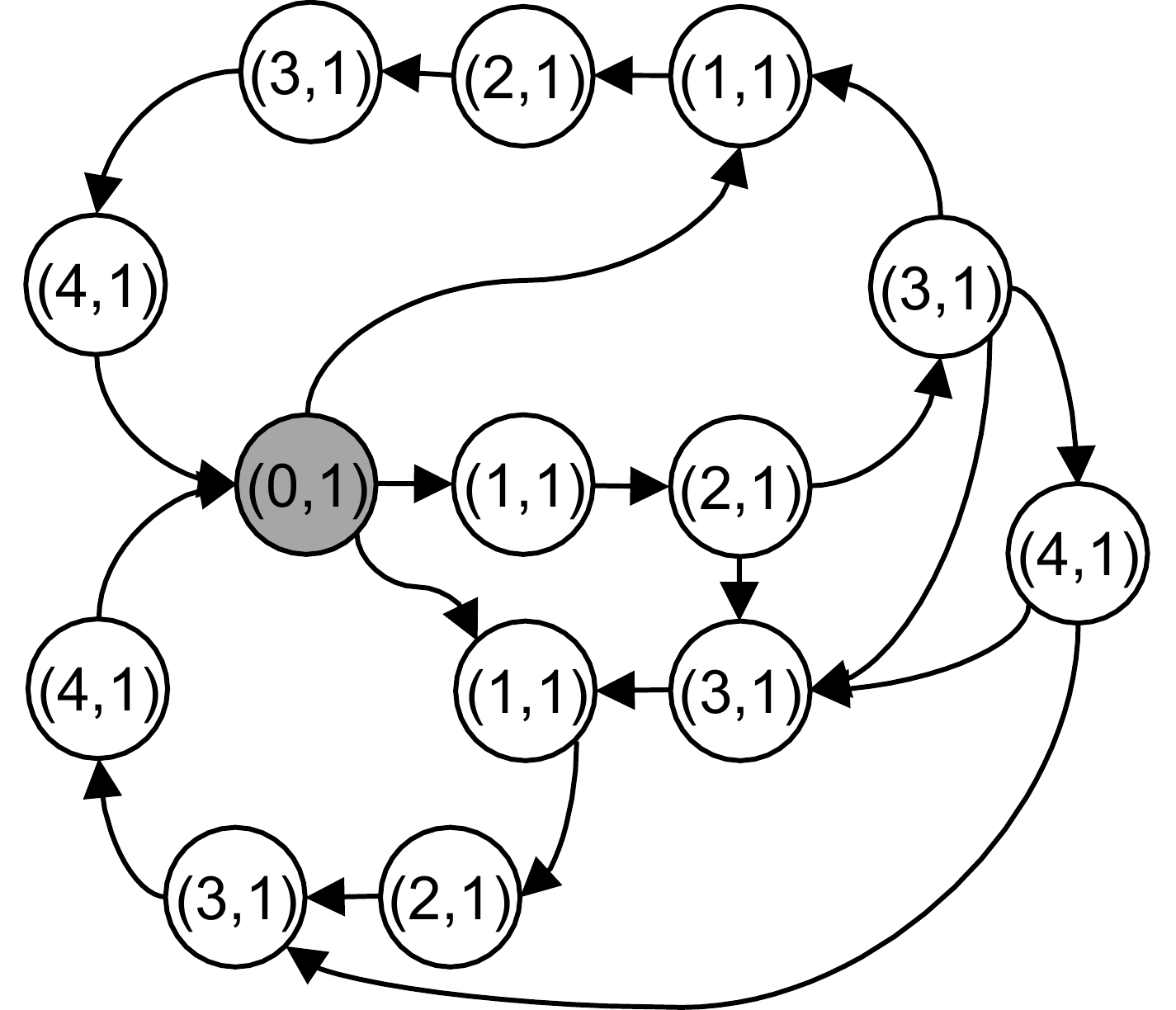}      
		}\hspace{0mm}
		\subfigure[After inserting an edge (in red)]{
			\label{inc_2}
			\centering
			\includegraphics[scale=0.25]{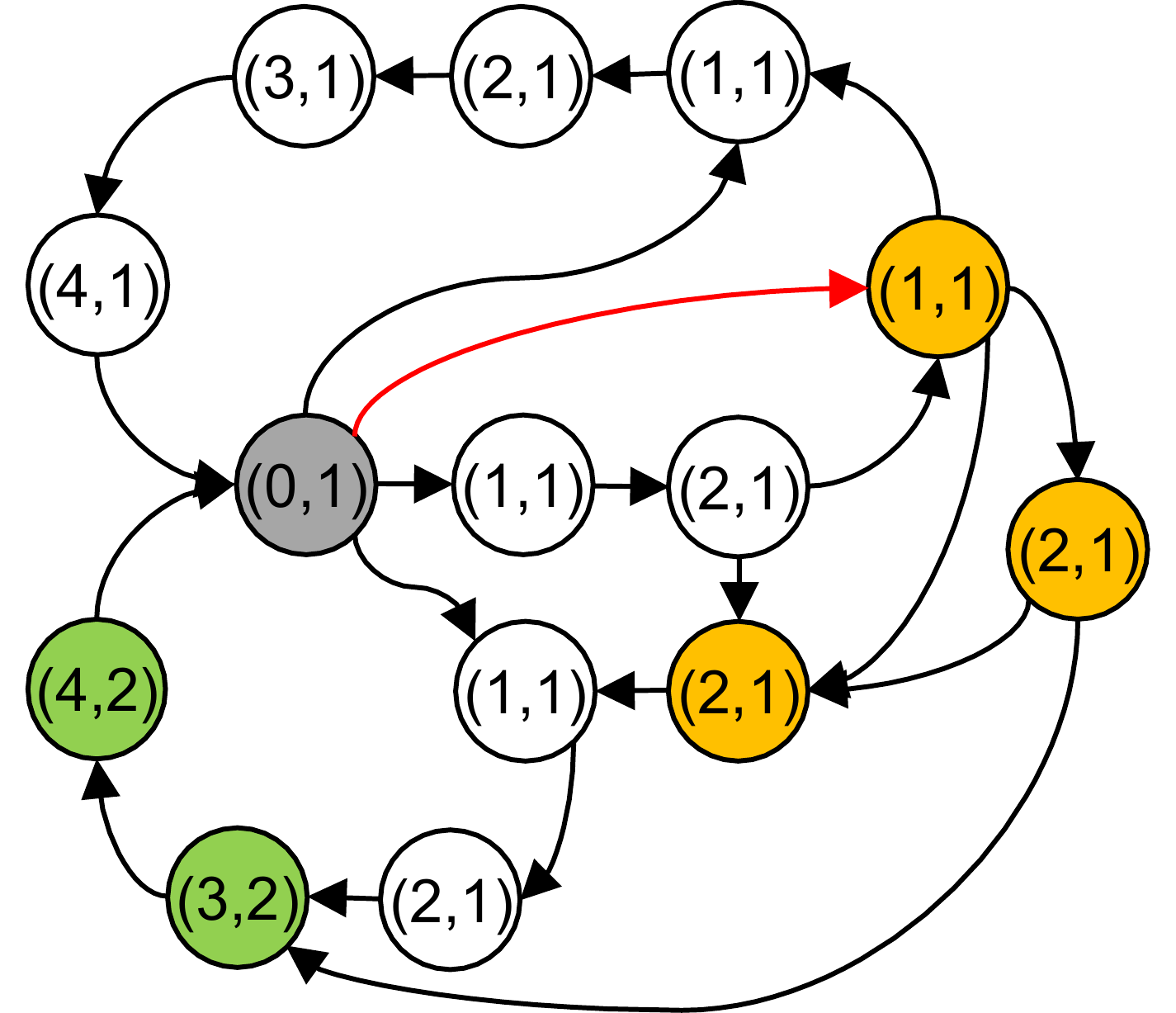}     
		}\hspace{0mm}
	\end{center}
	\vspace{-5mm}
	\caption{An example for the incremental update.} \label{fig:inc_ex}
	\vspace{-3mm}
\end{figure}}

\begin{example}
\autoref{fig:inc_ex} illustrates an example, bipartite conversion is omitted for simplicity. We begin with the grey vertex and assume it has the highest rank in the graph. The first digit of each vertex represents the shortest distance from the grey vertex, while the second digit represents the number of the shortest paths. Additionally, they are the in-label entry for each vertex associated with the grey vertex's hub. When the edge (in red) is inserted in \autoref{fig:inc_ex}(b), yellow vertices alter the shortest distance ($Case_3$), whereas green vertices just affect the shortest path counting ($Case_2$). White vertices are unaffected ($Case_1$ or unvisited).
\end{example}

In addition to the distance pruning, the BFS also terminates if $w\prec v_{k}$, indicating that $v_{k}$ is an eligible hub for $L_{in}(w)$, and vice versa to flip the direction of updating out-labels. Since the query evaluation is to determine the shortest distance via common hubs, out-of-date label entries that do not reflect the shortest distance will be dominated by new label entries after the update. As a consequence, their presence has no bearing on the correctness. The selection of $v_{k}$ and $u_{k}$, that is the affected hubs, is as follows.

\begin{definition}[Affected Hubs]
With the addition of a new edge $e(a, b)$, we define the affected hubs associated with $a$ as $hub_{A} = \{h_{A} | (h_{A},d,c) \in L_{in}(a)\}$ and the affected hubs related to $b$ as $hub_{B} = \{h_{B} | (h_{B},d,c) \in L_{out}(b)\}$.
\label{def:aff_hub}
\end{definition}

Some $(h_{A},d,c) \in L_{in}(\cdot)$ may be out-of-date if $h_{A} \in hub_A$. Some $(h_{B},d,c) \in L_{out}(\cdot)$ may be out-of-date if $h_{B} \in hub_B$. Additionally, some new in-label entries $(h_{A},d,c)$ and new out-label entries $(h_{B},d,c)$ should be generated to restore the covering constraint. Consider $hub_A$, for each $v_{k}$ in $hub_A$, $v_{k}$ is the highest ranked vertex among all or some of the shortest paths from $v_{k}$ to $a$. Thus, they are eligible to pass through $e(a, b)$ and update the index using rank pruning and distance pruning. Vice versa for vertices in $hub_B$ in the reverse direction. Other vertices are trimmed or rendered inaccessible by $a$ in $G_{0}$ (or $b$ in the reverse graph) during the initial index building. The new edge has no effect on the label entries that use these vertices as hubs.

Thus, the search of new shortest paths should be conducted from $hub_A$ in the forward direction and from $hub_B$ in the reverse direction. The shortest path counting information at the beginning of the BFS should be as follows.

\begin{theorem}
If $h \in hub_{A}$ $(or$ $hub_{B})$, and $(h,d,c) \in L_{in}(a)$ $(or$ $L_{out}(b))$. The shortest path counting utilized at the start of the BFS should be $c$ for the affected hub $h$.
\end{theorem}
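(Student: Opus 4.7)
I would prove this by relating the initialization value directly to what the cover constraint says about the existing label entry $(h,d,c)\in L_{in}(a)$, and then reading off what the BFS needs to carry into $b$ along the new edge $e(a,b)$. I will give the argument for the $hub_A$ (forward / in-label) side; the $hub_B$ case is symmetric on the reverse graph.

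First, unpack the meaning of the existing entry. By the cover constraint that governs CSC, the entry $(h,d,c)\in L_{in}(a)$ says that $d=sd(h,a)$ and $c$ is precisely the number of shortest paths from $h$ to $a$ along which $h$ is the highest-ranked vertex. Call this set of paths $\mathcal{P}_h(h,a)$, so $|\mathcal{P}_h(h,a)|=c$. Note that in general $c \le |SP(h,a)|$, since canonical/non-canonical labels only track the subset of shortest paths on which $h$ is the top-ranked vertex.

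Next, I would characterize the new shortest paths created from $h$ that can contribute a label update at BFS start. By Lemma~\ref{no_incdis} no distance strictly increases, and by Lemma~\ref{new_short} every genuinely new shortest path from $h$ to a vertex $w$ after inserting $e(a,b)$ must traverse the edge $(a,b)$. In particular, any new shortest $h\!\to\!b$ path is obtained by concatenating some $h\!\to\!a$ shortest path $\pi$ with the single edge $(a,b)$, giving length $sd(h,a)+1=d+1$ (which is exactly the distance the BFS uses to enter $b$). Concatenation is therefore a bijection between the new shortest $h\!\to\!b$ paths through $e(a,b)$ and the shortest $h\!\to\!a$ paths in $SP(h,a)$.

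The crucial step is to restrict this bijection to the paths for which $h$ is still the highest-ranked vertex, because that is the only subset the BFS is allowed to count when it is building labels with hub $h$. On any extended path $\pi\cdot(a,b)$, the set of vertices is that of $\pi$ together with $b$. Since we are in the rank-pruned regime where the BFS only proceeds into vertices ranked below $h$ (otherwise $h$ would be an illegal hub there), both $a$ and $b$ lie strictly below $h$ in the ordering. Hence $h$ is the highest-ranked vertex of $\pi\cdot(a,b)$ iff $h$ is the highest-ranked vertex of $\pi$, i.e.\ iff $\pi\in\mathcal{P}_h(h,a)$. Under the same bijection, the set of new shortest $h\!\to\!b$ paths on which $h$ is top-ranked has cardinality exactly $|\mathcal{P}_h(h,a)|=c$. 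Thus seeding the BFS at $b$ with counting $c$ gives the correct initial value for the label update under the cover constraint, and the analogous argument on $\bar{G_b}$ (with $a$ seeded from $(h,d,c)\in L_{out}(b)$ for $h\in hub_B$) finishes the theorem.

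The main obstacle, in my view, is resisting the natural but wrong temptation to seed with $SPCnt(h,a)$ (the total shortest-path count) rather than $c$; the bookkeeping hinges on the observation that CSC only ever counts paths on which the hub is the top-ranked vertex, so the correct seed is exactly the stored $c$, and one must justify carefully that restricting to ``$h$ is top-ranked'' is preserved by appending the edge $(a,b)$ below $h$ in the rank order.
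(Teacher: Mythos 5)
Your proof is correct, but it takes a genuinely different route from the paper's. The paper argues by exclusion: it considers a non-canonical affected hub $h_n$ with a higher-ranked canonical hub $h_c$ on some of the shortest paths from $h_n$ to $a$, and observes that since hubs are processed in descending rank order, the label entries with hub $h_c$ are already up to date, so seeding the BFS with the full $SPCnt(h_n,a)$ would count those $h_c$-dominated paths a second time and overestimate the counts in non-canonical labels; the conclusion that the seed must be $c$ is left implicit as the only alternative. You instead give a direct, positive argument: you unpack the cover-constraint semantics of $(h,d,c)\in L_{in}(a)$ as $c=|\mathcal{P}_h(h,a)|$, exhibit the concatenation map $\pi\mapsto\pi\cdot(a,b)$ as a bijection between shortest $h\!\to\!a$ paths and length-$(d+1)$ $h\!\to\!b$ paths through the new edge, and show that the ``$h$ is top-ranked'' property is preserved because the algorithm only launches the BFS when $h\prec b$ (and $h\prec a$ already holds for $h$ to be a hub of $a$). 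This buys you a uniform treatment of canonical and non-canonical hubs and does not depend on the descending processing order, whereas the paper's argument only explains why the na\"ive seed is wrong and leans on that order. Two small points to tighten: (i) your phrase ``new shortest $h\!\to\!b$ paths'' should really be ``length-$(d+1)$ paths to $b$ through $e(a,b)$'' --- whether these are shortest (strictly shorter, equal, or dominated) is decided afterwards by the distance pruning in the three cases, and the seed $c$ is the right value in each case; (ii) it is worth stating explicitly that $SP(h,a)$ itself is unchanged by the insertion (any new shortest path to $a$ through $e(a,b)$ would have to revisit $a$ and hence not be simple), so the stored $c$ remains the correct pre-image count in $G_{+}$. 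Neither gap is fatal; the argument stands.
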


\begin{proof} 
As shown in \autoref{fig:can_noncan}, $h_{c}$ is a canonical hub in $L_{in}(a)$ and a hub in $L_{out}(h_{n})$, $h_{n}$ is a non-canonical hub in $L_{in}(a)$. According to the constraint of the labeling schema, $h_{c}$ has a higher rank than $h_{n}$. Assume that the current affected hub to be processed is $h_{n}$, we first calculate $sd(h_{n},a)$ and SPCnt$(h_{n},a)$ under the current index. And then use them to start the search of new shortest paths from $b$. For counting labeling, if SPCnt($h_{n},a$) is used, the initial shortest path counting also counts the paths from $h_{n}$ to $a$ via $h_{c}$ if they are also the shortest. The label entries with hub $h_{c}$ are already updated (if hubs are processed in descending order), which indicates the shortest path counting from $h_{n}$ to other vertices via $h_{c}$ is up to date. Thus, the update may overestimate the counting in non-canonical labels if using SPCnt($h_{n},a$) before the search.
\end{proof}

\begin{figure}[t]
  \centering
  \includegraphics[scale=1]{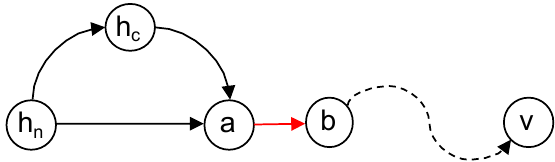}
  \vspace{-2mm}
  \caption{Update for Canonical Hub and Non-Canonical Hub.}
  \vspace{-3mm}
  \label{fig:can_noncan}
\end{figure}

Update algorithm \textsc{Inc}CNT is shown in Algorithm~\ref{alg:inc-cnt}. After identifying affected hubs $V_{k}$, the BFS update procedure will start from each $v_{k}$ in $V_{k}$ in descending order (lines 4-10). The local distance and counting of $v_{k}$ are utilized to start the BFS (lines 6,9).

Algorithm~\ref{alg:forwardpass} illustrates the process of calculating new shortest paths and updating in-labels. When a vertex $w$ is encountered, let $M_{G}$ represent the current index and $D[w]$ denote the tentative distance from $v_{k}$ to $w$, while $D_{G}(v_{k},w)$ specifies the distance calculated using $M_{G}$ (line 7). $C[w]$ keeps track of the tentative shortest path counting. If a new shortest path is discovered ($D[w] \leq D_{G}(v_{k},w)$), update the $L_{in}(w)$ (line 10). The BFS terminates based on the distance and rank pruning (line 12). Algorithm {\textsc{Backward}PASS} is similar to Algorithm~\ref{alg:forwardpass} except that the BFS starts at $a$ in the reverse direction.
We omit it due to the space limits.

Algorithm~\ref{alg:update-label} explains how to update in-labels. If hub $v_{k}$ exists in $L_{in}(w)$ and the new shortest path becomes shorter, replace the label entry with the new distance and counting (line 3). Or if the new shortest path has the same length as before, simply accumulate the counting (line 6). If hub $v_{k}$ does not exist, insert a new label entry (line 8). To update out-labels, replace $L_{in}(\cdot)$
with $L_{out}(\cdot)$. $M_{G_{+}}$ denotes the updated index.


\begin{algorithm}[htb]
\vspace{-1mm}

\setstretch{0.95}
    $hub_{A} \leftarrow$ hubs from $L_{in}(a)$\;
    $hub_{B} \leftarrow$ hubs from $L_{out}(b)$\;
    $V_{k} \leftarrow hub_{A} \cup hub_{B}$\;
    \For{{\bf each} $v_{k}\in V_{k}$ {\rm in descending order}}{
        \If{$v_{k} \in hub_{A}$ {\bf and} $v_{k} \prec b$}{
            $(v_{k},d,c) \leftarrow$ label with hub $v_{k}$ from $L_{in}(a)$\;
            {\textsc{Forward}PASS}($v_{k},b,d+1,c$)\;
        }
        \If{$v_{k} \in hub_{B}$ {\bf and} $v_{k} \prec a$}{
            $(v_{k},d,c) \leftarrow$ label with hub $v_{k}$ from $L_{out}(b)$\;
            {\textsc{Backward}PASS}($v_{k},a,d+1,c$)\;
        }
    }
\caption{\textsc{Inc}CNT($a,b$)}
\vspace{-1mm}
\label{alg:inc-cnt}
\end{algorithm}

\begin{algorithm}[htb]
\setstretch{0.95}
    \For{{\bf each} $v\in V$} {
        $D[v] \leftarrow \infty;$ $C[v] \leftarrow 0$\;
    }
    $D[b] \leftarrow D;$ $C[b] \leftarrow C$\;
    Queue $Q \leftarrow $ $\emptyset$; $Q.{\rm enqueue}(b)$\;
    \While{$Q$ {\rm $\neq \emptyset$}} {
        $w \leftarrow Q.{\rm dequeue}()$\;
        $D_{G}(v_{k},w) \leftarrow $ distance from $v_{k}$ to $w$ under $M_{G}$\;
        \If{$D[w] > D_{G}(v_{k},w)$}{
            {\bf continue};
        }
        \textsc{Update}LABEL(($v_{k},D[w],C[w]$), $L_{in}(w)$)\;
        \For{{\bf each} $u \in nbr_{out}(w)$} {
            \If{$D[u] > D[w] + 1$ {\bf and} $v_{k} \prec u$}{
                $D[u] \leftarrow D[w] + 1$; $C[u] \leftarrow C[w]$\;
                $Q.{\rm enqueue}(u)$\;
            } \ElseIf {$D[u]  = D[w] + 1$} {
                $C[u] \leftarrow C[u] + C[w]$\;
            }
        }
    }
    \For{{\bf each} $v\in V$} {
        $D[v] \leftarrow \infty;$ $C[v] \leftarrow 0$\;
    }
\caption{\textsc{Forward}PASS($v_{k},b,D,C$)}
\label{alg:forwardpass}
\end{algorithm}

\begin{algorithm}[htb]
\setstretch{0.95}
    \If{$(v_{k},d',c') \in L_{in}(w)$}{
        \If{$d < d'$}{
            Replace $(v_{k},d',c')$ with $(v_{k},d,c)$\;
            \textsc{Clean}LABEL($w$, $L_{in}(w)$)\;
        } \ElseIf {$d = d'$} {
            Replace $(v_{k},d',c')$ with $(v_{k},d,c + c')$\;
        }
    } \Else{
        Insert $(v_{k},d,c)$ to $L_{in}(w)$\;
        \textsc{Clean}LABEL($w$, $L_{in}(w)$)\;
    }

\caption{\textsc{Update}LABEL($(v_{k},d,c)$, $L_{in}(w)$)}
\label{alg:update-label}
\end{algorithm}

\begin{theorem}
\vspace{-1mm}
Let $M_{G}$ be the shortest cycle counting index for graph $G_{0}$ and $M_{G_{+}}$ be the index updated by Algorithm~\ref{alg:inc-cnt} from $M_{G}$ regarding the edge insertion to make $G_{+}$ from $G_{0}$. Then, the index $M_{G_{+}}$ is a correct shortest cycle counting index for $G_{+}$. 
\end{theorem}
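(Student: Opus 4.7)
The plan is to show that $M_{G_+}$ satisfies the cover constraint and the exact-counting property required by the bipartite SPC labeling (and hence, via the conversion of Section~\ref{sect:ours}, yields correct shortest cycle counting on $G_+$). Concretely, for every ordered pair $(s,t)$ in the bipartite graph, I need to verify that the highest-ranked vertex $h^{\star}$ on each shortest $s$-to-$t$ path in $G_+$ appears in $L_{out}(s) \cap L_{in}(t)$, and that the product of the two stored counts, summed over such hubs, equals $|SP_{G_+}(s,t)|$.

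I would first partition $SP_{G_+}(s,t)$ into paths that were already shortest in $G_0$ and paths introduced by inserting $e(a,b)$. By Lemma~\ref{no_incdis}, distances never increase, so the old shortest paths remain shortest only if they still achieve the new optimum; and by Lemma~\ref{new_short}, every strictly shorter path in $G_+$ traverses $(a,b)$. For the pre-existing group I would argue that no entry of $M_G$ needed for their cover/counting is corrupted by Algorithm~\ref{alg:inc-cnt}: \textsc{Update}LABEL only overwrites an entry when a shorter distance via $(a,b)$ is found, and in that case the old entry is dominated in the $\min$ of Equation~\eqref{con:spc1} and therefore ignored at query time. Pre-existing entries whose distance is still optimal are either left untouched or have their count incremented by equal-length new paths, which is the correct behaviour.

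The core of the proof concerns new paths. Take any new shortest $s \rightsquigarrow t$ path $P$, written as $s \rightsquigarrow a \to b \rightsquigarrow t$, and let $h^{\star}$ be its top-ranked vertex. Without loss of generality $h^{\star}$ lies on the $s \rightsquigarrow a$ prefix (the mirror case is handled by \textsc{Backward}PASS). Then $h^{\star}$ is the top-ranked vertex on the sub-path $h^{\star} \rightsquigarrow a$ in $G_0$, so by the cover constraint $M_G$ already contains $(h^{\star}, sd_{G_0}(h^{\star},a), c) \in L_{in}(a)$ for some $c$; i.e., $h^{\star} \in hub_A$ (Definition~\ref{def:aff_hub}). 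Algorithm~\ref{alg:inc-cnt} therefore launches \textsc{Forward}PASS from $h^{\star}$ with initial distance $d+1$ and count $c$, and the BFS, pruned by $v_k \prec u$ and $D[u] \le D_G(v_k, u)$, reaches every $w$ for which a new shortest path from $h^{\star}$ through $(a,b)$ exists and $h^{\star}$ is still the top-ranked vertex on that path. At each such $w$, \textsc{Update}LABEL either inserts $(h^{\star}, D[w], C[w])$ or correctly merges with the existing entry. Processing affected hubs in descending rank order ensures that when \textsc{Forward}PASS from $h^{\star}$ is executed, all strictly higher-ranked hubs have already been processed, so their contributions have been installed and are not re-counted.

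The main obstacle I expect is the counting correctness when $h^{\star}$ is a non-canonical hub of $L_{in}(a)$: some equal-length paths from $h^{\star}$ to $a$ traverse a canonical hub $h_c$ with $h_c \prec h^{\star}$ and are therefore already accounted for under $h_c$'s update (cf.~\autoref{fig:can_noncan}). The subtle point is that using the stored $c$ from $L_{in}(a)$, rather than a recomputed $SPCnt(h^{\star},a)$, is exactly what avoids double counting: $c$ counts only those shortest paths for which $h^{\star}$ is the top-ranked vertex, which is precisely the set of new paths whose ``credit'' belongs to $h^{\star}$. I would make this rigorous by induction on the descending rank order of $V_k$, showing that after processing each $v_k$ the invariant ``every new shortest path in $G_+$ is covered and counted exactly once, charged to its top-ranked hub'' is maintained. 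Combining the two groups of paths then establishes that $M_{G_+}$ is a correct SPC index for the bipartite graph of $G_+$, and hence a correct shortest cycle counting index for $G_+$.
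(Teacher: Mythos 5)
Your proposal is correct and follows essentially the same route as the paper: decompose each new shortest path through $e(a,b)$, observe that its top-ranked vertex is an affected hub in $hub_{A}$ (or $hub_{B}$) by the cover constraint on $L_{in}(a)$ (resp.\ $L_{out}(b)$), and let the corresponding forward (backward) pass install the entry, with the higher-ranked side prevailing. You are in fact more explicit than the paper on the counting side --- the canonical/non-canonical double-counting issue you flag is exactly what the paper's preceding theorem (on initializing the BFS with the stored count $c$ from $L_{in}(a)$ rather than a recomputed $SPCnt(h,a)$) is there to resolve.
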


\begin{proof}
As shown in \autoref{fig:correct_minimal}, if edge $e(a,b)$ generates new shortest paths from $v$ to $w$, there is at least a path from $v$ to $a$ and a path from $b$ to $w$. Let $h_{v}$ and $h_{w}$ denote the hubs of the shortest paths from $v$ to $a$ and the shortest paths from $b$ to $w$, respectively. If $h_{v} \prec h_{w}$, $h_{v}$ will traverse to $w$ in the forward direction and try to update $w$'s in-label. And vice versa for $h_{w}$ to $v$ in the reverse direction if $h_{w} \prec h_{v}$. Hence, after executing \textsc{Inc}CNT(a,b), the higher-ranked one between $h_{v}$ and $h_{w}$ should be the hub of the corresponding shortest paths from $v$ to $w$. As all the label entries with hubs like $h_{v}$ and $h_{w}$ will be updated. Then, for each pair of vertices, their shortest paths through $e(a,b)$ are updated into the index. Label entries related to other shortest paths are unchanged. Therefore, SPCnt($v,w$) is correct under $G_{+}$, and the correctness of all pair shortest path counting guarantees the correctness of SCCnt($u$) for any vertex $u$. 
\end{proof}

\begin{figure}[t]
	\centering
	\includegraphics[scale=1]{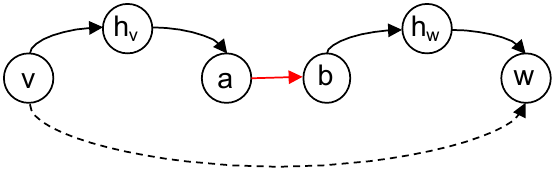}
		\vspace{-3mm}
	\caption{Graph for Proof of Redundancy and Correctness.}
	\vspace{-3mm}
	\label{fig:correct_minimal}
\end{figure}

Moreover, to preserve the minimality property, we need to eliminate those superfluous label entries. Redundant label entries are defined as follows:

\begin{definition}[Redundant Labels]
When a label entry is updated with a shorter distance or a new label entry is inserted, redundancy of the label may occur. A label entry $(h,d,c)$ in $L_{in}(v)$ is redundant if $d > sd(h,v)$ in $G_{+}$; A label entry $(h,d,c)$ in $L_{out}(v)$ is redundant if $d > sd(v,h)$ in $G_{+}$.
\end{definition}

To ensure the minimality, of the label, a redundant label entry check is needed. We examine redundancy situations using the forward direction update as an example, and the cleaning process is shown in Algorithm~\ref{alg:clean-label}. As an example, in \autoref{fig:correct_minimal}, suppose $h_{v}$ is pushed for update and encounters $w$. In this instance, the duplicate label entries correspond to the paths towards $w$ (dash line in \autoref{fig:correct_minimal}). There are two situations of the start vertex $v$ of such paths:

\begin{itemize}
  \item If $v \prec w$, then $(v,d,c)$ is a label entry in $L_{in}(w)$. Following index updates, $sd(v,w)$ must be re-evaluated. If $d > sd(v,w)$, then all the shortest paths from $v$ to $w$ pass via $e(a,b)$, resulting in the redundancy of $(v,d,c)$ (lines 1-5);
  \item If $w \prec v$, then $(w,d,c)$ is a label entry in $L_{out}(v)$. Similarly, we determine if $sd(v,w)$ is less than $d$ when the index is updated. To facilitate implementation, an out-label inverted index $inv\_out(w)$ is added in this instance to locate the vertices similar to $v$ whose out-label contains the hub $w$ (lines 6-11). An inverted index of this kind can be constructed during the initial index creation process.
\end{itemize}

For the reverse direction cleaning, the process is similar. For the first instance, We examine the out-label and utilize an in-label inverted index ($inv\_in(\cdot)$) to locate and delete redundant in-label entries.

\begin{algorithm}[t]
\setstretch{0.95}
    \For{{\bf each} $(h,d,c) \in L_{in}(w)$}{
        $D_{G_{+}}(h,w)\leftarrow$ distance from $h$ to $w$ under $M_{G_{+}}$\;
        \If{$d > D_{G_{+}}(h,w)$}{
            remove $(h,d,c)$ from $L_{in}(w)$\;
            remove $w$ from $inv\_in(h)$\;
        }
    }
    \For{{\bf each} $v \in inv\_out(w)$}{
        $(w,d,c)\leftarrow$ the label from $L_{out}(v)$\;
        $D_{G_{+}}(v,w)\leftarrow$ distance from $v$ to $w$ under $M_{G_{+}}$\;
        \If{$d > D_{G_{+}}(v,w)$}{
            remove $(w,d,c)$ from $L_{out}(v)$\;
            remove $v$ from $inv\_out(w)$\;
        }
    }
\caption{\textsc{Clean}LABEL($w$, $L_{in}(w)$)}
\label{alg:clean-label}
\end{algorithm}

\begin{theorem}
If Algorithm \ref{alg:clean-label} is used to update the shortest path counting index, the index is \textit{minimal}. As a consequence, the lack of any label entries leads to erroneous counting query results for certain vertices.
\end{theorem}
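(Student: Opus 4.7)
The plan is to split the statement into two logically separate pieces that together characterize minimality: first, that after cleanup the index contains no label entry whose stored distance strictly exceeds the true shortest distance in $G_+$; and second, that each surviving label is indispensable, in the sense that deleting any one of them breaks at least one SPCnt (and hence SCCnt) query.

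For the first piece, I would trace where \textsc{Clean}LABEL is invoked: lines~4 and~9 of \textsc{Update}LABEL, namely whenever an in-label entry is replaced with a strictly shorter distance or a brand-new entry is inserted, together with the symmetric hooks for out-labels. Inside Algorithm~\ref{alg:clean-label}, every entry $(h,d,c)$ of the relevant $L_{in}(w)$ is compared against the freshly recomputed shortest distance $D_{G_+}(h,w)$ and removed whenever $d > D_{G_+}(h,w)$; the inverted index $inv\_out(w)$ lets me locate the dual out-label entries of the form $(w,d,c) \in L_{out}(v)$ for $w \prec v$ and subject them to the same test. Combined with Lemma~\ref{no_incdis} (an edge insertion never enlarges a shortest distance) and the inductive assumption that the pre-update index was itself minimal, this rules out any post-update entry with $d$ strictly greater than the true $sd_{G_+}$.

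For the second piece, I would pick any surviving $(h,d,c) \in L_{in}(v)$ and show it is load-bearing for the query SPCnt$(h,v)$ evaluated by \eqref{con:spc1}--\eqref{con:spc2}. Because $(h,0,1) \in L_{out}(h)$ is always present, $h$ itself is a common hub on a shortest path from $h$ to $v$, so the formula picks up a contribution of $1 \cdot c = c$ from this entry. By the cover constraint underlying HP-SPC, the shortest paths accounted for by $(h,d,c)$ are exactly those on which $h$ is the highest-ranked internal vertex, and no other common hub of $L_{out}(h) \cap L_{in}(v)$ can claim them without violating the count-each-path-exactly-once property. Deleting the entry therefore reduces SPCnt$(h,v)$ by exactly $c$, and since SCCnt$(u)$ is assembled from SPCnt queries between $u$ and its neighbours (Section~\ref{sec:espc}), at least one cycle-counting query inherits the error.

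The main obstacle I anticipate is handling non-canonical labels cleanly. A non-canonical $(h,d,c) \in L_{in}(v)$ does not cover all of SP$(h,v)$, only the proper subset selected by the highest-rank rule. Showing that this particular subset is not redundantly represented elsewhere in the updated index requires an explicit bookkeeping argument that the paths attributed to $h$ are disjoint from those attributed to higher-ranked hubs, mirroring the correctness argument already used during HP-SPC's original construction; once that disjointness is in hand the accounting in the previous paragraph carries through without modification.
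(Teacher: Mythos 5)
Your proposal is correct and follows essentially the same route as the paper: first observe that the cleaning step leaves only entries whose stored distance equals the true shortest distance in $G_{+}$, then argue that each surviving entry $(h,d,c)$ is indispensable because the shortest paths it counts are exactly those on which $h$ is the highest-ranked vertex, so no other common hub $k$ could absorb them without itself being the highest-ranked vertex on those paths, forcing $k=h$; the paper likewise concludes that deleting $(h_v,d_v,c_v)\in L_{in}(v)$ already corrupts SPCnt$(h_v,v)$. Your explicit remark that $(h,0,1)\in L_{out}(h)$ makes $h$ a common hub for that query, and your treatment of non-canonical labels via disjointness of the per-hub path sets, are exactly the (implicit) ingredients of the paper's argument.
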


\begin{proof}
Algorithm \ref{alg:clean-label} deletes all superfluous label entries. This indicates that for any $(h_{v},d_{v},c_{v})\in L_{out}(v)$, $d_{v}=sd(v,h_{v})$, $h_{v}$ has the highest rank along some shortest paths from $v$ to $h_{v}$, and the number of which is $c_{v}$. For any $(h_{w},d_{w},c_{w})\in L_{in}(w)$, $d_{w}=sd(h_{w},w)$, there exist $c_{w}$ shortest paths from $h_{w}$ to $w$ with $h_{w}$ as the highest rank vertex. We then demonstrate that the removing of any $(h_{v},d_{v},c_{v})\in L_{out}(v)$ or $(h_{w},d_{w},c_{w})\in L_{in}(w)$ will results in an erroneous SPCnt($v,w$). Assume that $h$ is a common hub in $L_{out}(v)$ and $L_{in}(w)$ and that it generates a portion of the shortest paths $P_{h}$ from $v$ to $w$. Then, among all vertices along $P_{h}$, $h$ has the highest rank. Assume that $(h,d_{h},c_{h})$ has been removed from $L_{out}(v)$ (or $L_{in}(v)$). Another hub $k$ exists that could induce $P_{h}$ where $k\neq h$. Thus, $k$ should be the highest ranked vertex along $P_{h}$. This implies that $k=h$ and therefore refutes the assumption. Elimination of any $(h_{v},d_{v},c_{v})\in L_{in}(v)$ will at the very least result in $SPCnt(h_{v},v)$ is the incorrect answer. And vice versa for any out-label entry.
\end{proof}

\subsection{Analysis}
\subsubsection*{Efficiency Trade-off}~The time required to remove redundant labels associated with $v$ is dependent on the sum of $|L_{in}(v)|$ and $|inv\_out(v)|$ (or $|L_{out(v)}|$ and $|inv\_in(v)|$). Cleaning takes $O(k\omega\log n)$ time, where $k$ is the above-mentioned size and is often considerably longer than updating with redundancy, and $\omega$ is the treewidth from \autoref{the:complexity}. In practice, we choose to update with redundancy in order to maximize efficiency (skip lines 4 and 9 in Algorithm~\ref{alg:update-label}).

\subsubsection*{ Time Complexity}~We begin with $O(2\omega \log n)$ affected vertices. Assume that during the resumed BFS, $k$ vertices are visited and a query is processed for each of them. Thus, the time complexity of adding a new edge is $O(k\omega^{2}\log^{2}n)$.

\subsection{Edge Deletion}
The deletion of an edge may result in an increase in the distance between specific pairs of vertices or a reduction in the shortest path counting between them. To ensure consistency, certain label entries must be deleted. Consider the case where $(v,d,c)\in L_{in}(w)$, but $v$ is detached from $w$ after the edge deletion. In this instance, $(v,d,c)$ results in an erroneous answer of SPCnt$(v,w)$ and SCCnt$(u)$ for specific $u$ when there are shortest cycles through $u$ via $v$ and $w$. 

The decremental update method is comparable to the incremental technique, which could be divided into three steps. Due to space limits, we will only briefly explain them. Assume the omitted edge is $e(a,b)$. We begin by identifying two sets of impacted hubs $hub_{A}$ and $hub_{B}$. Each vertex $v$ in $hub_{A}$ must fulfill the constraint $sd(v,a) + 1 = sd(v,b)$. Each vertex $u$ in $hub_{B}$ must fulfill the constraint $sd(b,u) + 1 = sd(a,u)$. Note that $hub_{A}$ and $hub_{B}$ may have identical vertices that are located on a cycle through $(a,b)$. Delete out-of-date label entries in the second step as follows: If $v\in hub_{A} \wedge u\in hub_{B}$, delete $(v,d,c)$ from $L_{in}(u)$ or delete $(u,d,c)$ from $L_{out}(v)$ if they exist. The set of deleted label entries is a superset of out-of-date ones. The distances from or towards the vertices which are other than $hub_{A}\cup hub_{B}$ are unaffected by $e(a,b)$. The last step is to add label entries. BFS is carried out starting at each vertex $v$ in $hub_{A}$ and inserting $(v,d,c)$ into $L_{in}(u)$ if $u\in hub_{B}$. Vice versa to add out-label entries from vertices in $hub_{B}$ in reverse direction.

\section{Evaluation}
\label{sect:exp}
\subsection{Experiments Setup}
\subsubsection*{Settings} Experiments are conducted on a Linux server with Intel Xeon E3-1220 CPU and 520GB memory. The algorithms are implemented in C++ and compiled  by g++ at -O3 optimization level. Each label entry is encoded in a 64-bit integer. The vertex ID, distance, and counting take 23, 17, and 24 bits, respectively.

\subsubsection*{Datasets} Nine networks from SNAP\footnote{https://snap.stanford.edu\label{snap}} and Konect\footnote{https://konect.cc\label{konect}} were used for the experiments. The details are shown in \autoref{tab:graphs}.
All graphs are directed and have no self-loop. For query evaluation, all vertices of each graph, or at least 50,000 vertices were used, and they were divided into five clusters according to their min-in-out degrees $\min(|nbr_{in}(v)|,|nbr_{out}(v)|)$. We first obtained the highest and lowest degree within each graph. Then divided the degree range evenly into five clusters, High, Mid-high. Mid-low, Low, and Bottom. Vertices were finally clustered based on their min-in-out degrees. For dynamic maintenance, [200,500] random edges were removed and then inserted back to each graph.

\begin{table}[htb]
\centering
\vspace{-5mm}
\caption{The Statistics of The Graphs}
\vspace{-2mm}
  \begin{tabular}{|l|l|l|l|}
    \hline
    \cellcolor{gray!25}\textbf{Graph} & \cellcolor{gray!25}\textbf{Notation} & \cellcolor{gray!25}\textbf{n} & \cellcolor{gray!25}\textbf{m}\\
    \hline
     p2p-Gnutella04\textsuperscript{\ref {snap}}  & G04 & 10,879 & 39,994  \\
    \hline 
     p2p-Gnutella30\textsuperscript{\ref {snap}} & G30 & 36,682 & 88,328\\
    \hline
     email-EuAll\textsuperscript{\ref {snap}} & EME & 265,214 & 420,045 \\
    \hline
     web-NotreDame\textsuperscript{\ref {snap}} & WBN & 325,729 & 1,497,134\\
    \hline
     wiki-Talk\textsuperscript{\ref {snap}} & WKT & 2,394,385 & 5,021,410\\
    \hline
     web-BerkStan\textsuperscript{\ref {snap}} & WBB & 685,231 & 7,600,595\\
    \hline
    Hudong-Related\textsuperscript{\ref {konect}} & HDR & 2,452,715 & 18,854,882\\
    \hline
    wiki$\_$link$\_$War\textsuperscript{\ref {konect}} & WAR & 2,093,450 & 38,631,915\\
    \hline
    wiki$\_$link$\_$SR\textsuperscript{\ref {konect}} & WSR & 3,175,009 & 139,586,199\\
  \hline
\end{tabular}
\label{tab:graphs}
\vspace{-2mm}
\end{table}

\subsubsection*{Compared Algorithms} We compared the following algorithms (i) HP-SPC (Baseline) and (ii) CSC (Proposed algorithm) for the static shortest cycle counting experiments based on the index construction time, index size, and query time for each graph. In addition, we used the na\"ive BFS for the comparison of the query time. 

\subsection{Experiment Results on Static Graph}

\subsubsection{Index Time} \autoref{fig:ind_t_s}(a) shows the index construction time taken by HP-SPC and CSC. The results indicate that 
(i) for graphs EME, WBN, and WKT, even though HP-SPC is 1.22 to 1.38 times faster than CSC in constructing index, the time differences are not significant. For instance, the index time ratio between CSC and HP-SPC for graph EME is 1.38. The time difference is 4.5s; 
(ii) the index construction times of CSC for other tested graphs are slightly longer than HP-SPC. The time difference does not exceed 8\%. When the number of edges exceeds ten million, the time difference is reduced to within 1.5\%; 
(iii) CSC can index any tested graphs with less than ten million edges in 24 minutes. For two graphs with tens of millions of edges, index time of HDR is 15.5 hours, while for WAR, it only requires 0.5 hour. It takes the longest to index the largest graph WSR, roughly 61 hours.

\subsubsection{Index Size} \autoref{fig:ind_t_s}(b) illustrates the results in index size. It is expected that the two algorithms generate a similar size index, and the actual result proves it. We observe that the index size of CSC is nearly the same as that of HP-SPC. The two most significant differences in index size between the two algorithms are 4.4\% and 2.7\% for graph WBN and EME, respectively. At the same time, the index differences for other graphs are all less than 1\%.

\begin{figure}[htb]
\vspace{-4mm}
	\begin{center}
		\subfigure[Index Time]{
			\label{fig:ind_t}
			\centering
			\includegraphics[scale=0.17]{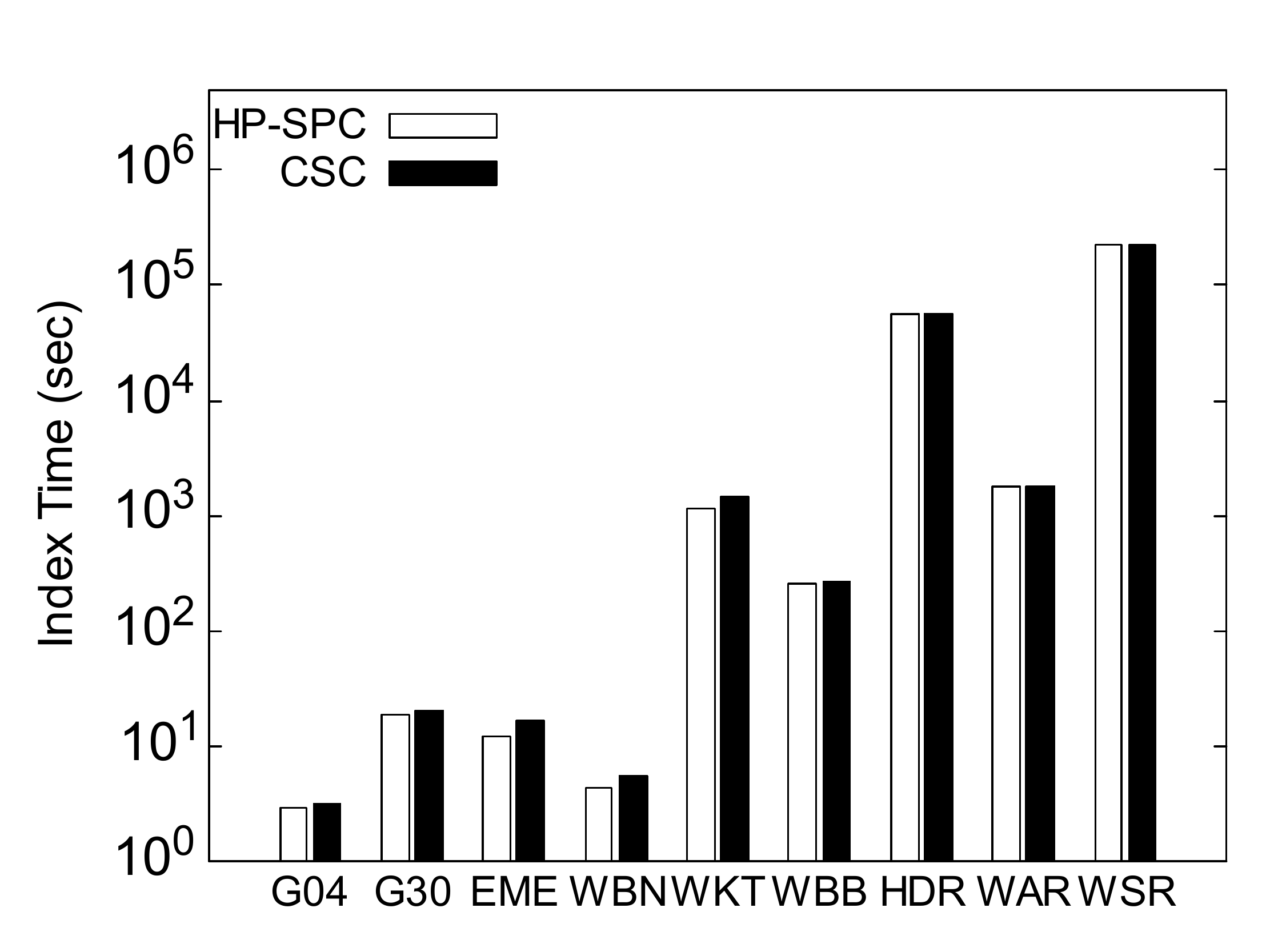}      
		}
		\subfigure[Index Size]{
			\label{fig:ind_s}
			\centering
			\includegraphics[scale=0.17]{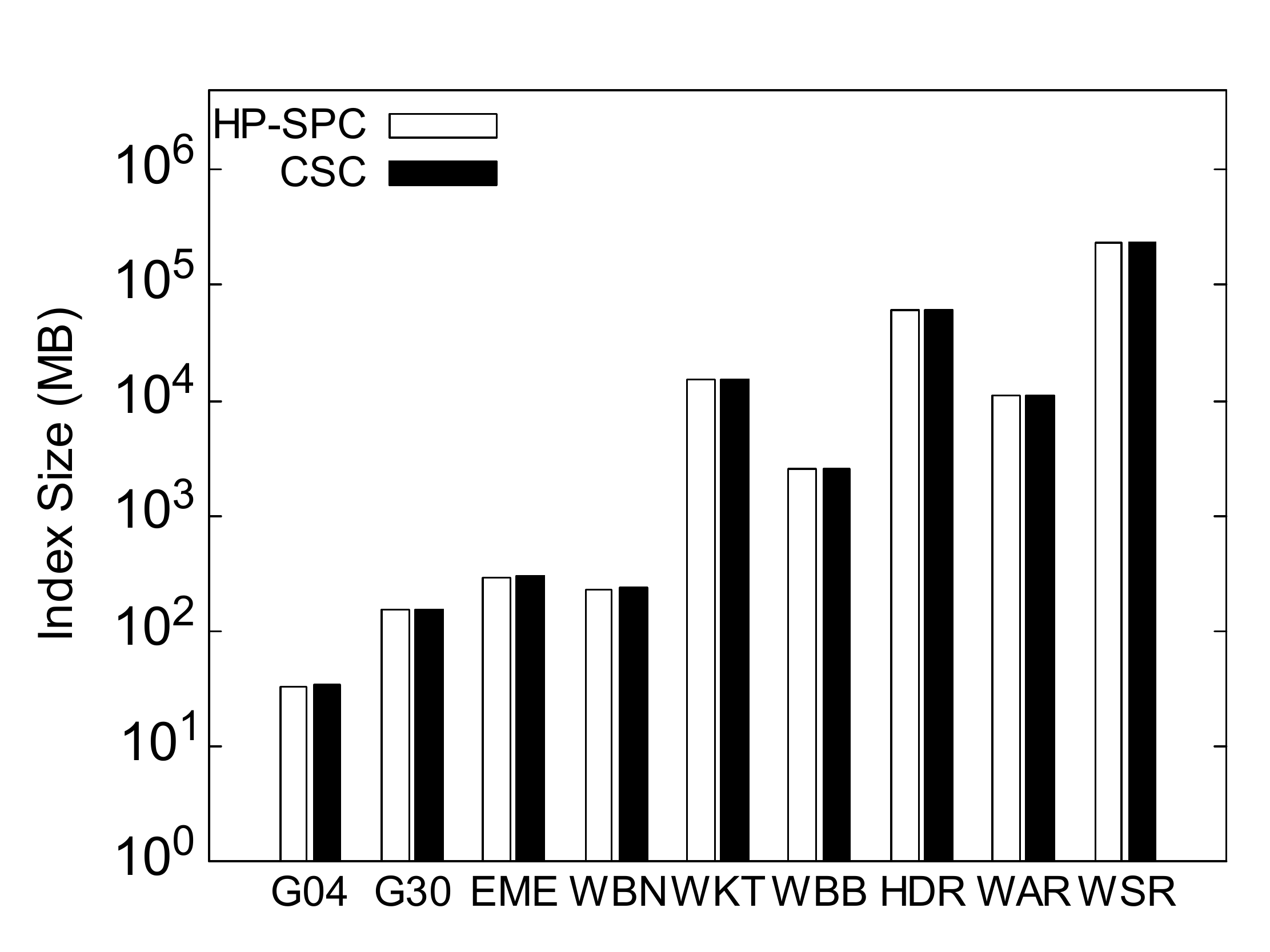}     
		}
	\end{center}
	\vspace{-4mm}
	\caption{Index Time (sec) and Index Size (MB).} \label{fig:ind_t_s}
	\vspace{-3mm}

\end{figure}

\begin{figure}[htb]
\vspace{-2mm}
	\begin{center}
		\subfigure[G04]{
			\label{q_g04}
			\centering
			\includegraphics[scale=0.13]{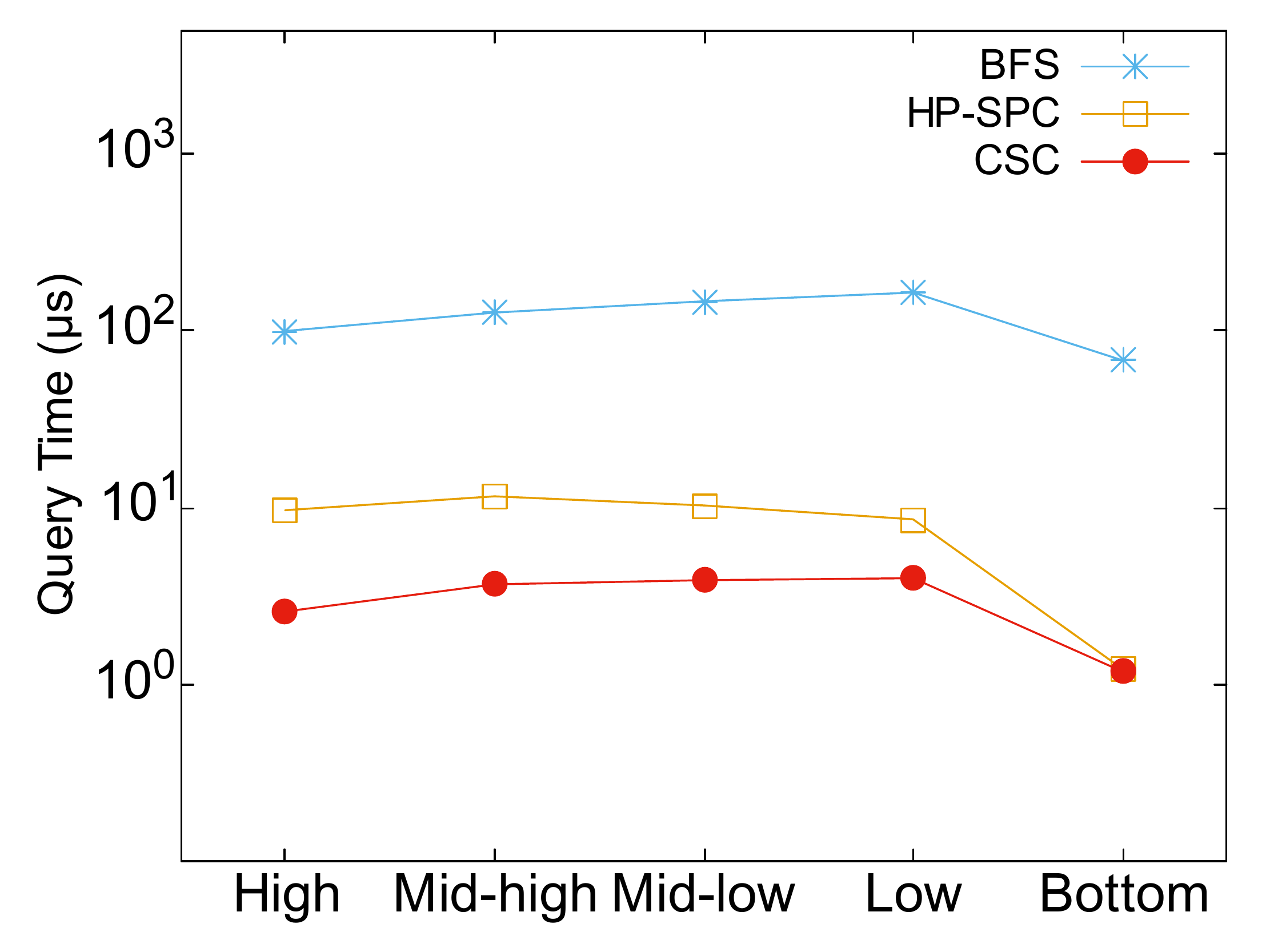}      
		}\hspace{-5mm}
		\subfigure[G30]{
			\label{q_g30}
			\centering
			\includegraphics[scale=0.13]{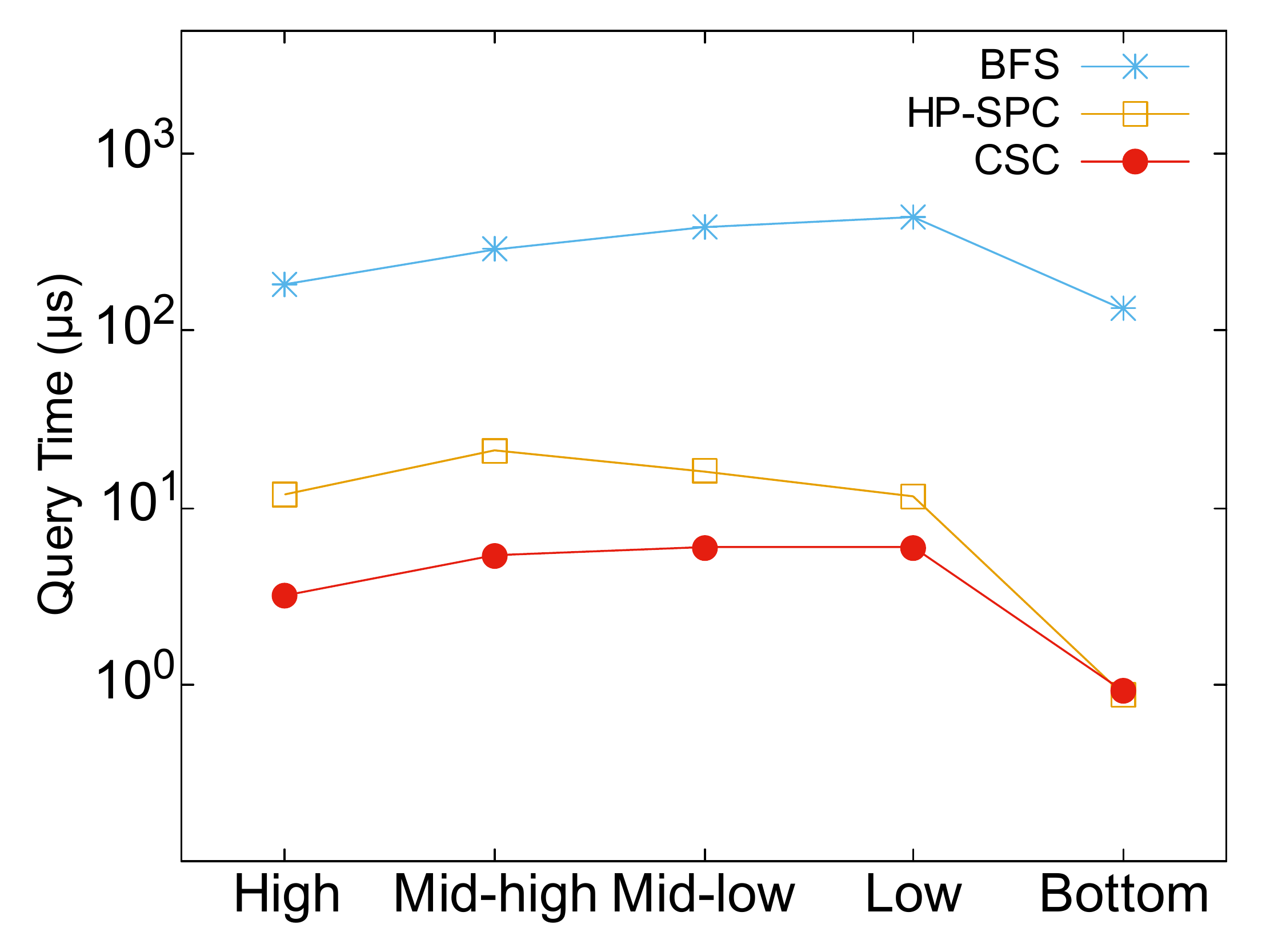}     
		}\hspace{-5mm}
		\subfigure[EME]{
			\label{q_eme}
			\centering
			\includegraphics[scale=0.13]{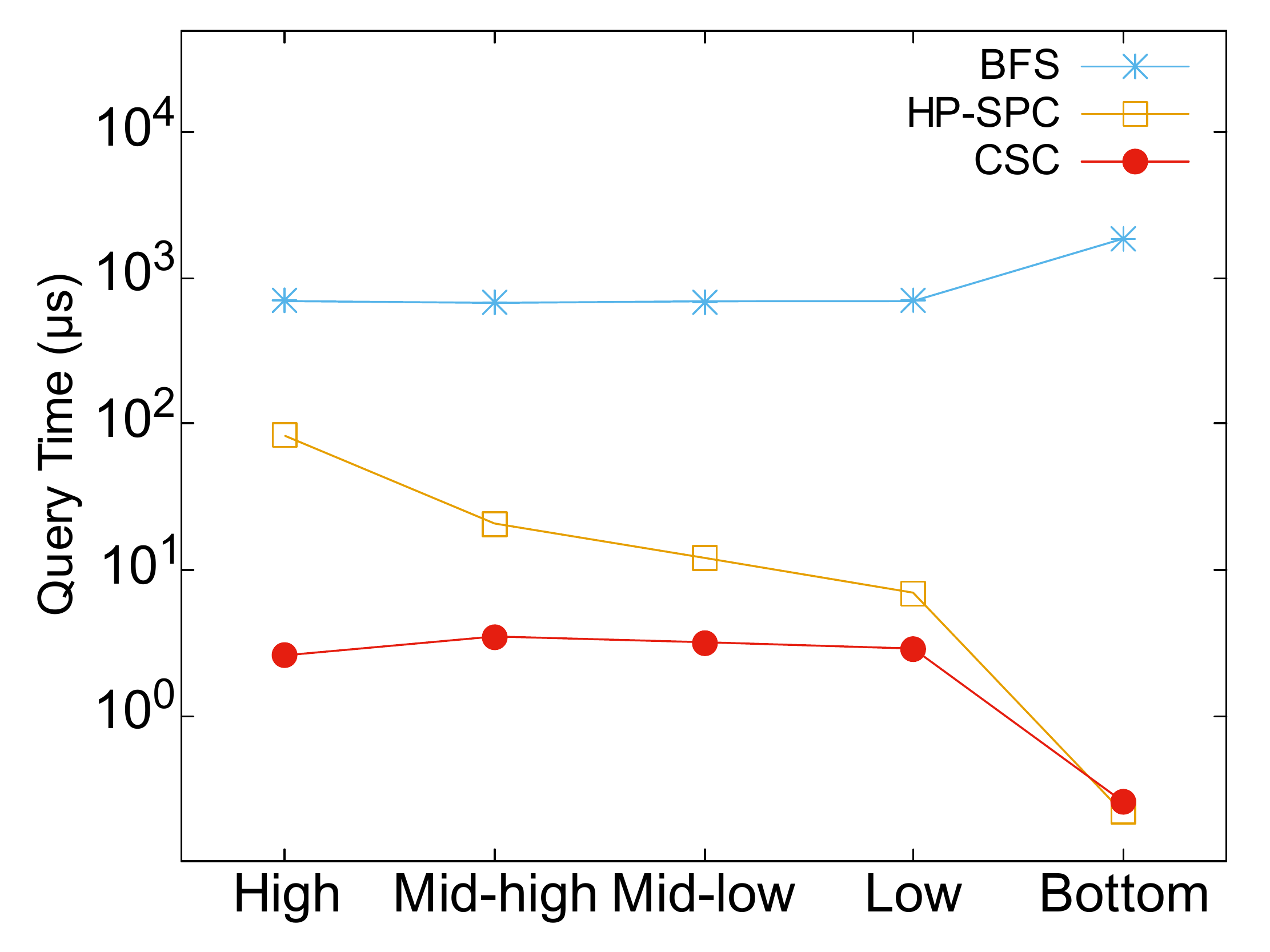}    
		}\hspace{-5mm}
		\subfigure[WBN]{
			\label{q_wbn}
			\centering
			\includegraphics[scale=0.13]{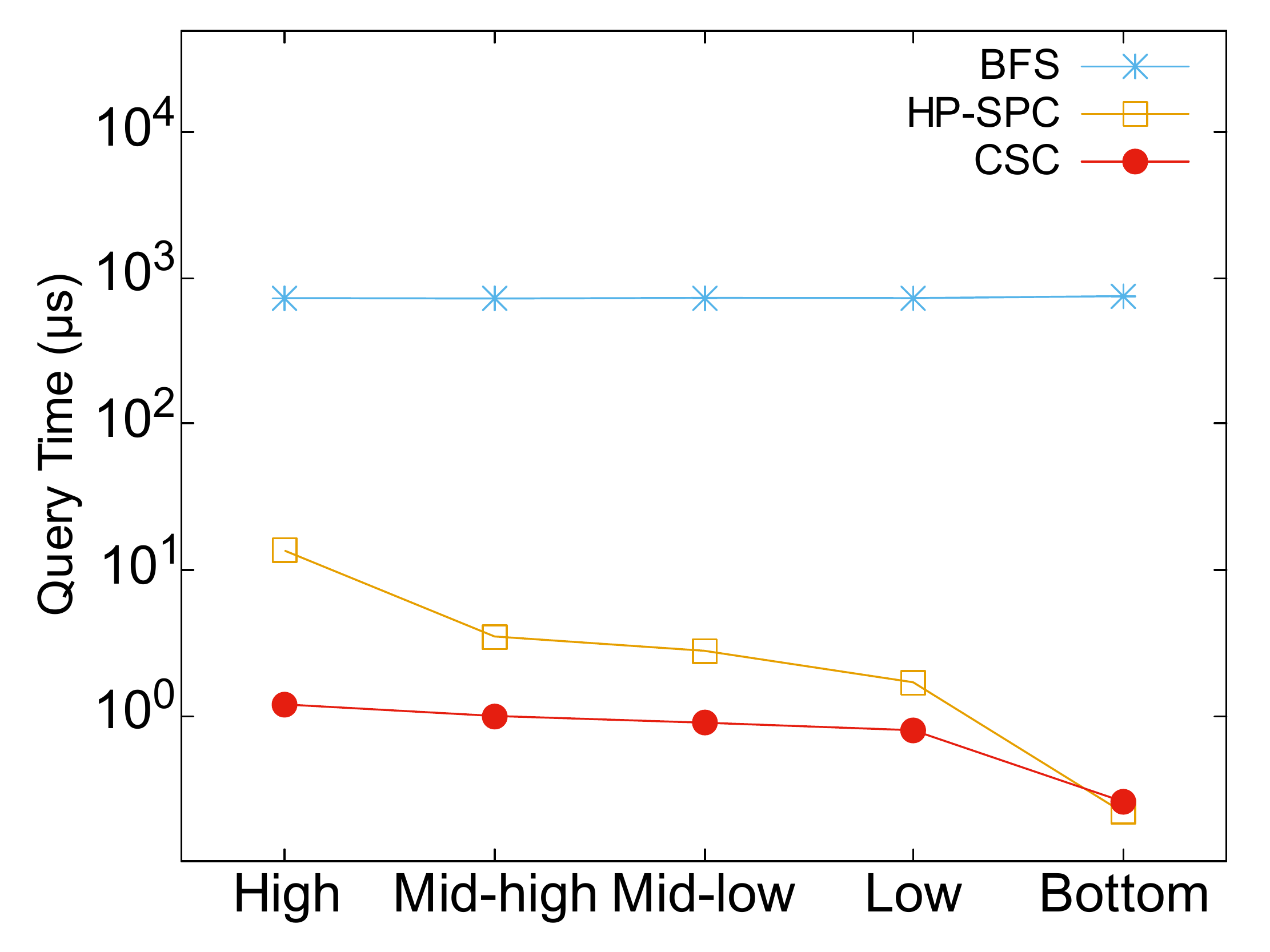}      
		}\hspace{-5mm}
		\subfigure[WKT]{
			\label{q_wkt}
			\centering
			\includegraphics[scale=0.13]{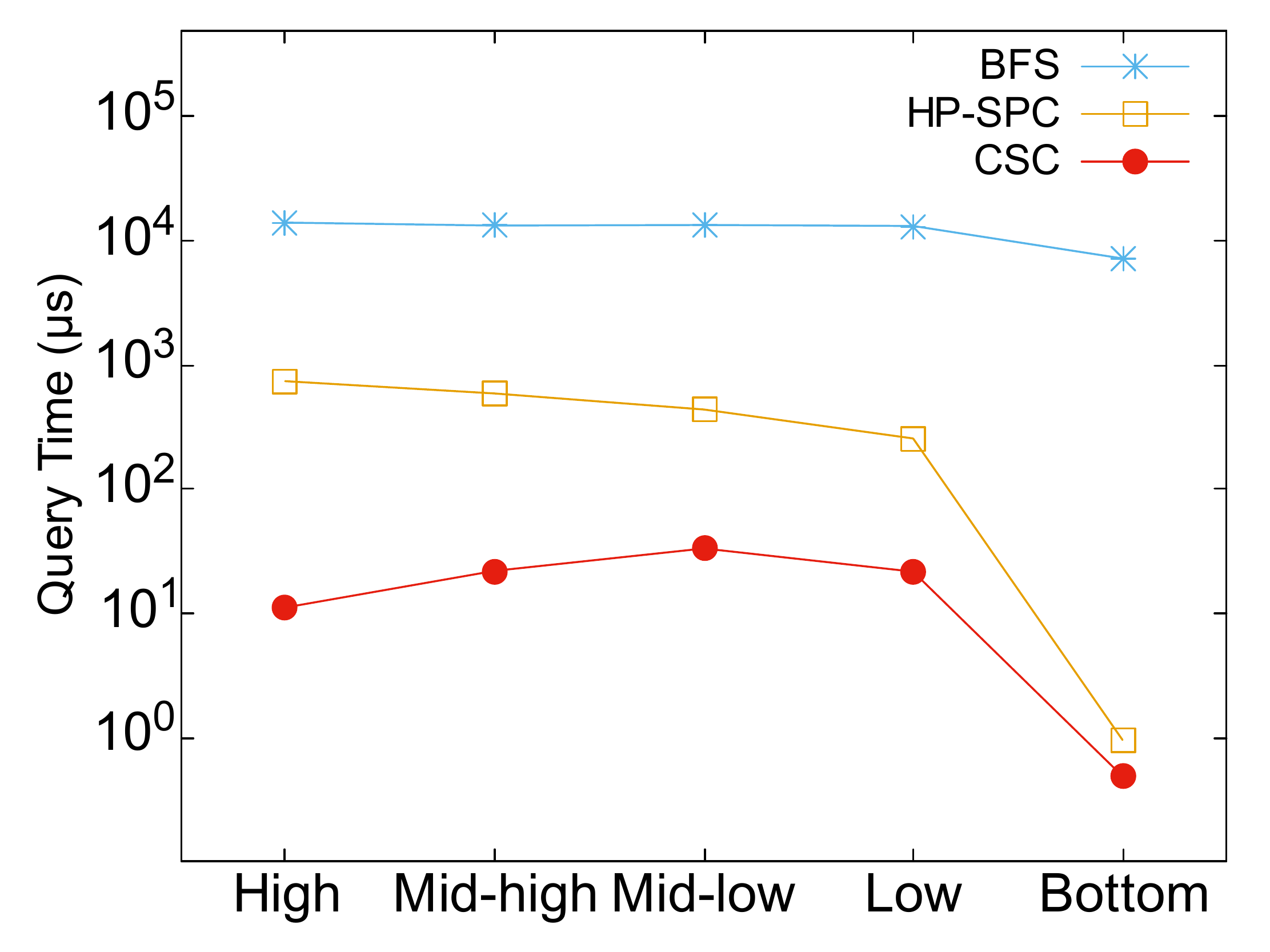}      
		}\hspace{-5mm}
		\subfigure[WBB]{
			\label{q_wbb}
			\centering
			\includegraphics[scale=0.13]{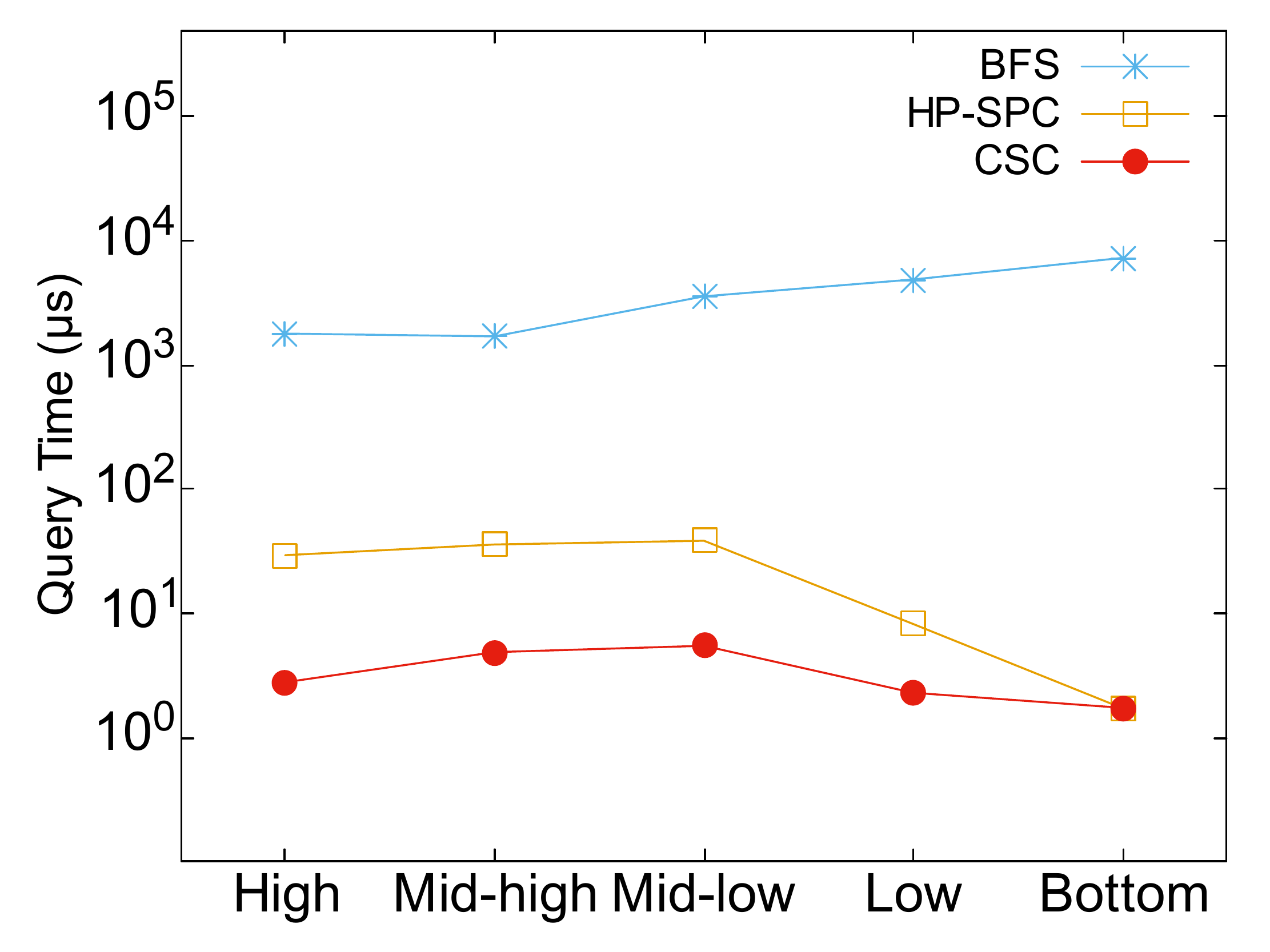}      
		}\hspace{-5mm}
		\subfigure[HDR]{
			\label{q_hdr}
			\centering
			\includegraphics[scale=0.13]{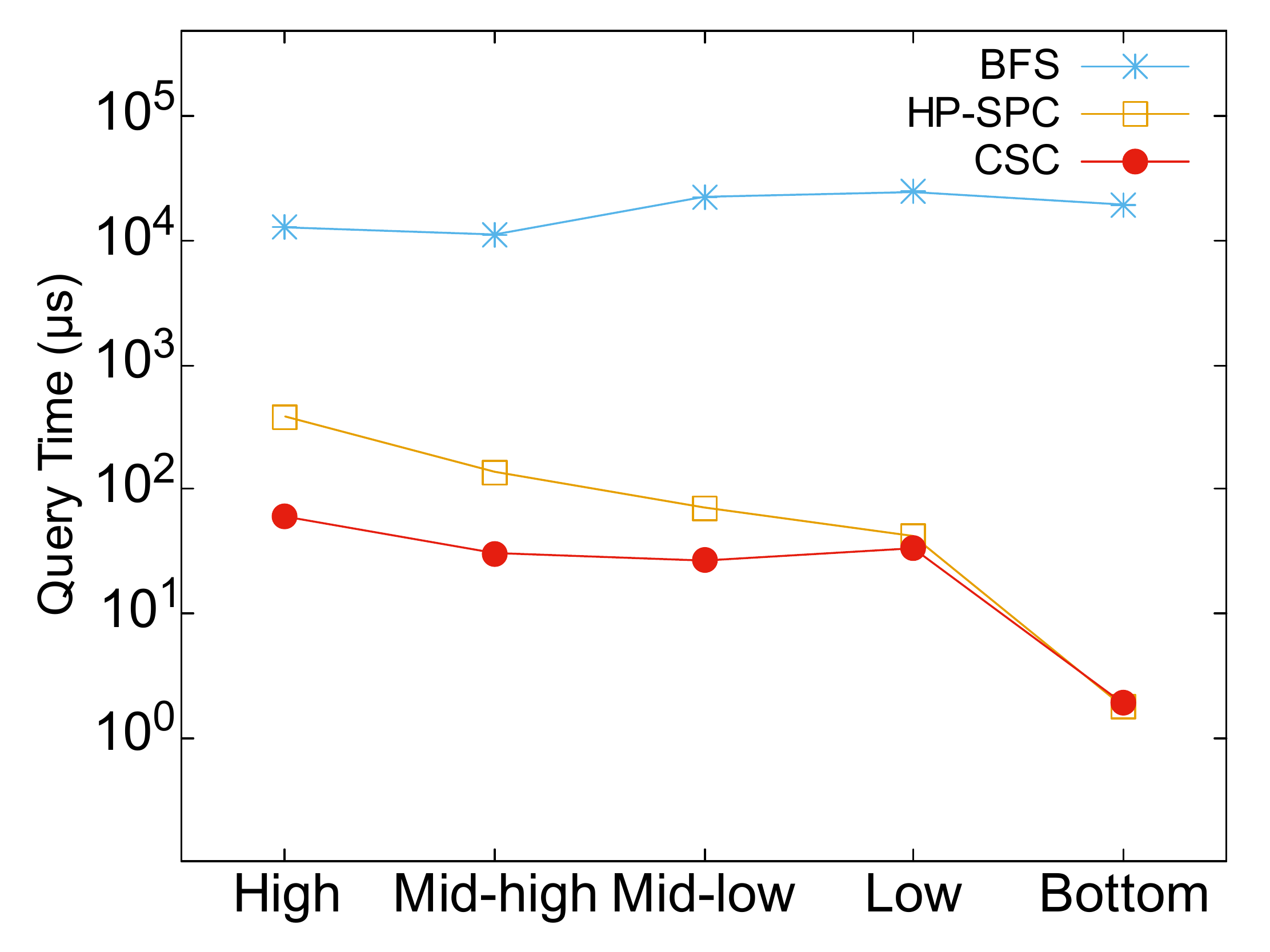}      
		}\hspace{-5mm}
		\subfigure[WAR]{
			\label{q_war}
			\centering
			\includegraphics[scale=0.13]{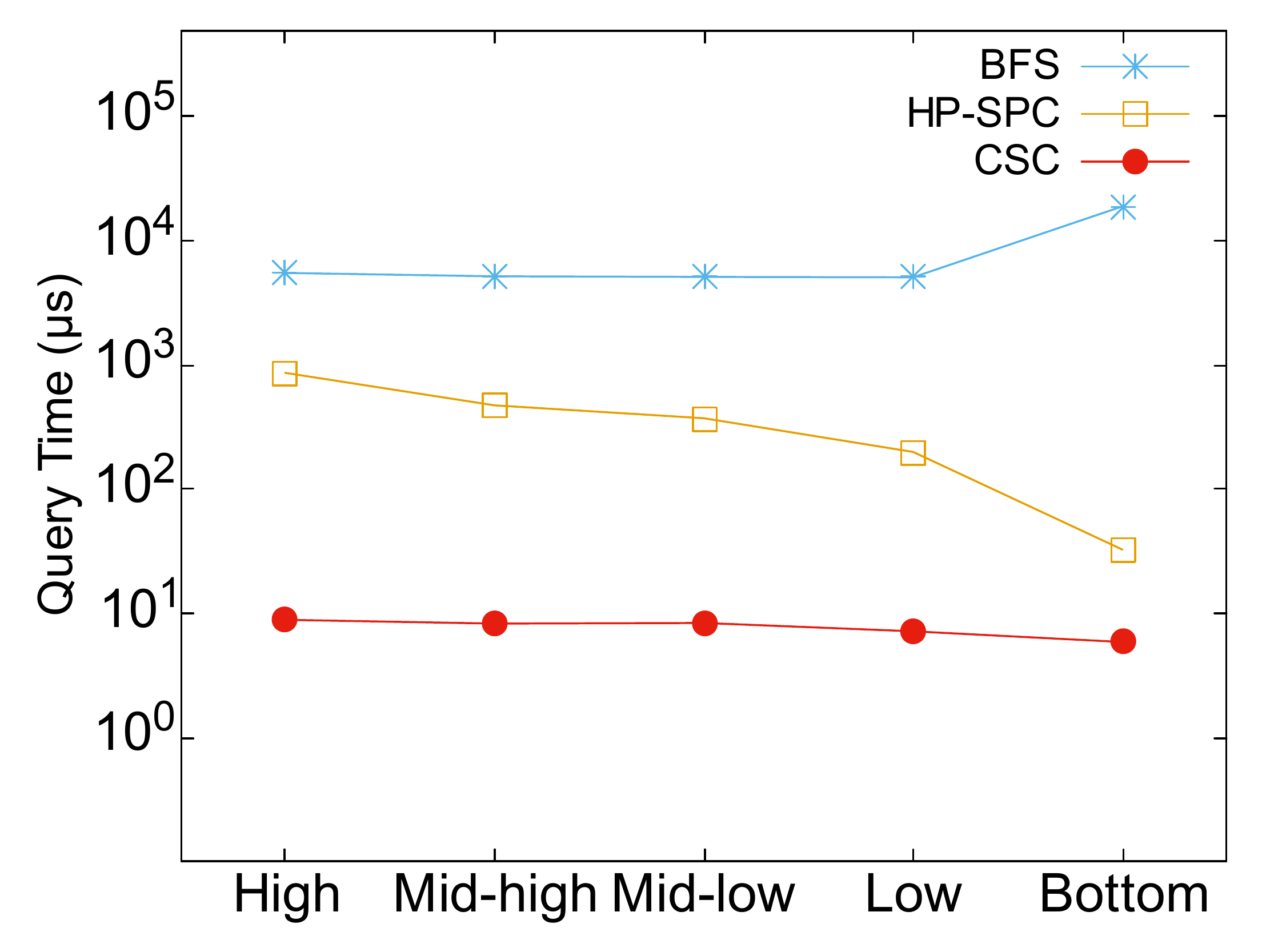}      
		}\hspace{-5mm}
		\subfigure[WSR]{
			\label{q_wsr}
			\centering
			\includegraphics[scale=0.13]{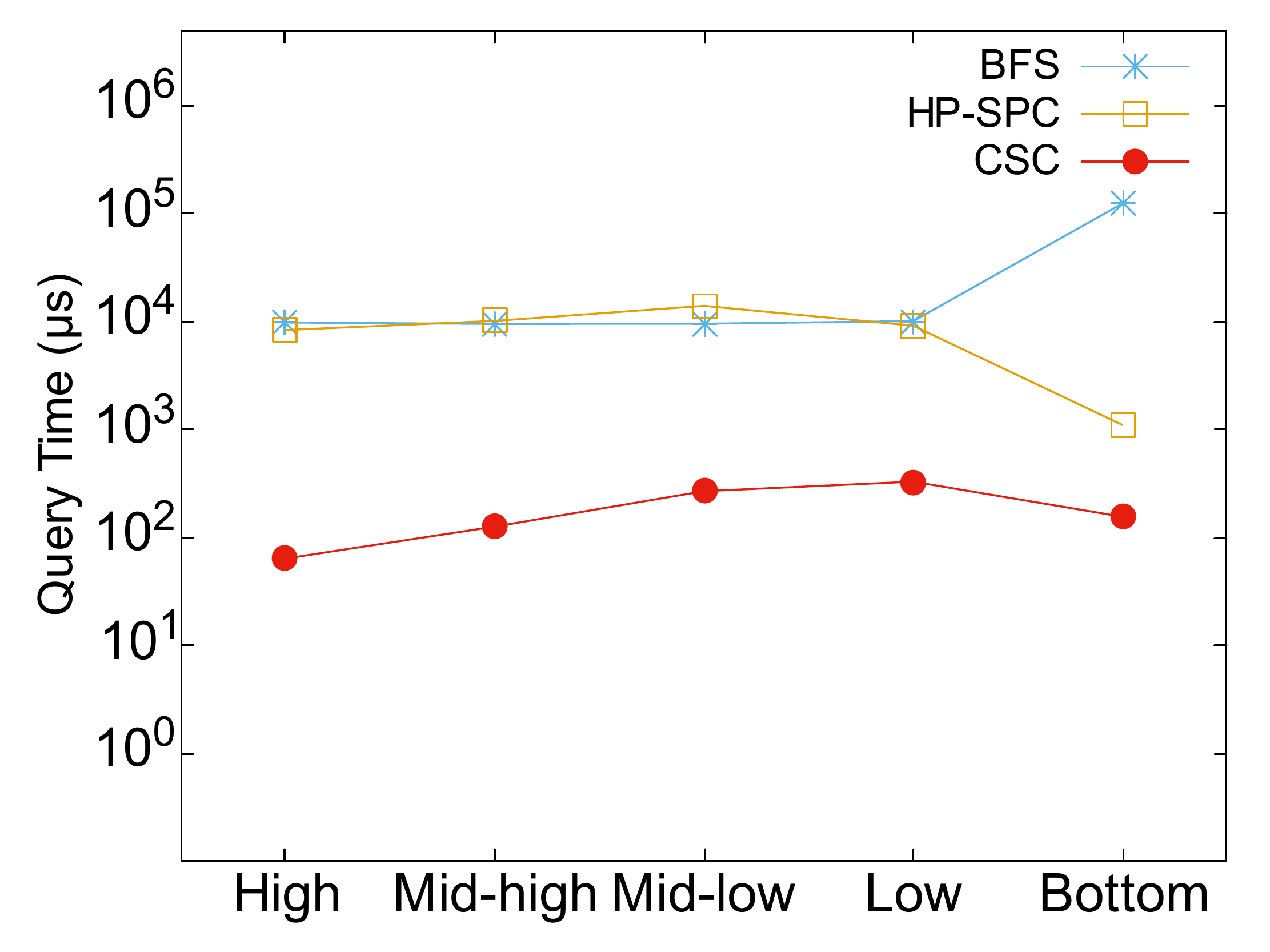}      
		}
	\end{center}
	\vspace{-5mm}
	\caption{Query Times ($\mu s$).} \label{fig:qer_t}
	\vspace{-3mm}
\end{figure}

\subsubsection{Query Time} \autoref{fig:qer_t} illustrates the average query time of each cluster for each graph taken by BFS, HP-SPC, and CSC. There is no evidence that vertex degree has an influence on query time under a na\"ive BFS approach, while its query time is always at a very high value. As shown in each sub-figure, the query time of CSC is more stable than that of HP-SPC and can stay at microsecond level all the time even for the largest graph.
There was not a significant positive linear correlation between the query time of HP-SPC and the degree of query vertices. Because the HP-SPC method involves searching through $v$'$s$ neighbors to answer an SCCnt($v$) query, the query time is $deg_m*t_{P}$. Notably, $deg_m=\min(|nbr_{in}(v)|, |nbr_{out}(v)|)$ and $t_{P}$ is the query time of a single SPCnt evaluation. A higher-degree query vertex has a larger $deg_m$. However, the variety of $t_{P}$ cannot be identified in the interim as we do not know in advance the label sizes of the query vertices, which may be very varied. As a result, the query time does not increase linearly with the degree of query vertex all the time. Nevertheless, evaluating a single SPCnt query is very efficient. Thus, $deg_m$ is a critical component affecting the final query time. If $deg_m$ is large, HP-SPC may take more time to evaluate an SCCnt query. Nevertheless, for those higher-ranked query vertices (in High and Mid-high clusters), HP-SPC takes much more time to evaluate an SCCnt query. Meanwhile, CSC is from 3.11 to 130.1 times faster than HP-SPC among these clusters. Due to the couple-vertex skipping approach, the label size of each vertex is almost the same under CSC as it is under HP-SPC. As a result, the query time of CSC is quite similar to the time required for a single SPCnt evaluation. CSC is no longer required to search through the neighbors. When the query vertex's degree is very large, CSC should have significantly improved query performance. Note that in graphs WKT, WAR and WSR, the average query time for High cluster with CSC is up to two orders of magnitude faster than HP-SPC. For those lower-ranked query vertices (in Mid-low and Low clusters), CSC is still
1.3 to 51.8
times faster. For those vertices with a very small degree, the two algorithms provide similar performance. In summary, these results indicate that CSC is efficient and stable for SCCnt queries with microseconds responses.

\subsection{Experiment Results on Dynamic Graph}

\begin{figure}[htb]
\vspace{-7mm}
	\begin{center}
		\subfigure[Average Update Time]{
			\label{upd_t}
			\centering
			\includegraphics[scale=0.17]{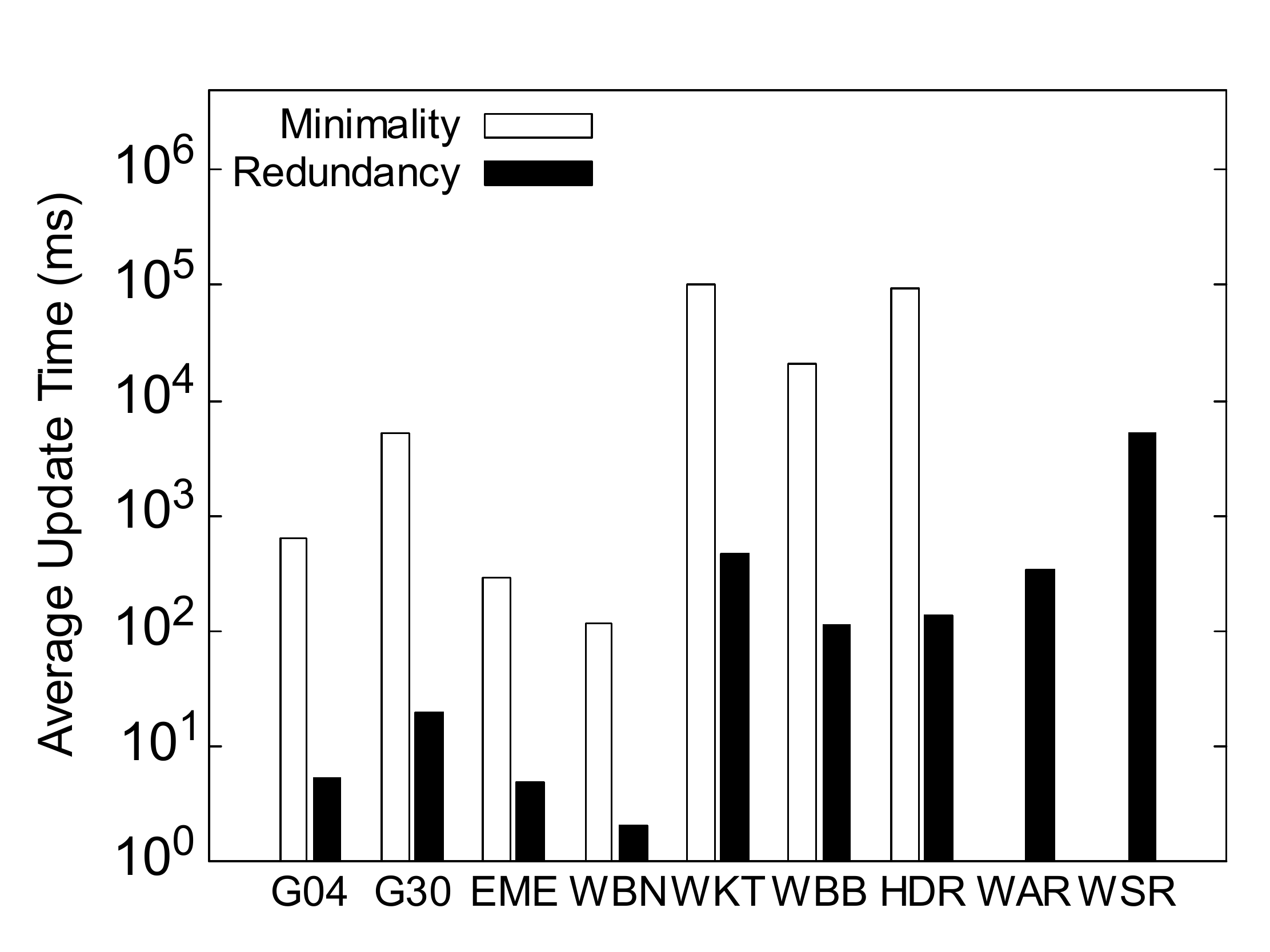}      
		}
		\subfigure[Index Increase Size]{
			\label{upd_s}
			\centering
			\includegraphics[scale=0.17]{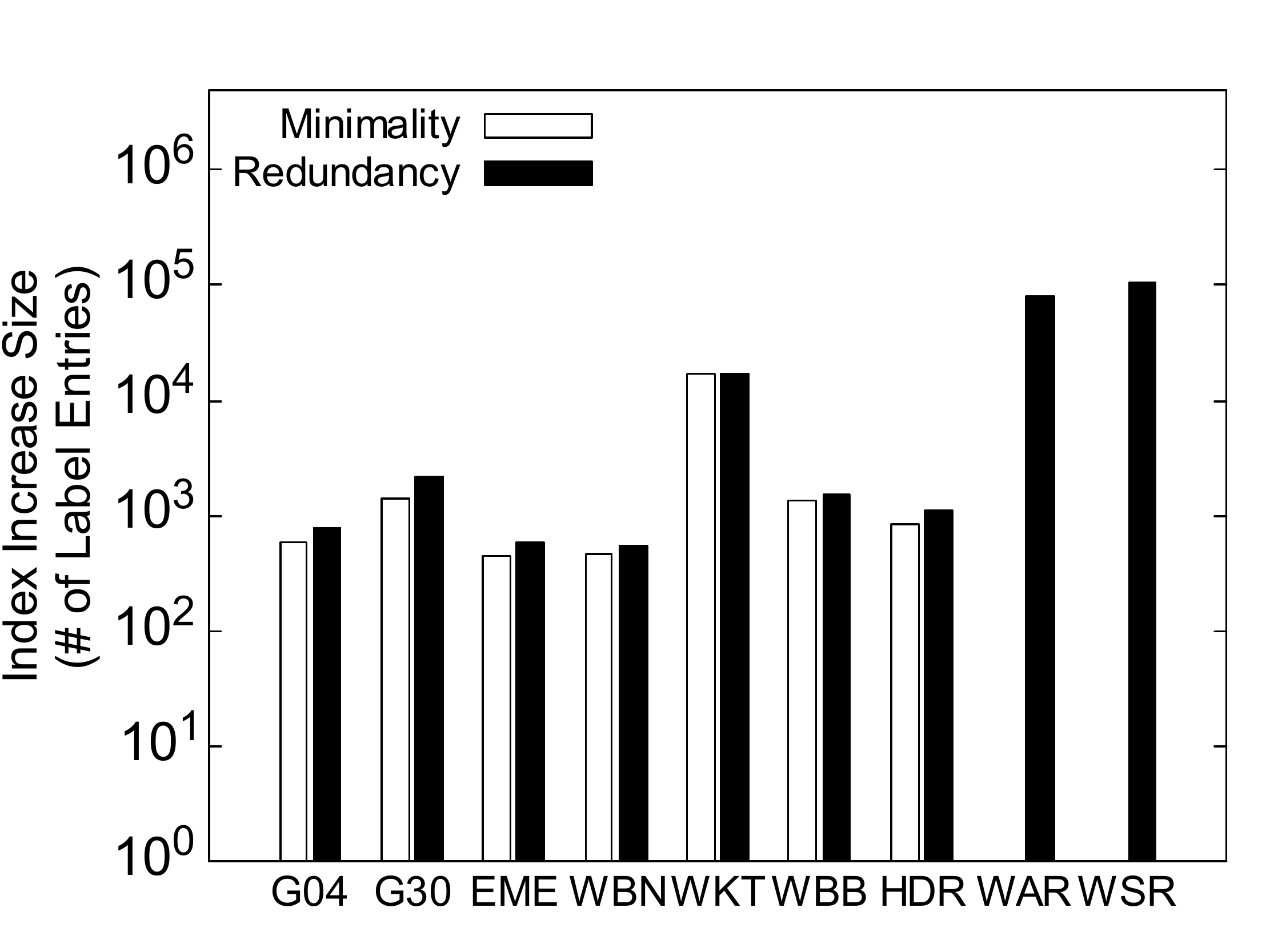}     
		}
	\end{center}
	\vspace{-5mm}
	\caption{Average Update Time (ms) and Index Increase Size (Number of Label Entries) of Incremental Maintenance.} \label{fig:upd_t_s} 
	\vspace{-2mm}
\end{figure}

\subsubsection{Update Time} \autoref{fig:upd_t_s}(a) shows the average incremental update time under both minimality and redundancy strategy. 
The update time was observed to have a similar trend as the index construction time for the graphs with less than ten million edges. The similar process of \textsc{Inc}CNT as that of CSC can explain this. However, \textsc{Inc}CNT is pruned earlier and more often than CSC with different conditions. The update under minimality strategy is 58 to 678 times slower than the redundancy approach. This is because of the large number of redundancy checks. Due to the time cost of minimality strategy, it is omitted for graphs WAR and WSR.
When the graph size becomes larger, the update time doesn't increase too much. It benefits from the small world phenomenon \cite{watts1999networks}. In most cases, the change of an edge will not affect the shortest distance in a large range. Under redundancy strategy, although graphs WKT, WBB, HDR, WAR vary in edge size from millions to tens of millions, their update times are similar, ranging from 112ms to 469ms. For the largest graph WSR, we set the update time limit as 60 seconds (s). If an update cannot be completed within the time limit, it is terminated and the update time is set as 60s. The average update time of WSR under redundancy strategy is 5.2s. Compared with a reconstruction for each edge insertion,  \textsc{Inc}CNT only requires 2.3$\times 10^{-5}$ of the reconstruction time for a single update. The results prove the effectiveness and the scalability of \textsc{Inc}CNT.

\autoref{fig:upd_dec}(a) illustrates the average decremental update time of Graph G04. 
For an edge $e(v,w)$, the degree of this edge is defined as the sum of in-degree of $v$ and out-degree of $w$. These 500 edges are divided into five clusters according to their edge degrees. The High cluster contains those edges with the largest edge degrees. Following the addition of edge degree, a significant increase in the update time was recorded. The average update time of an edge in High cluster is about 2.6s while it is around 0.25s for the edges in Bottom cluster. The total average update time is 0.93s.

\begin{figure}[htb]
	\begin{center}
		\vspace{-5mm}
		\subfigure[Average Update Time]{
			\label{upd_dec_t}
			\centering
			\includegraphics[scale=0.17]{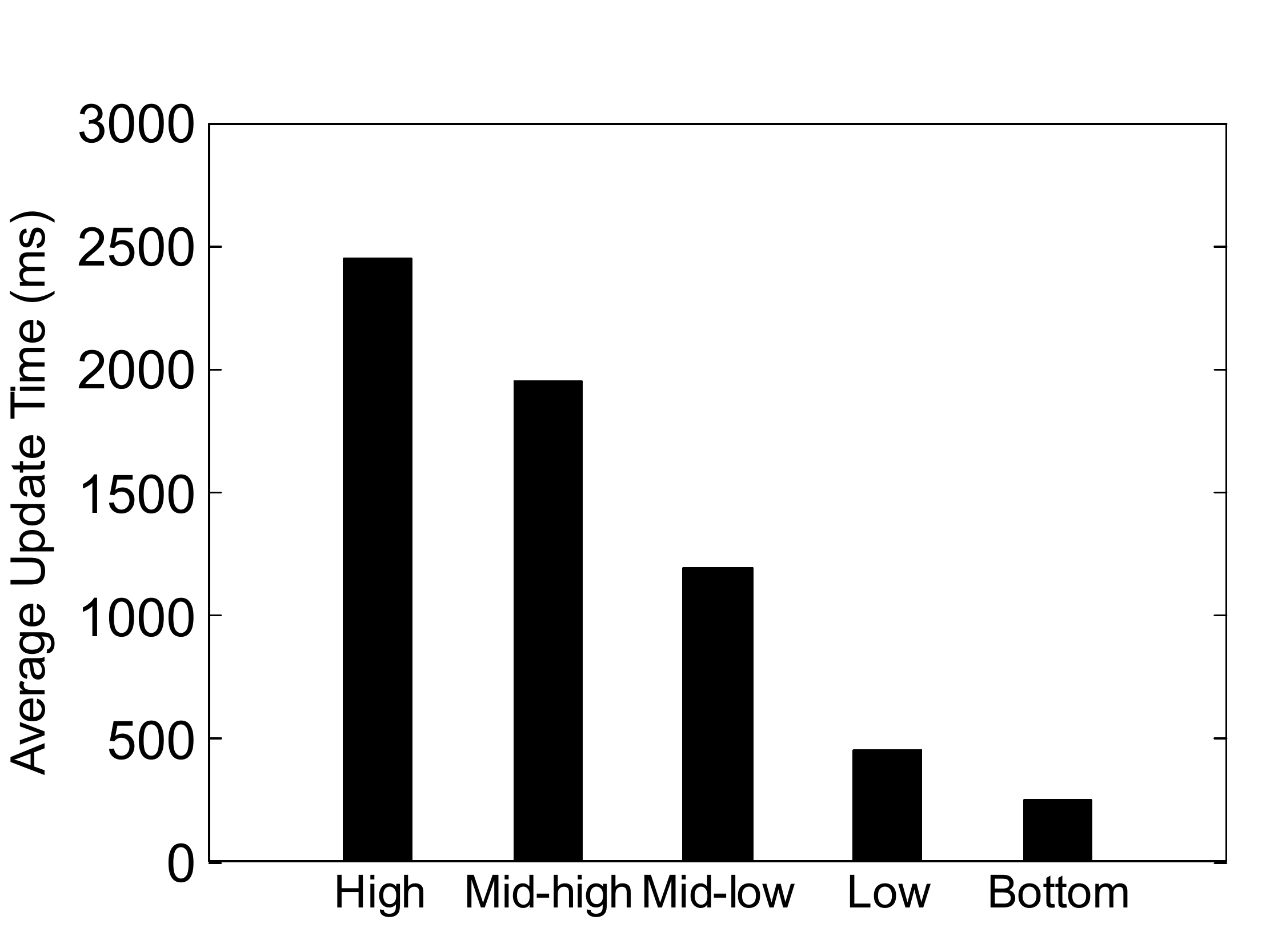}      
		}
		\subfigure[Index Decrease Size]{
			\label{upd_dec_s}
			\centering
			\includegraphics[scale=0.17]{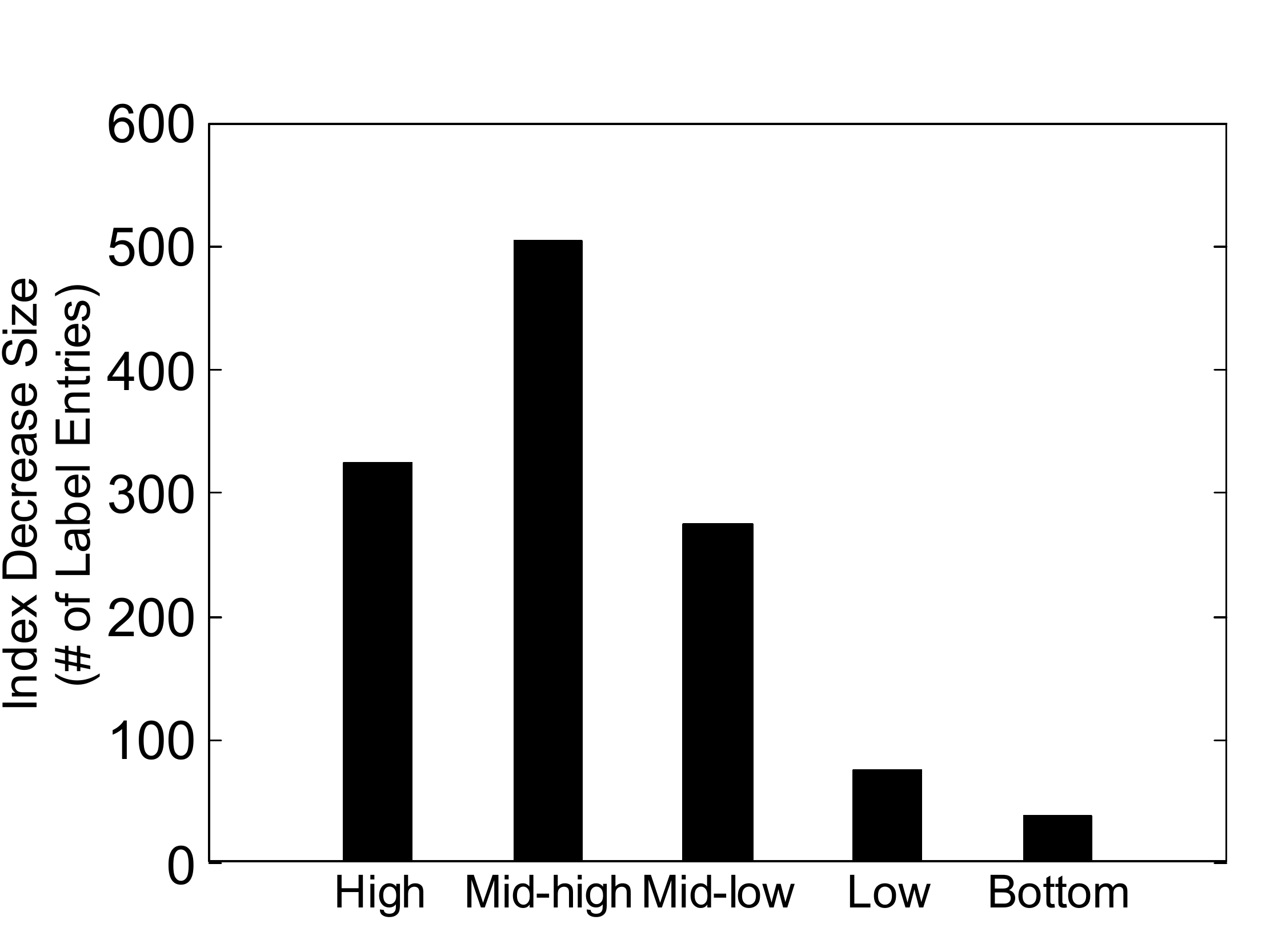}     
		}
	\end{center}
	\vspace{-5mm}
	\caption{Average Update Time (ms) and Index Decrease Size (Number of Label Entries) of Decremental Maintenance (Graph G04).} \label{fig:upd_dec} 
	\vspace{-3mm}
\end{figure}

\subsubsection{Index Size} \autoref{fig:upd_t_s}(b) provides the average increase size in label entry of incremental update. The number of newly added label entries for each update is from 443 to 1447 for the first seven graphs (except for WKT) under the minimality approach. As each label entry takes up 64 bits. The increased index size for these graphs is from 3.5KB to 11.5KB, and similarly, it is 135KB for graph WKT. Among most of the graphs, the increased index size accounts for less than 0.01\% of the original index. The results show that the difference in the increased label size between minimality and redundancy approaches is minor. Compared with the significant difference in their update time, the redundancy approach is considered ideal. For graphs WAR and WSR, their index increase sizes under redundancy only account for 5.7$\times 10^{-5}$ and 3.4$\times 10^{-6}$ of the size of their original indexes. Overall, the results illustrate that the label size under \textsc{Inc}SCT grows slowly.

From the data in \autoref{fig:upd_dec}(b), it is apparent that the deletion of high-degree edges may cause more label entries deleted. A large number of unaffected label entries are removed and recovered later.

\subsection{Case Study}
\label{sect:casestudy}
The real network dataset MAHINDAS \cite{nr} used for the case study is in the Economic Networks category. In this graph, vertices represent accounts, and edges represent transactions between them. The vertex size denotes the shortest cycle counting. The bigger a vertex, the more the shortest cycles pass through it. The vertex color denotes the shortest cycle length. The darker a vertex, the longer the shortest cycles. Figure~\ref{fig:casestudy} shows a subgraph centering at vertex 169, which all the shortest cycles through vertex 169 are listed. The figure matches the subgraph model in Figure~\ref{fig:application}. Vertices $281$, $241$, $169$, $1159$, and $888$ are filtered out to be suspicious accounts for anomaly detection, e.g., money laundering. Based on these candidates derived from our algorithm, we could further analyse whether there is an exact case for the financial crime by enumerating such cycles or paths~\cite{qiu18,peng19}

\begin{figure}[htb]
    \centering
    \vspace{-3mm}
    \includegraphics[scale=0.13]{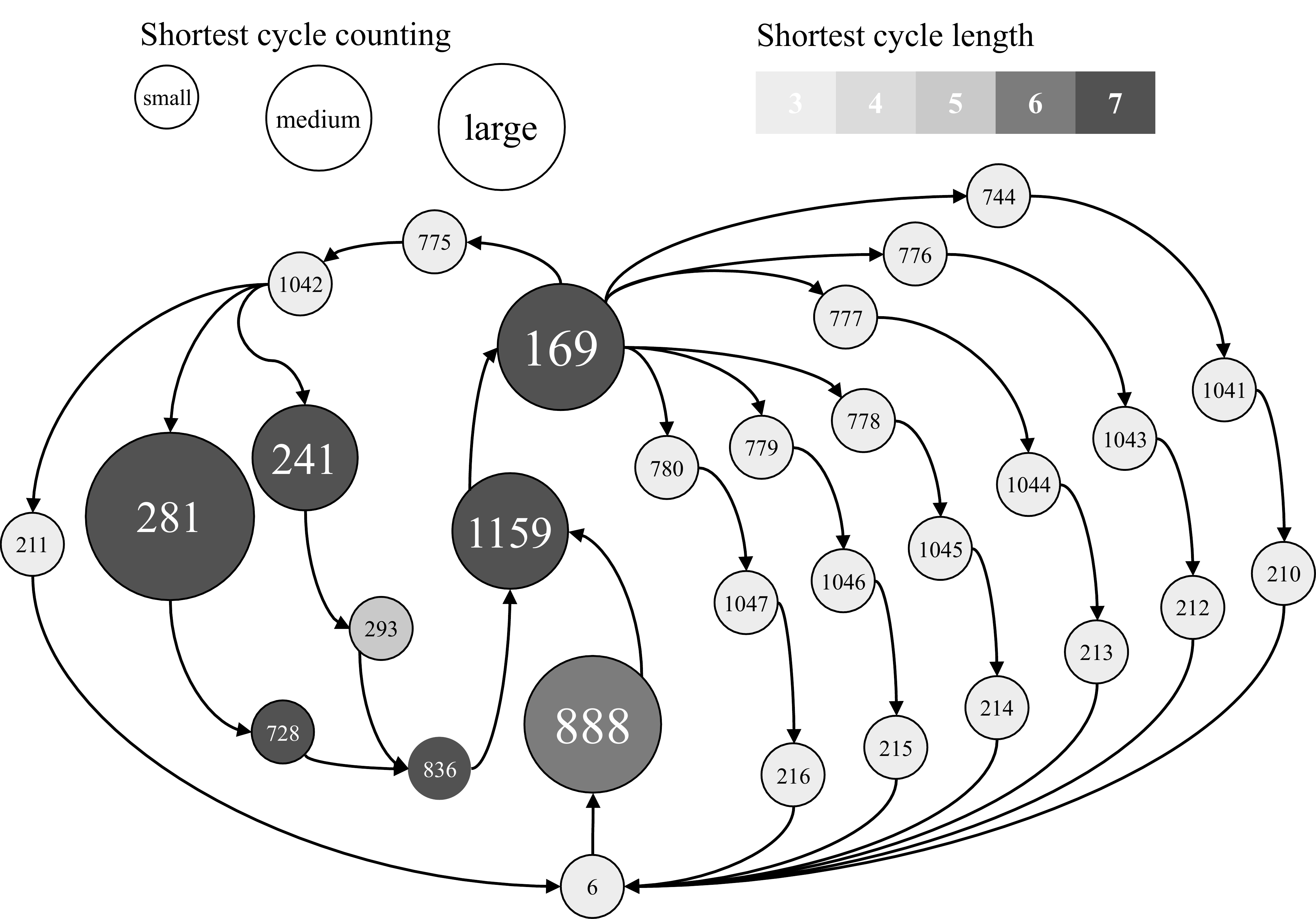}
    \vspace{-3mm}
    \caption{A subgraph centering at vertex $169$ with suspicious criminal nodes from a real economic network MAHINDAS.}
    \vspace{-3mm}
    \label{fig:casestudy}
\end{figure}
\section{Related Work}
\label{sec:related}

\subsubsection*{Counting Patterns}
\eat{Counting the number of specific patterns is a fundamental problem in graph theory.} Some counting problems are \#P-complete~\cite{valiant79}. For instance, counting the number of all the simple cycles through a given vertex or the number of simple paths between two vertices. 
The problem of counting the simple paths or cycles with a length constraint of $l$, parameterized by $l$, is considered to be \#W[1]-complete~\cite{flum04}. Many works proposed various algorithms to answer such problems. ~\cite{bezakova18} could answer the number of shortest paths in $O(\sqrt{n})$ where $n$ is the number of vertices, but the graph must be planar and acyclic. 
~\cite{giscard19} can also answer the number of shortest cycles, but the algorithm is too slow even when the graph is not so large. Many other similar works share the shortcomings like the graph type constraint or the undesirable query time.

\subsubsection*{Enumerating Patterns}
Enumerating the number of specific patterns is another method to produce the count. However, it is supposed to take much more time than direct counting \cite{giscard19}. For instance, when counting shortest cycles, one of the primary reasons is that the shortest cycle length is unknown in advance, and obtaining it requires the employment of a BFS-like method whose running time is already longer than the query time of hub labeling. Another reason is that enumeration requires finding all the vertices along with the cycles, which is unnecessary for the counting problem. With the help of a hot-point based index, \cite{qiu18} can output the newly-generated constraint cycles upon the arrival of new edges. \cite{peng19} is shown to have better time performance than \cite{qiu18} on enumerating constraint paths and cycles.


\eatYou{
\subsubsection*{2-hop Labeling}
The state-of-the-art technique to build a minimal shortest distance labeling is pruned landmark labeling~\cite{akiba13}. Vertices are ordered and processed in descending order. The current vertex is pushed as a hub to create label entries with tentative distance by BFS. The BFS is pruned when the existing partial index can return a shorter distance than the tentative one. \rev{Such labeling only store the canonical labels which is insufficient to support the exact counting problems.}
}

\subsubsection*{Dynamic Maintenance for 2-hop Labeling} To adapt to the dynamic update of the network, some works~\cite{akiba14,d2019fully,qin17} proposed dynamic algorithms to deal with edge insertion and deletion. For the edge insertion, a partial BFS for each affected hub is started from one of the inserted-edge endpoints and creates a label when finding the tentative distance is shorter than the query answer from the previous index. For the edge deletion, the state-of-the-art solution is to find the affected hubs followed by removing out-of-date labels, then recover labels. The time cost sometimes is acceptable compared with reconstructing the whole index. But it is much slower than that of incremental update. 

\section{Conclusion}
\label{sect:conclusion}
We investigate the problem of counting the number of shortest cycles through a vertex. A 2-hop labeling based algorithm is proposed for shortest cycle counting query. The comprehensive experiments demonstrate that our algorithms could achieve up to two orders of magnitude faster than state-of-the-art. We also present an update algorithm to maintain the index for edge updates. Our comprehensive evaluations verify the effectiveness and efficiency of the algorithms.

\section*{Acknowledgment}
Xuemin Lin is supported by ARC DP200101338. Wenjie Zhang is supported by ARC FT210100303 and DP200101116. Ying Zhang is supported by ARC DP210101393.

\newpage
\balance
\bibliographystyle{IEEEtran}
\bibliography{main}{}

\begin{thebibliography}{10}
\providecommand{\url}[1]{#1}
\csname url@samestyle\endcsname
\providecommand{\newblock}{\relax}
\providecommand{\bibinfo}[2]{#2}
\providecommand{\BIBentrySTDinterwordspacing}{\spaceskip=0pt\relax}
\providecommand{\BIBentryALTinterwordstretchfactor}{4}
\providecommand{\BIBentryALTinterwordspacing}{\spaceskip=\fontdimen2\font plus
\BIBentryALTinterwordstretchfactor\fontdimen3\font minus
  \fontdimen4\font\relax}
\providecommand{\BIBforeignlanguage}[2]{{%
\expandafter\ifx\csname l@#1\endcsname\relax
\typeout{** WARNING: IEEEtran.bst: No hyphenation pattern has been}%
\typeout{** loaded for the language `#1'. Using the pattern for}%
\typeout{** the default language instead.}%
\else
\language=\csname l@#1\endcsname
\fi
#2}}
\providecommand{\BIBdecl}{\relax}
\BIBdecl

\bibitem{peng2018efficient}
Y.~Peng, Y.~Zhang, W.~Zhang, X.~Lin, and L.~Qin, ``Efficient probabilistic
  k-core computation on uncertain graphs,'' in \emph{2018 IEEE 34th
  International Conference on Data Engineering (ICDE)}.\hskip 1em plus 0.5em
  minus 0.4em\relax IEEE, 2018, pp. 1192--1203.

\bibitem{jin2021fast}
X.~Jin, Z.~Yang, X.~Lin, S.~Yang, L.~Qin, and Y.~Peng, ``Fast: Fpga-based
  subgraph matching on massive graphs,'' \emph{arXiv preprint
  arXiv:2102.10768}, 2021.

\bibitem{peng2021answering}
Y.~Peng, X.~Lin, Y.~Zhang, W.~Zhang, and L.~Qin, ``Answering reachability and
  k-reach queries on large graphs with label-constraints,'' \emph{The VLDB
  Journal}, pp. 1--25, 2021.

\bibitem{yuan2022efficient}
Z.~Yuan, Y.~Peng, P.~Cheng, L.~Han, X.~Lin, L.~Chen, and W.~Zhang, ``Efficient
  k-clique listing with set intersection speedup,'' in \emph{{ICDE}}.\hskip 1em
  plus 0.5em minus 0.4em\relax {IEEE}, 2022.

\bibitem{peng2022finding}
Y.~Peng, S.~Bian, R.~Li, S.~Wang, and J.~X. Yu, ``Finding top-r influential
  communities under aggregation function,'' in \emph{{ICDE}}.\hskip 1em plus
  0.5em minus 0.4em\relax {IEEE}, 2022.

\bibitem{chen2022answering}
X.~Chen, Y.~Peng, S.~Wang, and J.~X. Yu, ``Dlcr : Efficient indexing for
  label-constrained reachability queries on large dynamic graphs,''
  \emph{Proceedings of the VLDB Endowment}, 2022.

\bibitem{peng2020answering}
Y.~Peng, Y.~Zhang, X.~Lin, L.~Qin, and W.~Zhang, ``Answering billion-scale
  label-constrained reachability queries within microsecond,''
  \emph{Proceedings of the VLDB Endowment}, vol.~13, no.~6, pp. 812--825, 2020.

\bibitem{lai2021pefp}
Z.~Lai, Y.~Peng, S.~Yang, X.~Lin, and W.~Zhang, ``Pefp: Efficient k-hop
  constrained s-t simple path enumeration on fpga,'' in \emph{{ICDE}}.\hskip
  1em plus 0.5em minus 0.4em\relax {IEEE}, 2021.

\bibitem{peng2021efficient}
Y.~Peng, X.~Lin, Y.~Zhang, W.~Zhang, L.~Qin, and J.~Zhou, ``Efficient
  hop-constrained s-t simple path enumeration,'' \emph{The VLDB Journal}, pp.
  1--24, 2021.

\bibitem{peng2021dlq}
Y.~Peng, W.~Zhao, W.~Zhang, X.~Lin, and Y.~Zhang, ``Dlq: A system for
  label-constrained reachability queries on dynamic graphs,'' in
  \emph{Proceedings of the 230th ACM International Conference on Information \&
  Knowledge Management}, 2021.

\bibitem{qiu18}
X.~Qiu, W.~Cen, Z.~Qian, Y.~Peng, Y.~Zhang, X.~Lin, and J.~Zhou, ``Real-time
  constrained cycle detection in large dynamic graphs,'' \emph{Proceedings of
  the VLDB Endowment}, vol.~11, no.~12, pp. 1876--1888, 2018.

\bibitem{weinblatt72}
H.~Weinblatt, ``A new search algorithm for finding the simple cycles of a
  finite directed graph,'' \emph{Journal of the ACM (JACM)}, vol.~19, no.~1,
  pp. 43--56, 1972.

\bibitem{thomassen83}
C.~Thomassen, ``Girth in graphs,'' \emph{Journal of Combinatorial Theory,
  Series B}, vol.~35, no.~2, pp. 129--141, 1983.

\bibitem{gimbel97}
J.~Gimbel and C.~Thomassen, ``Coloring graphs with fixed genus and girth,''
  \emph{Transactions of the American Mathematical Society}, vol. 349, no.~11,
  pp. 4555--4564, 1997.

\bibitem{Yuster11}
\BIBentryALTinterwordspacing
R.~Yuster, ``A shortest cycle for each vertex of a graph,'' \emph{Information
  Processing Letters}, vol. 111, no.~21, pp. 1057 -- 1061, 2011. [Online].
  Available:
  \url{http://www.sciencedirect.com/science/article/pii/S0020019011002195}
\BIBentrySTDinterwordspacing

\bibitem{bonneau2017distribution}
H.~Bonneau, A.~Hassid, O.~Biham, R.~K{\"u}hn, and E.~Katzav, ``Distribution of
  shortest cycle lengths in random networks,'' \emph{Physical Review E},
  vol.~96, no.~6, p. 062307, 2017.

\bibitem{vladimirov2012shortest}
N.~Vladimirov, Y.~Tu, and R.~D. Traub, ``Shortest loops are pacemakers in
  random networks of electrically coupled axons,'' \emph{Frontiers in
  Computational Neuroscience}, vol.~6, p.~17, 2012.

\bibitem{gleiss2001relevant}
P.~M. Gleiss, P.~F. Stadler, A.~Wagner, and D.~A. Fell, ``Relevant cycles in
  chemical reaction networks,'' \emph{Advances in complex systems}, vol.~4, no.
  02n03, pp. 207--226, 2001.

\bibitem{klamt2009computing}
S.~Klamt and A.~von Kamp, ``Computing paths and cycles in biological
  interaction graphs,'' \emph{BMC bioinformatics}, vol.~10, no.~1, pp. 1--11,
  2009.

\bibitem{zanudo2017structure}
J.~G.~T. Za{\~n}udo, G.~Yang, and R.~Albert, ``Structure-based control of
  complex networks with nonlinear dynamics,'' \emph{Proceedings of the National
  Academy of Sciences}, vol. 114, no.~28, pp. 7234--7239, 2017.

\bibitem{barrat2008dynamical}
A.~Barrat, M.~Barthelemy, and A.~Vespignani, \emph{Dynamical processes on
  complex networks}.\hskip 1em plus 0.5em minus 0.4em\relax Cambridge
  university press, 2008.

\bibitem{yue07}
D.~Yue, X.~Wu, Y.~Wang, Y.~Li, and C.-H. Chu, ``A review of data mining-based
  financial fraud detection research,'' in \emph{2007 International Conference
  on Wireless Communications, Networking and Mobile Computing}.\hskip 1em plus
  0.5em minus 0.4em\relax Ieee, 2007, pp. 5519--5522.

\bibitem{peng19}
Y.~Peng, Y.~Zhang, X.~Lin, W.~Zhang, L.~Qin, and J.~Zhou, ``Towards bridging
  theory and practice: hop-constrained st simple path enumeration,''
  \emph{Proceedings of the VLDB Endowment}, vol.~13, no.~4, pp. 463--476, 2019.

\bibitem{vsubelj2011expert}
L.~{\v{S}}ubelj, {\v{S}}.~Furlan, and M.~Bajec, ``An expert system for
  detecting automobile insurance fraud using social network analysis,''
  \emph{Expert Systems with Applications}, vol.~38, no.~1, pp. 1039--1052,
  2011.

\bibitem{bodaghi2018automobile}
A.~Bodaghi and B.~Teimourpour, ``Automobile insurance fraud detection using
  social network analysis,'' in \emph{Applications of Data Management and
  Analysis}.\hskip 1em plus 0.5em minus 0.4em\relax Springer, 2018, pp. 11--16.

\bibitem{hajdu2020temporal}
L.~Hajdu and M.~Kr{\'e}sz, ``Temporal network analytics for fraud detection in
  the banking sector,'' in \emph{ADBIS, TPDL and EDA 2020 Common Workshops and
  Doctoral Consortium}.\hskip 1em plus 0.5em minus 0.4em\relax Springer, 2020,
  pp. 145--157.

\bibitem{schlosser03}
M.~Schlosser, T.~Condie, and S.~Kamvar, ``Simulating a file-sharing p2p
  network,'' in \emph{1st Workshop on Semantics in Grid and P2P
  Networks}.\hskip 1em plus 0.5em minus 0.4em\relax Stanford InfoLab, 2003.

\bibitem{ohta2004index}
C.~Ohta, Z.~Ge, Y.~Guo, and J.~Kurose, ``Index-server optimization for p2p file
  sharing in mobile ad hoc networks,'' in \emph{IEEE Global Telecommunications
  Conference, 2004. GLOBECOM'04.}, vol.~2.\hskip 1em plus 0.5em minus
  0.4em\relax IEEE, 2004, pp. 960--966.

\bibitem{zhang20}
Y.~Zhang and J.~X. Yu, ``Hub labeling for shortest path counting,'' in
  \emph{Proceedings of the 2020 ACM SIGMOD International Conference on
  Management of Data}, 2020, pp. 1813--1828.

\bibitem{akiba14}
T.~Akiba, Y.~Iwata, and Y.~Yoshida, ``Dynamic and historical shortest-path
  distance queries on large evolving networks by pruned landmark labeling,'' in
  \emph{Proceedings of the 23rd international conference on World wide web},
  2014, pp. 237--248.

\bibitem{watts1999networks}
D.~J. Watts, ``Networks, dynamics, and the small-world phenomenon,''
  \emph{American Journal of sociology}, vol. 105, no.~2, pp. 493--527, 1999.

\bibitem{nr}
\BIBentryALTinterwordspacing
R.~A. Rossi and N.~K. Ahmed, ``The network data repository with interactive
  graph analytics and visualization,'' in \emph{AAAI}, 2015. [Online].
  Available: \url{http://networkrepository.com}
\BIBentrySTDinterwordspacing

\bibitem{valiant79}
L.~G. Valiant, ``The complexity of enumeration and reliability problems,''
  \emph{SIAM Journal on Computing}, vol.~8, no.~3, pp. 410--421, 1979.

\bibitem{flum04}
J.~Flum and M.~Grohe, ``The parameterized complexity of counting problems,''
  \emph{SIAM Journal on Computing}, vol.~33, no.~4, pp. 892--922, 2004.

\bibitem{bezakova18}
I.~Bez{\'a}kov{\'a} and A.~Searns, ``On counting oracles for path problems,''
  in \emph{29th International Symposium on Algorithms and Computation (ISAAC
  2018)}.\hskip 1em plus 0.5em minus 0.4em\relax Schloss
  Dagstuhl-Leibniz-Zentrum fuer Informatik, 2018, pp. 56:1--56:12.

\bibitem{giscard19}
P.-L. Giscard, N.~Kriege, and R.~C. Wilson, ``A general purpose algorithm for
  counting simple cycles and simple paths of any length,'' \emph{Algorithmica},
  vol.~81, no.~7, pp. 2716--2737, 2019.

\bibitem{d2019fully}
G.~D'angelo, M.~D'emidio, and D.~Frigioni, ``Fully dynamic 2-hop cover
  labeling,'' \emph{Journal of Experimental Algorithmics (JEA)}, vol.~24, pp.
  1--36, 2019.

\bibitem{qin17}
Y.~Qin, Q.~Z. Sheng, N.~J. Falkner, L.~Yao, and S.~Parkinson, ``Efficient
  computation of distance labeling for decremental updates in large dynamic
  graphs,'' \emph{World Wide Web}, vol.~20, no.~5, pp. 915--937, 2017.

\end{thebibliography}

\end{document}